\documentclass[12pt]{article} %
\usepackage{amsmath,amssymb,amsthm}
\textwidth=14cm \textheight=20cm %
\numberwithin{equation}{section} %
\theoremstyle{plain}
    \newtheorem{thm}{\hspace{\parindent}{\sc Theorem}}[section] %
    \newtheorem{pro}[thm]{\hspace{\parindent}Proposition}
    
    \newtheorem{lem}[thm]{\hspace{\parindent}Lemma}
\theoremstyle{remark} %
    \newtheorem{rem}{\hspace{\parindent}Remark}[section] %
\newcommand{\alk}{a_{\mathrm{l}k}}
\newcommand{\alki}{a_{\mathrm{l}k}^{(\mathrm{i})}}
\newcommand{\alkone}{a_{\mathrm{l}k}^{(\mathrm{1})}}
\newcommand{\alktwo}{a_{\mathrm{l}k}^{(\mathrm{2})}}
\newcommand{\aonek}{a_{\mathrm{1}k}}

\newcommand{\atwok}{a_{\mathrm{2}k}}

\newcommand{\bDelta}{{\bold \Delta}}
\newcommand{\Cspace}{C^{\infty}}
\newcommand{\etavec}{\overrightarrow{\eta}}
\newcommand{\qvecdelta}{\overrightarrow{q}_{\Delta}}

\newcommand{\phiki}{\phi^{(\mathrm{i})}_k}
\newcommand{\rhoki}{\rho^{(\mathrm{i})}_k}

\newcommand{\Cts}{{\cal C}(t,s)}
\newcommand{\Cdelta}{{\cal C}_{\Delta}}

\def\hbar{{\mbox{\raisebox{-2pt}{$\mathchar'26$}}\mkern-9.5muh}}

\newcommand{\eonevec}{\overrightarrow{e_1}}
\newcommand{\etwovec}{\overrightarrow{e_2}}
\newcommand{\elvec}{\overrightarrow{e_j}}
\newcommand{\qts}{\overrightarrow{q}^{t,s}_{x,y}}
\newcommand{\qvec}{\overrightarrow{q}}

\newcommand{\Sspace}{{\cal S}}

\newcommand{\xvec}{\overrightarrow{x}}

\newcommand{\yvec}{\overrightarrow{y}}

\newcommand{\xivec}{\overrightarrow{\xi}}

\pagestyle{plain} %
\begin{document}
\title{Mathematical Remarks on the Feynman Path Integral
for  Nonrelativistic Quantum Electrodynamics}
\author{Wataru Ichinose\thanks{Research partially supported by Grant-in-Aid for Scientific Research No.16540145 and
No.19540175,  Ministry of Education, Culture, Sports, Science and 
Technology, Japanese
Government.}}
\date{}
\maketitle
\begin{quote}
{\small Department of Mathematical Science, Shinshu University,
Matsumoto 390-8621, Japan. \\
   E-mail: ichinose@math.shinshu-u.ac.jp}%
\end{quote}
\vspace{0.5cm}
{\small  {\bf Abstract:}   The Feynman path 
integral for nonrelativistic quantum electrodynamics is studied  mathematically of a 
standard model in physics, where the electromagnetic potential is assumed to be periodic 
with respect to a large box and quantized thorough its Fourier coefficients.
In physics, the Feynman path integral for nonrelativistic quantum 
electrodynamics is defined very formally. For example, as is often seen, even
independent variables are not so clear.
First, the Feynman path integral is defined rigorously under the constraints 
familiar 
in physics.  Secondly, the Feynman path integral  is also defined rigorously without the 
constraints, which is stated in Feynman and Hibbs' 
book without any
comments.  So, our definition may be completely new.  Thirdly,
 the vacuum and the state of photons of momentums and  polarization 
states are expressed by means of  concrete functions  of variables consisting of the Fourier coefficients of the electromagnetic potential. Our results above 
have many 
applications as is seen in Feynman and Hibbs' book, though the applications are not rigorous so far.  It is also proved rigorously by means of the distribution theory that the Coulomb potentials between 
charged particles naturally appear in the Feynman path integral above. As is 
well known, this shows that photons give the Coulomb forth.
 }
\section{Introduction}
A number of mathematical results on the Feynman path integral for
  quantum mechanics have been obtained.  On the
other hand, the author doesn't know any  mathematical results on the
Feynman
path integral for  quantum
electrodynamics (cf. \cite{Johnson-Lapidus}), written as QED from now on. \par
   A functional integral representation for a nonrelativistic QED model in 
   \cite{Pauli-Fierz} with
imaginary time
     was obtained by Hiroshima \cite{Hiroshima}  on the Fock 
spaces  in terms of the 
probabilistic method.
It has been well known that  the only translation invariant measure on a 
separable infinite
dimensional Banach space is the identically zero measure (cf. Theorem
4 in \S 5 of Chapter 4 in \cite{Gelfand-Vilenkin}). The measure
defining the Feynman path integral is expected to be translation invariant 
(cf. (7-29) in
\cite{Feynman-Hibbs}). So, there exist no measures defining the Feynman 
path integral.
\par
In the present paper the Feynman path integral for  nonrelativistic QED is 
studied rigorously of a 
standard model in physics (cf.
\cite{Dirac,Fermi,Feynman 50,Feynman-Hibbs,Pauli-Fierz,Sakurai}), where
the electromagnetic potential is assumed to be periodic with respect to a large box in $R^3$ and  quantized thorough
its Fourier coefficients.  In physics, the Feynman path integral for 
nonrelativistic QED is defined very formally.  For example, as is often seen 
(cf. \cite{Feynman-Hibbs}),
even independent variables are not so clear. 
     Our aim in the present paper is to  give  the mathematical definition of
the
Feynman path integral for  nonrelativistic QED of a standard model in physics.
We note that in the present paper, regrettably, 
   the Fourier coefficients with large wave numbers
  need to be
arbitrarily  cut off and we don't take the limit of  a box to $R^3$.
We also note that our nonrelativistic QED model is completely
different from  nonrelativistic QED models  on the Fock spaces (cf.
\cite{Gustafson-Sigal,Hiroshima,Spohn}). 
\par
    First,  the mathematical definition of the Feynman path integral
for
nonrelativistic QED is given under the constraints.  These
constraints  are  well known (cf.
(9-17) in \cite{Feynman-Hibbs}, (A-7) in
   \cite{Sakurai}, (13.10) in  \cite{Spohn} and (7.38) in
\cite{Swanson}). 
\par
Secondly, without the constraints  we give the
mathematical definition of the Feynman path
integral for
nonrelativistic QED, which has been  given by (9-98) in
\cite{Feynman-Hibbs} without any comments. Our method of giving the Feynman 
path
integral for
nonrelativistic QED without the constraints
  is like one
used in \cite{Ichinose 2000} for giving the phase space Feynman path integral.
   The author emphasize that any  definitions of
(9-98) in
\cite{Feynman-Hibbs} have not been given.  So our result may be completely new.
We  note
that our Feynman path integral without the constraints
   is  proved to be equal with the
Feynman path integral under the constraints  before taking the limit of the
discretization parameter.
\par
Thirdly, the vacuum and the states of photons of momentums and 
polarization states are expressed
by means of  concrete functions  of variables consisting of the Fourier 
coefficients of the
electromagnetic potential. In \cite{Feynman-Hibbs} only the vacuum and the
state of a photon with a momentum and a polarization state are expressed by 
means of the
concrete functions, which our functions are equal to. Generally, in physics 
the vacuum and the
states of photons with momentums and polarization states are not considered 
concretely but
considered abstractly (cf.
\cite{Sakurai,Swanson}). To write down the state of photons concretely, we 
introduce creation
operators and annihilation operators, which can be written concretely as 
partial differential
operators of the first order.
\par
   The results stated above have many applications as is seen in the chapter 9 
   of \cite{Feynman-Hibbs}, though the applications are not rigorous so far.
\par
Fourthly, we show by means of the distribution theory that 
the Coulomb potentials between charged particles 
appear when the periods
of the Fourier series tend to infinity and the cut-off of the Fourier 
coefficients is gotten
out.  This result, which shows that photons give the Coulomb forth, is well known in physics (cf. 
\cite{Fermi,Feynman-Hibbs}).  In the present
paper we give the rigorous proof.
\par
    The proof of giving a mathematical definition of the Feynman path 
integral for
nonrelativistic QED under or without the constraints is obtained by means of 
a somewhat
delicate study on  oscillatory integral operators, the abstract 
Accoli-Arzel\`{a} theorem on
the weighted Sobolev spaces and the uniqueness to the initial problem for 
the Schr\"odinger type
equations as in
\cite{Ichinose 1997,Ichinose 1999,Ichinose 2000,Ichinose 2003}.
\par
The proof of expressing the vacuum and the states of photons with momentums 
and polarization
states by means of  concrete functions is as follows.  We take 
$\mathrm{e}^{ik\cdot x}\
(k,x \in R^3)$ as the Fourier functions.  Then, annihilation operators are 
defined for the real
parts and the imaginary parts of the Fourier coefficients, respectively. 
Combining the annihilation 
operators for the real
parts and the imaginary parts, we can define 
the annihilation
operators of photons. The creation operators are defined as the adjoint 
operators of the
annihilation operators.
\par
The proof of the appearance of the Coulomb potentials between charged 
particles is given by
proving the convergence theorem for the Riemann sum of a unbounded function 
as the
discretization parameter tends to zero, which will be stated in Proposition 
4.3 in the present
paper.
\par
Our plan in the present paper is as follows. \S2 is devoted to
preliminaries.  In \S3  the main results on the
Feynman path integral for  nonrelativistic QED are stated. In \S4 the 
appearance of the
Coulomb potentials between charged particles is proved rigorously.
   In \S5 the vacuum and the states of photons with momentums and 
polarization state are
given concretely. \S6 -
\S9 are devoted to  proofs of the main results stated in \S3.
\section{Preliminaries}
      We consider $n$ charged nonrelativistic particles
      $x^{(j)} \in R^3\ (j = 1,2,\dotsc,n)$ with
mass $ m_j > 0$ and charge $e_j \in R.$ Let $T > 0$ be an arbitrary 
constant, $t \in [0,T], x
\in R^3,
\phi (t,x)
\in R$ a
scalar potential and $A(t,x) \in R^3$ a vector potential.  We
set
\begin{align*}
& \xvec := (x^{(1)},\dotsc,x^{(n)}) \in R^{3n}, \\
& \dot{\xvec} := (\dot{x}^{(1)},\dotsc,\dot{x}^{(n)}) \in R^{3n}.
\end{align*}
Then the Lagrangian function for    particles and the
electromagnetic field with the charge density
\begin{equation} \label{2.1}
\rho(t,x) = \sum_{j=1}^n  e_j   \delta \left(x - x^{(j)}(t)\right)
\end{equation}
and the current density
\begin{equation} \label{2.2}
j (t,x) = \sum_{j=1}^n  e_j   \dot{x}^{(j)}(t) \delta \left(x -
x^{(j)}(t)\right) \in R^3, \quad \dot{x}^{(j)}(t) = \frac{dx^{(j)}}{dt}(t)
\end{equation}
is given by
\begin{align} \label{2.3}
&{\cal L}\left(t,\xvec,\dot{\xvec},A,
\dot{A},
\frac{\partial A}{\partial x},\phi, \frac{\partial \phi}{\partial x}
\right) \notag \\
&= \sum_{j=1}^n
  \frac{m_j }{2}|\dot{x}^{(j)}|^2 - \int\rho (t,x)\phi (t,x)dx +
   \frac{1}{c} \int j(t,x) \cdot A(t,x)dx  \notag \\
& \quad + \frac{1}{8\pi}\int_{R^3}\left( |E(t,x)|^2 - |B(t,x)|^2 \right)dx +
C \notag \\
&= \sum_{j=1}^n
\left( \frac{m_j }{2}|\dot{x}^{(j)}|^2 - e_j\phi (t,x^{(j)})+
   \frac{1}{c} e_j \dot{x}^{(j)} \cdot A(t,x^{(j)}) \right)  \notag \\
& \quad + \frac{1}{8\pi}\int_{R^3}\left( |E(t,x)|^2 - |B(t,x)|^2 \right)dx +
C
\end{align}
(cf. \cite{Feynman-Hibbs}, \cite{Spohn}),
where
\begin{equation} \label{2.4}
E = -\frac{1}{c}\frac{\partial A}{\partial t} - \frac{\partial
\phi}{\partial x}, \quad
B = \nabla \times A,
\end{equation}
  $\partial \phi/\partial x = (\partial \phi/\partial x_1,\partial \phi/\partial
x_2,\partial \phi/\partial x_3)$ and $C$ is an indefinite constant.  It 
seems that an
indefinite constant in \eqref{2.3} has not been used by anyone before (cf.
\cite{Feynman-Hibbs,Sakurai,Spohn}).
\par
    As in \cite{Fermi,Feynman 50,Pauli-Fierz,Sakurai} we consider
a sufficient large box
\[V = \left[ -\frac{L_1}{2},\frac{L_1}{2}\right] \times \left[
-\frac{L_2}{2},\frac{L_2}{2}\right] \times \left[
-\frac{L_3}{2},\frac{L_3}{2}\right] \subset R^3.
\]
As variables we consider all  periodic potentials  $\phi (t,x)$ and
$A(t,x)$ in $x \in R^3$ with periods $L_1, L_2$ and $L_3$ satisfying
\begin{equation} \label{2.5}
\nabla \cdot A (t,x) = 0\  \ \text{in}\ [0,T]\times R^3\ \ \text{(the
Coulomb gauge)}
\end{equation}
and also
\begin{equation} \label{2.6}
\int_V \phi(t,x)dx = 0, \quad \int_V A(t,x)dx = 0.
\end{equation}
Let $|V| = L_1L_2L_3$. We set
\begin{equation} \label{2.7}
k :=
\left(\frac{2\pi}{L_1}s_1,\frac{2\pi}{L_2}s_2,
\frac{2\pi}{L_3} s_3 \right)
   \quad (s_1,s_2,s_3 = 0,\pm 1,\pm 2,\dotsc)
   \end{equation}
   and take
   $\elvec (k) \in R^3\  (j = 1,2)$ such that $\left(\overrightarrow{e_1}(k),
   \overrightarrow{e_2}(k), k/|k|\right)$ for all $k \not= 0$ forms a set of
mutually orthogonal unit vectors and
\begin{equation} \label{2.8}
  \overrightarrow{e_j}(-k) = - \overrightarrow{e_j}(k) \quad  (j = 1,2).
   \end{equation}
We note that we can easily determine measurable functions 
$\overrightarrow{e_j}(k)\in R^3\  (k
\in R^3, j =1,2)$ by the Gram and Schmidt method such that 
$\left(\overrightarrow{e_1}(k),
   \overrightarrow{e_2}(k), k/|k|\right)$ for all $k \not= 0$ forms a set of
mutually orthogonal unit vectors and satisfies \eqref{2.8} (cf. p. 448 in 
\cite{Arai}).
   Noting \eqref{2.5} and
\eqref{2.6}, we can expand $\phi(t,x)$ and $A(t,x)$ formally   into the 
Fourier series
\begin{align} \label{2.9}
& A(x,\{a_{\mathrm{l}k}(t)\})  = \frac{\sqrt{4\pi}}{|V|}c \sum_{k\not= 0}
\Bigl\{ \aonek (t) \mathrm{e}^{ik\cdot x} \eonevec (k) +
\atwok (t) \mathrm{e}^{ik\cdot x}\etwovec (k) \Bigr\},\\
   & \phi (x,\{\phi_k (t)\}) = \frac{1}{|V|} \sum_{k\not= 0}\phi_k(t)
   \mathrm{e}^{ik\cdot x}.
\end{align}
\begin{rem}
In physics (cf. \cite{Feynman-Hibbs,Sakurai}) the condition \eqref{2.6} is not 
assumed clearly.
\end{rem}
     We write
\begin{align} \label{2.11}
&\alk =:\frac{\alkone - i\alktwo}{\sqrt{2}}\quad  (\mathrm{l} = 1,2),
\\
& \phi_k  =:\phi_k^{(1)} - i\phi_k^{(2)},
\end{align}
where $\alki \in R$ and $ \phi_k^{(i)}  \in R$, and also the complex 
conjugate of $\alk$ as
$\alk^*$. Since $A$ and $\phi$ are real valued, the relations
\begin{equation} \label{2.13}
a_{\mathrm{l}-k}^{(\mathrm{1})} = - \alkone,\quad
a_{\mathrm{l}-k}^{(\mathrm{2})} =  \alktwo,\quad
\phi_{-k}^{(\mathrm{1})} = \phi_{k}^{(\mathrm{1})},\quad
\phi_{-k}^{(\mathrm{2})} = - \phi_{k}^{(\mathrm{2})}
\end{equation}
hold from \eqref{2.8}. So, from \eqref{2.9} and (2.10) we have
\begin{align} \label{2.14}
   &A(x,\{a_{\mathrm{l}k}\})  = \frac{\sqrt{4\pi}}{|V|}c \sum_{k\not= 0} 
\sum_{\mathrm{l}=1}^2
\frac{1}{\sqrt{2}}
\big( \alkone \cos k\cdot x + \alktwo \sin k\cdot x
\bigr)\overrightarrow{e_{\mathrm{l}}} (k), \\
& \phi(x,\{\phi_k\}) = \frac{1}{|V|} \sum_{k\not= 0}
\big( \phi^{(\mathrm{1})}_k \cos k\cdot x + \phi^{(\mathrm{2})}_k \sin
k\cdot x
\bigr).
\end{align}
We also write
\begin{align} \label{2.16}
&\rho^{(1)}_k(\xvec) := \sum_{j=1}^n e_j \cos k\cdot x^{(j)}, \\
&\rho^{(2)}_k(\xvec) := \sum_{j=1}^n e_j \sin k\cdot x^{(j)}.
\end{align}
\par
   Determining an indefinite constant $C$ in the Lagrangian function 
\eqref{2.3} formally by
\begin{equation} \label{2.18}
  \frac{2\pi}{|V|} \sum_{j=1}^n e_j^2 \sum_{k\not= 0}\frac{1}{|k|^2} +
\frac{1}{2}\sum_{k\not= 0}\frac{\hbar c |k|}{2},
\end{equation}
we can
write
${\cal L}$  from \eqref{2.3} by means of \eqref{2.4}, \eqref{2.9}, (2.10) 
and (2.15) as
\begin{align} \label{2.19}
&{\cal L}(\xvec,\dot{\xvec}, \{\alk\}, \{\dot{a}_{\mathrm{l}k}\},\{\phi_k\})
=
\sum_{j=1}^n
\frac{m_j }{2}|\dot{x}^{(j)}|^2   \notag \\
& \quad + \frac{1}{8\pi |V|}\sum_{k\not=0}
\Biggl\{\sum_{\mathrm{i}=1}^2\left(|k|^2 
\left(\phi^{(\mathrm{i})}_k\right)^2  -
8\pi
\rho^{(\mathrm{i})}_k(\xvec)\ \phiki \right)    \notag \\
& \quad + 16\pi^2
\frac{\sum_{j=1}^n e_j^2}{|k|^2}
\Biggr\} + \frac{1}{c}\sum_{j=1}^n e_j\ \dot{x}^{(j)}\cdot
A(x^{(j)},\{\alk\})  \notag \\
&  \quad + \frac{1}{ 2}\sum_{k\not=0,\mathrm{i,l}}
\left(\frac{\left(\dot{a}_{\mathrm{l}k}^{(\mathrm{i})}\right)^2}{2|V|} -
\frac{(c|k|)^2\left(\alki\right)^2}{2|V|}  + \frac{\hbar c|k|}{2}\right).
\end{align}
\begin{rem}
If we don't assume (2.6), we must add $(-1/|V|) (\sum_{j=1}^n
e_j)\phi^{(1)}_0$ and
$\sum_{\mathrm{i,l}=1,2}\left(\dot{a}_{\mathrm{l}0}^{(\mathrm{i})}\right)^2/ 
(4|V|)$ to
\eqref{2.19}.
\end{rem}

  The
reason why we determined an indefinite constant in
\eqref{2.3} by \eqref{2.18} will be explained in Remark 5.1. Taking account 
of the constraints
\begin{equation} \label{2.19'}
  |k|^2\phiki = 4\pi\rhoki(\xvec)\quad  (\mathrm{i} = 1,2, \ k \not= 0),
\end{equation}
roughly $\nabla\cdot E = 4\pi\rho$ from \eqref{2.1} and \eqref{2.4}
as in (9-17) in \cite{Feynman-Hibbs} and (7.38) in \cite{Swanson}, then 
from \eqref{2.16} and
(2.17) we have
\begin{align} \label{2.21}
&\sum_{\mathrm{i}=1}^2\left(|k|^2 \left(\phi^{(\mathrm{i})}_k\right)^2  -
8\pi
\rho^{(\mathrm{i})}_k(\xvec)\ \phiki \right)
  + 16\pi^2
\frac{\sum_{j=1}^n e_j^2}{|k|^2}  \notag \\
& = - \frac{16\pi^2}{|k|^2}\left\{\left(\rho_k^{(1)}\right)^2 + 
\left(\rho_k^{(2)}\right)^2 -
\sum_{j=1}^n e_j^2
\right\}
\notag \\
& = - \frac{16\pi^2}{|k|^2} \sum_{j,l =1, j\not= l}^n e_je_l 
\mathrm{e}^{ik\cdot
x^{(j)}}\mathrm{e}^{-ik\cdot x^{(l)}}
\notag \\
&  = - \frac{16\pi^2}{|k|^2} \sum_{j,l =1, j\not= l}^n
e_j e_l\cos k\cdot (x^{(j)} - x^{(l)}).
\end{align}
So we get
\begin{align} \label{2.22}
&{\cal L}_c(\xvec,\dot{\xvec}, \{\alk\}, \{\dot{a}_{\mathrm{l}k}\}) =
\sum_{j = 1}^n
\frac{m_j }{2}|\dot{x}^{(j)}|^2   \notag \\
&  \quad - \frac{2\pi}{|V|}\sum_{k \not= 0}\sum_{j,l =1,j\not=l}^n
\frac{e_j e_l\cos k\cdot (x^{(j)} - x^{(l)})}{|k|^2 } \notag \\
& \quad + \frac{1}{ c}\sum_{j=1}^n  e_j\/\dot{x}^{(j)}\cdot
A(x^{(j)},\{\alk\})  \notag \\
&  \quad + \frac{1}{2}\sum_{k \not= 0,\mathrm{i,l}}
\left(\frac{\left(\dot{a}_{\mathrm{l}k}^{(\mathrm{i})}\right)^2}{2|V|} -
\frac{(c|k|)^2\left(\alki\right)^2}{2|V|}  + \frac{\hbar c|k|}{2}\right).
\end{align}
\par
For a multi-index
$\alpha = (\alpha_1,\dotsc,\alpha_m)$ and $z = (z_1,\dotsc,z_m) \in R^m$ we 
write $|\alpha|
=
\sum_{j=1}^m
\alpha_j$, $z^{\alpha} =  z_1^{\alpha_1}
\cdots  z_m^{\alpha_m}, \partial_z^{\alpha} = (\partial /\partial
z_1)^{\alpha_1}
\cdots (\partial /\partial z_m)^{\alpha_m}$ and $<z> = \sqrt{1 + |z|^2}$. 
Let $L^2
=L^2(R^m)$ be the space of all square integrable functions in $R^m$ with 
inner product
$(\cdot,\cdot)$ and  norm $\Vert\cdot\Vert$.  We introduce the weighted 
Sobolev spaces
$B^a(R^m) := \{f \in L^2;\
\|f\|_{B^a} := \|f\| + \sum_{|\alpha| = a} (\|z^{\alpha}f\| +
\|(\hbar\partial_z)^{\alpha}f\|) < \infty\}\ (a = 1,2,\dotsc)$. Let 
$B^{-a}(R^m)$ denote its
dual space. We set
$B^0 : = L^2$.  Let
$\Sspace = \Sspace (R^m)$ be the Schwartz space of all rapidly decreasing 
functions in $R^m$.
  \par
    Let $\chi \in \Cspace (R^{m'})$ with compact support such that $\chi (0) 
= 1.$  For a
function
$g(z,z')$ in
$R^{m}\times R^{m'}$ we define the oscillatory integral 
$\text{Os}-{\displaystyle\int}
g(\cdot,z')dz'$ by
$\lim_{\epsilon
\rightarrow 0}{\displaystyle\int}\chi (\epsilon z') g(\cdot,z')dz'$ 
independently of
the choice of $\chi$ pointwise, in the topology of $B^a(R^{m})$ or in the 
topology in
$\Sspace(R^m)$ (cf.
\cite{Kumano-go}).
\section{Main results}

We arbitrarily cut off the terms   of  large wave numbers
$k$ in \eqref{2.22}.  That is, let $M_j\ (j=1,2,3)$ be  arbitrary positive
integers such that
$M_2 \leq M_3$.
We consider
\begin{align} \label{3.1}
\Lambda_j := \Bigl\{k = &
\left(\frac{2\pi}{L_1}s_1,\frac{2\pi}{L_2}s_2,
\frac{2\pi}{L_3} s_3 \right); s_1^2 + s_2^2 + s_3^2 \not= 0, \notag \\
    & |s_1|, |s_2|, |s_3| \leq M_j \Bigr\}.
\end{align}
Then we take $\Lambda'_j\ (j=1,2,3)$ such that
\begin{equation} \label{3.2}
\Lambda_j =: \Lambda'_j \cup -\Lambda'_j,\  \Lambda'_j \cap -\Lambda'_j=
\text{empty
set}, \ \Lambda'_2 \subseteq \Lambda'_3
\end{equation}
and fix $\Lambda'_j$ hereafter.
Let $N_j$ denote the number of elements of the set $\Lambda'_j$.
It follows from \eqref{2.13} that
$a_{\Lambda'_j} : = \{\alki\}_{k\in
\Lambda'_j, \mathrm{i,l}} \in R^{4N_j}$ are independent variables (cf. 
p.154 in \cite{Spohn}).
  \par
We  consider
\begin{align} \label{3.3}
&\tilde{{\cal L}_c}(\xvec,\dot{\xvec}, \{\alk\}, \{\dot{a}_{\mathrm{l}k}\})
:= \sum_{j = 1}^n
\frac{m_j }{2}|\dot{x}^{(j)}|^2   \notag \\
&  \quad - \frac{2\pi}{|V|}\sum_{k \in \Lambda_1}\sum_{j,l=1,j\not=l}^n
\frac{e_j e_l\cos k\cdot (x^{(j)} - x^{(l)})}{|k|^2 } \notag \\
& \quad + \frac{1}{ c}\sum_{j=1}^n  e_j\/\dot{x}^{(j)}\cdot
\tilde{A}(x^{(j)},a_{\Lambda'_2})  \notag \\
&  \quad + \frac{1}{2}\sum_{k \in \Lambda_3,\mathrm{i,l}}
\left(\frac{\left(\dot{a}_{\mathrm{l}k}^{(\mathrm{i})}\right)^2}{2|V|} -
\frac{(c|k|)^2\left(\alki\right)^2}{2|V|}  + \frac{\hbar c|k|}{2}\right)
\end{align}
   in place
of  ${\cal L}_c$ given by \eqref{2.22}, where $A$ given by (2.14)
is replaced with
\begin{align} \label{3.4}
& \tilde{A}(x,a_{\Lambda'_2}) = \frac{\sqrt{4\pi}}{|V|} cg(x) \sum_{k \in
\Lambda_2}
\sum_{\mathrm{l}=1}^2 \frac{1}{\sqrt{2}}\Big(
\psi ( a_{\mathrm{l}k}^{(1)}) \cos k\cdot x  \notag \\
&\quad \quad  + \psi ( a_{\mathrm{l}k}^{(2)}) \sin k\cdot x
\Bigr) \overrightarrow{e_{\mathrm{l}}} (k).
\end{align}
We assume $\psi (-\theta) = -\psi(\theta)\ (\theta \in R)$.  \par
   For the sake of simplicity we write $\Lambda' := \Lambda'_3$ and  $N := 
N_3$.  We consider
a subdivision
\begin{equation*}
   \Delta : 0 = \tau_0 < \tau_1 < \dotsc < \tau_{\nu} = T, \quad
|\Delta|: = \max_{1\leq l \leq \nu}(\tau_l -\tau_{l-1})
\end{equation*}
of $[0,T]$. Let $\xvec \in R^{3n}$ and $a_{\Lambda'} \in R^{4N}$ be fixed.
We take arbitrarily
\[
   \xvec^{(0)},\dotsc, \xvec^{(\nu -1)}  \in R^{3n}
   \]
   and
   \[
a_{\Lambda'}^{(0)},\dotsc, a_{\Lambda'}^{(\nu -1)} \in R^{4N}.
\]
  Then, we write the broken line path on $[0,T]$ connecting
$\xvec^{(l)}$ at $\theta = \tau_l\ (l = 0,1,\dotsc,\nu,\ \xvec^{(\nu)} =
\xvec)$ in order
as $ \qvecdelta (\theta) \in R^{3n}$.  Of course, $d\qvecdelta (\theta)/d\theta
=:\dot{\overrightarrow{q}}_{\Delta}(\theta)$ in the distribution sense is 
in $L^2([0,T])$. In
the same way we define the broken line path
$a_{\Lambda'\Delta} (\theta)
\in R^{4N}$ on $[0,T]$ for $a_{\Lambda'}^{(0)},\dotsc,
a_{\Lambda'}^{(\nu -1)}$ and $a_{\Lambda'}$.
We define $a_{\Lambda \Delta} (\theta) \in R^{8N}$ by means of \eqref{2.13}.
We write the classical action
\begin{align} \label{3.5}
& S_c(T,0;\qvecdelta, a_{\Lambda \Delta}) = \int_0^T\tilde{{\cal
L}}_c
\bigl(\qvecdelta (\theta), \dot{\qvec}_{\Delta}(\theta),
\notag \\
& \qquad \quad  a_{\Lambda \Delta} (\theta),
\dot{a}_{\Lambda \Delta}(\theta) \bigr)d\theta.
\end{align} \par
   Let $\rho^* > 0$ be the constant, which will be defined for $\Lambda'_1, 
\Lambda'_2$ and
$\Lambda'_3$ in Proposition 7.2 of the present paper.  See also Remark 7.1. 
Then we have
%%%%%%%%%%%%%%%%%%%%%%
\begin{thm}
   We assume for $g(x)$ and $\psi (\theta)$ in
\eqref{3.4} that for any $ l = 1,2,\dotsc$ and any multi-index $\alpha$
there exist  constants
$\delta_l > 0$ and $\delta_{\alpha} > 0$ satisfying
\begin{equation} \label{3.6}
     |\partial_{\theta}^{l}\/\psi (\theta)|\leq C_{l}<\theta>^{-(1+ 
\delta_l)}, \
\theta \in R
\end{equation}
and
\begin{equation} \label{3.7}
|\partial_x^{\alpha}g(x)|\leq C_{\alpha}<x>^{-(1+ \delta_{\alpha})}, \  x 
\in R^3.
\end{equation}
Let $|\Delta| \leq \rho^*$ and $f(\xvec, a_{\Lambda'}) \in B^a(R^{3n+4N})\ 
(a = 0,1,2,\dotsc).$
Then, there exists the function
\begin{align} \label{3.8}
   & \left\{\prod_{l=1}^{\nu} \left(\prod_{j=1}^n  \sqrt{\frac{m_j }{2\pi
i\hbar(\tau_l -
\tau_{l-1})}}^{\raisebox{3pt}{\kern3pt$3$}} \right)
  \sqrt{\frac{1}{2\pi i|V|\hbar(\tau_l -
\tau_{l-1})}}^{\raisebox{3pt}{\kern3pt$4N$}} \right\}\notag \\
&  \times  \text{Os} - \int \cdots
\int \left(\exp
i\hbar^{-1}S_c(T,0;\qvec_{\Delta},a_{\Lambda \Delta})\right)
f\Bigl(\qvec_{\Delta} (0),\notag \\
& a_{\Lambda'\Delta}(0)\Bigr) d\xvec^{(0)}\cdots d\xvec^{(\nu-1)}
da_{\Lambda'}^{(0)}\cdots da_{\Lambda'}^{(\nu-1)}
\end{align}
in $B^a(R^{3n+4N})$, which we write
as
$\left(C_{\Delta}(T,0)f\right)(\xvec,a_{\Lambda'})$ or
$\displaystyle{\iint} \bigl(\exp
i\hbar^{-1}
S_c(T,0;\\
\qvecdelta,a_{\Lambda \Delta}) \bigr)
f\left(\qvecdelta(0),a_{\Lambda'\Delta}(0)\right)  {\cal D}\qvecdelta  {\cal
D}a_{\Lambda'\Delta}$.
In addition,
as
$|\Delta|$ tends to $0$, the function $\left(\Cdelta (T,0)f\right)(\xvec,
a_{\Lambda'})$ converges to the so-called Feynman path integral
$\displaystyle{\iint}
\left(\exp i\hbar^{-1}
S_c(T,0;\qvec,a_{\Lambda} \right)
f (
\qvec(0),a_{\Lambda'}(0)) {\cal D}\qvec  {\cal D}a_{\Lambda'}$
in $B^a(R^{3n+4N})$.  We also see that this limit, which is $B^a$-valued 
continuous and
$B^{a-2}$-valued continuously differentiable in $T \in (0,\infty)$, 
satisfies the Schr\"odinger
type equation
\begin{equation} \label{3.9}
        i\hbar \frac{\partial }{\partial t}u(t)  = H(t)u(t)
\end{equation}
with $u(0) = f$,
where
\begin{align} \label{3.10}
       &  H(t) =  \sum_{j=1}^n \frac{1}{2m_j }
      \left|\frac{\hbar}{i}\frac{\partial}{\partial x^{(j)}} - \frac{e_j}{c}
\tilde{A}(x^{(j)},a_{\Lambda'_2}) \right|^2  \notag \\
&\hspace{1.5cm}  + \frac{2\pi}{|V|}\sum_{k \in \Lambda_1}\sum_{j,l=1,j\not= 
l}^n
\frac{e_j e_l\cos k\cdot (x^{(j)} - x^{(l)})}{|k|^2} \notag \\
& + \sum_{k \in \Lambda',\mathrm{i,l}} \left\{
\frac{|V|}{2} \left(\frac{\hbar}{i} \frac{\partial}{\partial
\alki}\right)^2 +\frac{(c|k|)^2}{2|V|} \left(\alki \right)^2 -
\frac{\hbar c|k|}{2}
\right\}.
\end{align}
\end{thm}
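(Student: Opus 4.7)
The plan is to adapt the oscillatory-integral-operator strategy of \cite{Ichinose 1997,Ichinose 1999,Ichinose 2000,Ichinose 2003} to the enlarged configuration space $R^{3n+4N}$, and then to identify the resulting limit as the unique solution of the Schr\"odinger initial value problem \eqref{3.9}. Write $z=(\xvec,a_{\Lambda'})$ for the combined variable. On a single subinterval $[\tau_{l-1},\tau_l]$ the discretized action $S_c(\tau_l,\tau_{l-1};\qvecdelta,a_{\Lambda\Delta})$ is a smooth function of the endpoints $z^{(l-1)},z^{(l)}$; because the sums in \eqref{3.3} are already cut off to the finite sets $\Lambda_1,\Lambda_2,\Lambda_3$, the Coulomb piece is a trigonometric polynomial in $\xvec$, the field part is purely quadratic in $(a_{\Lambda\Delta},\dot a_{\Lambda\Delta})$, and the coupling $\sum_j e_j\dot x^{(j)}\cdot\tilde A(x^{(j)},a_{\Lambda'_2})$ has, by hypotheses \eqref{3.6}--\eqref{3.7}, all $z$-derivatives controlled by $\langle z\rangle^{-(1+\delta)}$. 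Each short-time factor in \eqref{3.8} is therefore an oscillatory integral whose phase is a quadratic form in $z^{(l)},z^{(l-1)}$ plus a smooth symbol of Kumano-go type.

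The second step is to show that for $|\Delta|\le\rho^*$ the short-time phase is non-degenerate, with a uniform lower bound on its Hessian. The dominant contribution comes from the free kinetic pieces $m_j|\dot x^{(j)}|^2/2$ and $(\dot{a}_{\mathrm{l}k}^{(\mathrm{i})})^2/(4|V|)$ together with the harmonic term $(c|k|)^2(\alki)^2/(4|V|)$; these form an invertible block, while the Coulomb summand and the $\tilde A$-coupling only contribute perturbations of size $O(\tau_l-\tau_{l-1})$. The content of Proposition 7.2 is precisely that $\rho^*$ can be chosen so small that these perturbations preserve invertibility uniformly in the subdivision. Standard oscillatory-integral estimates then show that each factor in \eqref{3.8} is well defined as a bounded operator on every $B^a(R^{3n+4N})$, and that the composition $\Cdelta(T,0)$ is uniformly bounded in $\Delta$; the decay provided by \eqref{3.6}--\eqref{3.7} furthermore supplies the smoothing needed in weighted norms.

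To pass to the limit $|\Delta|\to 0$ I would apply the abstract Arzel\`a--Ascoli theorem on the weighted Sobolev scale, extracting a $B^a$-convergent subsequence of $\Cdelta(T,0)f$. The required equicontinuity in $T$ follows from a short-time expansion of the form $\Cdelta(T+h,T)f = f - ih\hbar^{-1}H(T)f + O(h^2)$ in $B^{a-2}$; hence any limit point $u(\cdot)$ is $B^a$-valued continuous and $B^{a-2}$-continuously differentiable in $T$, and telescoping the corresponding single-step comparison across the whole product identifies $u$ as a $B^a$-solution of \eqref{3.9} with $u(0)=f$. The operator $H(t)$ in \eqref{3.10} decomposes as a particle kinetic part with bounded smooth minimal coupling through $\tilde A$, a bounded Coulomb sum, and a harmonic oscillator in $a_{\Lambda'}$; on a suitable dense domain in $\Sspace(R^{3n+4N})$ this is essentially self-adjoint, so the Schr\"odinger initial value problem has at most one $B^a$-solution and subsequential convergence is upgraded to full convergence.

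The main obstacle is the uniform Hessian analysis of the second step. The cross terms produced by $\dot x^{(j)}\cdot\tilde A(x^{(j)},a_{\Lambda'_2})$ couple the particle and field variables and spoil the block-diagonal structure of the free Hessian, so one must quantify precisely how small $|\Delta|$ has to be, relative to the sup-norms of the derivatives of $g$ and $\psi$ and to the cardinality of $\Lambda'_2$, before invertibility and the full symbol bounds needed for the Kumano-go calculus survive. Once this uniform estimate is in hand, the $B^a$-boundedness, the Arzel\`a--Ascoli compactness, and the Schr\"odinger uniqueness arguments run along the same lines as in Ichinose's earlier work on the configuration-space and phase-space Feynman path integrals.
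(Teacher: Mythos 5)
Your proposal follows essentially the same route as the paper's proof: short-time oscillatory integral operators on $R^{3n+4N}$ whose phase map $(z,Z)\mapsto(\Phi,\Phi_1)$ is shown to be a uniform diffeomorphism for $t-s\le\rho^*$ (Proposition 7.2), yielding the stability bound $\Vert \Cts f\Vert_{B^a}\le \mathrm{e}^{K_a(t-s)}\Vert f\Vert_{B^a}$ by writing $\Cts^*\chi(\epsilon\cdot)^2\Cts$ as the identity plus $(t-s)$ times a Calder\'on--Vaillancourt-bounded pseudo-differential operator, then the consistency estimate \eqref{8.1}, Ascoli--Arzel\`a on the weighted Sobolev scale (using the compact embedding $B^{a+2M}\hookrightarrow B^{a+M}$ and a density argument to reach general $f\in B^a$), and uniqueness for \eqref{3.9}. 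The only substantive deviation is that you argue uniqueness via essential self-adjointness of $H$, whereas the paper uses the energy inequality for the Schr\"odinger type equation (which also covers the time-dependent Hamiltonians of Theorem 3.3); both are adequate here.
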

\begin{rem}
We determine an indefinite constant $C$ in \eqref{2.3} by
\begin{equation*}
  \frac{2\pi}{|V|} \sum_{j=1}^n e_j^2 \sum_{k\in \Lambda_1}\frac{1}{|k|^2} +
\frac{1}{2}\sum_{k\in \Lambda_3}\frac{\hbar c |k|}{2}
\end{equation*}
and cut off the terms of large wave numbers $k$ of \eqref{2.19} by
introducing  $\Lambda_j\ (j = 1,2,3)$. Then we get \eqref{3.3} again, 
taking the account of
the constraints \eqref{2.19'}.
\end{rem}
\begin{rem}
Let $0 \leq t_0 \leq t \leq T$.  For $f \in B^a(R^{3n+4N})\ (a = 
0,1,2,\dotsc)$  we define
$\Cdelta (t,t_0)f$ as in \eqref{3.8}. See \eqref{9.3} in the present paper 
for the precise
definition. As will be seen in the proof of Theorem 3.1, under the 
assumptions of Theorem 3.1
there exist
$\left(\Cdelta (t,t_0)f\right)(\xvec,a_{\Lambda'})$ in
$B^a$ and
$\lim_{|\Delta| \rightarrow 0}\Cdelta (t,t_0)f$ in $B^a$ uniformly in $0 
\leq t_0 \leq t \leq
T$, which satisfies the Sch\"odinger type equation \eqref{3.9} with $u(t_0) 
= f$.
\end{rem}
\par
In place of ${\cal L}$ expressed by \eqref{2.19}  we consider
\begin{align} \label{3.11}
&\tilde{{\cal L}}(\xvec,\dot{\xvec}, \{\alk\},
\{\dot{a}_{\mathrm{l}k}\},\{\phi_k\}) :=
\sum_{j=1}^n
\frac{m_j }{2}|\dot{x}^{(j)}|^2   \notag \\
& \quad + \frac{1}{8\pi |V|}\sum_{k \in \Lambda_1}
\Biggl\{\sum_{\mathrm{i}=1}^2\left(|k|^2 
\left(\phi^{(\mathrm{i})}_k\right)^2  -
8\pi
\rho^{(\mathrm{i})}_k(\xvec)\ \phiki \right)    \notag \\
& \quad + 16\pi^2
\frac{\sum_{j=1}^n e_j^2}{|k|^2}
\Biggr\} + \frac{1}{c}\sum_{j=1}^n e_j\ \dot{x}^{(j)}\cdot
\tilde{A}(x^{(j)},a_{\Lambda'_2})  \notag \\
&  \quad + \frac{1}{ 2}\sum_{k\in \Lambda_3,\mathrm{i,l}}
\left(\frac{\left(\dot{a}_{\mathrm{l}k}^{(\mathrm{i})}\right)^2}{2|V|} -
\frac{(c|k|)^2\left(\alki\right)^2}{2|V|}  + \frac{\hbar c|k|}{2}\right)
\end{align}
by means of \eqref{3.4}  as in $\tilde{{\cal
L}}_c$.
   \par
    Let $\qvecdelta (\theta) \in R^{3n}$,
    $a_{\Lambda'\Delta} (\theta) \in R^{4N}$ and $a_{\Lambda \Delta} (\theta)
\in R^{8N}$
     be the broken line paths defined
before. Let $\xivec_k := \left\{\xi_{k}^{(\mathrm{i})}\right\}_{\mathrm{i} 
= 1,2} \in R^{2}$
for
$k
\in \Lambda'_1$. Take $\xivec_k^{(0)},
\xivec_k^{(1)}, \dotsc$ and $
\xivec_k^{(\nu - 1)}$ in $R^2$ arbitrarily. Set $\rho_k (\xvec) :=
(\rho_k^{(1)}(\xvec),\rho_k^{(2)}(\xvec))$
by means of \eqref{2.16} and (2.17). Then, we define
     the path
    \begin{equation} \label{3.12}
    \phi_{k\Delta}(\theta) := \xivec_k^{(l)} +
\frac{4\pi\rho_k(\qvecdelta(\theta))}{|k|^2} \in R^2, \ \tau_{l-1} < \theta
\leq \tau_l
\end{equation}
$(l = 1,2,\dotsc,\nu)$, where $\phi_{k\Delta}(0) :=
    \lim_{\theta \rightarrow 0+0}\phi_{k\Delta}(\theta)$. We set
$\phi_{\Lambda'_1\Delta}(\theta) := \{\phi_{k\Delta}(\theta)\}_{k \in
\Lambda'_1}\\
\in R^{2N_1}$.  We define $\phi_{\Lambda_1 \Delta}(\theta) \in R^{4N_1}$ by 
means
of (2.13).
Let $S(T,0;\qvecdelta,a_{\Lambda \Delta},\phi_{\Lambda_1 \Delta})$ be
the classical
action for $\tilde{{\cal L}}(\xvec,\dot{\xvec}, \{\alk\},
\{\dot{a}_{\mathrm{l}k}\},
\{\phi_k\})$ given by \eqref{3.11}.
\begin{thm}
    Let $|\Delta| \leq \rho^*$ and $f(\xvec,a_{\Lambda'}) \in 
B^a(R^{3n+4N})\ (a =
0,1,2,\dotsc)$.  Then, under the assumption of Theorem 3.1 there exists the 
function
\begin{align} \label{3.13}
      &
    \Biggl[ \prod_{l=1}^{\nu} \Biggl\{\left(\prod_{j=1}^n  \sqrt{\frac{m_j 
}{2\pi
i\hbar(\tau_l -
\tau_{l-1})}}^{\raisebox{5pt}{\kern3pt$3$}} \right)
  \sqrt{\frac{1}{2\pi i\hbar|V|(\tau_l -
\tau_{l-1})}}^{\raisebox{5pt}{\kern3pt$4N$}} \notag \\
&\times  \prod_{k \in \Lambda'_1}  \frac{|k|^2
(\tau_l - \tau_{l-1})}{4 i\pi^2\hbar|V|}  \Biggr\}
\Biggr] \text{Os} - \int
\dotsi\int \Bigl(\exp
i\hbar^{-1} S(T,0;\notag \\
&  \qvec_{\Delta},a_{\Lambda \Delta}, \phi_{\Lambda_1 \Delta})
\Bigr)f\left(\qvec_{\Delta} (0),a_{\Lambda'\Delta}(0)\right)
            d\xvec^{(0)}\cdots d\xvec^{(\nu-1)}
    \notag \\
& \cdot da_{\Lambda'}^{(0)}\cdots
da_{\Lambda'}^{(\nu-1)} \prod_{k \in \Lambda'_1}  d\xivec_k^{(0)}
d\xivec_k^{(1)}
\cdots
d\xivec_k^{(\nu-1)}
\end{align}
in $B^a(R^{3n+4N})$, which is equal to \begin{align*}
&
\iint \left(\exp i\hbar^{-1}
S_c(T,0;\qvecdelta,a_{\Lambda \Delta}) \right) \notag \\
& \qquad \times f
\left(\qvecdelta(0),a_{\Lambda'\Delta}(0)\right){\cal D}\qvecdelta
{\cal D}a_{\Lambda'\Delta}
\end{align*}
defined by \eqref{3.8} in Theorem 3.1.
So it follows from Theorem 3.1 that  as $|\Delta| \rightarrow 0$, then
\eqref{3.13} converges to the Feynman path integral 
$\left(G(T,0)f\right)(\xvec,a_{\Lambda'})$
or
   \begin{equation} \label{3.14}
   \iiint \left(\exp i\hbar^{-1}
S(T,0;\qvec,a_{\Lambda},\phi_{\Lambda_1})\right)
f
\left(\qvec(0),a_{\Lambda'}(0)\right){\cal D}\qvec
   {\cal D}a_{\Lambda'}{\cal D}\phi_{\Lambda'_1}
   \end{equation}
in $B^a(R^{3n+4N})$, which satisfies the Schr\"odinger type equation 
\eqref{3.9} with $u(0) =
f$.
   This expression \eqref{3.14} is given in \S 9-8 in Feynman-Hibbs
\cite{Feynman-Hibbs}
without any comments on its  definition.
   \end{thm}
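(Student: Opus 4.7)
The plan is to show that expression \eqref{3.13} collapses exactly to the pre-oscillatory expression in \eqref{3.8}; once this equality is established, Theorem~3.1 supplies the existence in $B^a(R^{3n+4N})$, the convergence as $|\Delta|\to 0$ to the Feynman path integral \eqref{3.14}, and the verification of the Schr\"odinger-type equation \eqref{3.9} with $u(0)=f$. The two substantive ingredients are an algebraic completion of the square that decouples the $\xivec_k^{(l)}$-dependence from $S_c$, followed by explicit two-dimensional Gaussian oscillatory integrations in each $\xivec_k^{(l)}$ that exactly absorb the prefactor $\prod_{k\in\Lambda'_1}|k|^2(\tau_l-\tau_{l-1})/(4i\pi^2\hbar|V|)$.

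For the algebraic step, substituting $\phi_{k\Delta}(\theta)=\xivec_k^{(l)}+4\pi\rho_k(\qvecdelta(\theta))/|k|^2$ into the $\phi$-dependent part of $\tilde{\cal L}$ in \eqref{3.11} and completing the square exactly as in the derivation of \eqref{2.21} yields, on each subinterval $(\tau_{l-1},\tau_l]$, the pointwise identity
\begin{align*}
&\sum_{\mathrm{i}=1}^2\bigl(|k|^2(\phi_{k\Delta}^{(\mathrm{i})})^2-8\pi\rho_k^{(\mathrm{i})}(\qvecdelta)\phi_{k\Delta}^{(\mathrm{i})}\bigr)+\frac{16\pi^2\sum_j e_j^2}{|k|^2} \\
&\qquad = |k|^2\bigl|\xivec_k^{(l)}\bigr|^2-\frac{16\pi^2}{|k|^2}\sum_{j\ne j'}e_je_{j'}\cos k\cdot(x^{(j)}-x^{(j')}).
\end{align*}
Summing over $k\in\Lambda_1$ (with $\sum_{k\in\Lambda_1}=2\sum_{k\in\Lambda'_1}$ by \eqref{2.13}) and integrating over $[0,T]$, the Coulomb piece reassembles into $\tilde{\cal L}_c$ while the residual $\xivec$-quadratic term is piecewise constant in $\theta$, giving
\begin{equation*}
S(T,0;\qvecdelta,a_{\Lambda\Delta},\phi_{\Lambda_1\Delta})=S_c(T,0;\qvecdelta,a_{\Lambda\Delta})+\sum_{l=1}^{\nu}\sum_{k\in\Lambda'_1}\frac{(\tau_l-\tau_{l-1})|k|^2}{4\pi|V|}\bigl|\xivec_k^{(l)}\bigr|^2.
\end{equation*}
Because the new phase is independent of $\xvec^{(l)}$, $a_{\Lambda'}^{(l)}$, and of the amplitude $f$, the $\xivec_k^{(l)}$-integrations factor into a product of independent two-dimensional oscillatory Gaussians, each equal to
\begin{equation*}
\text{Os}\text{-}\iint\exp\!\left(\frac{i|k|^2(\tau_l-\tau_{l-1})}{4\pi\hbar|V|}\bigl|\xivec_k^{(l)}\bigr|^2\right)d\xivec_k^{(l)}=\frac{4\pi^2 i\hbar|V|}{|k|^2(\tau_l-\tau_{l-1})},
\end{equation*}
which is precisely the reciprocal of the prefactor in \eqref{3.13}. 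After this cancellation the remainder of \eqref{3.13} is identically the expression defining $(\Cdelta(T,0)f)(\xvec,a_{\Lambda'})$ in \eqref{3.8}.

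The chief technical obstacle is the Fubini-type justification that permits performing the $\xivec$-integrations first inside the oscillatory integral in \eqref{3.13}. Since the total phase is quadratic and nondegenerate in $\xivec$ with strictly positive time-weighted Hessian and the amplitude $f\bigl(\qvecdelta(0),a_{\Lambda'\Delta}(0)\bigr)$ is independent of $\xivec$, this interchange falls within the oscillatory-integral calculus of \cite{Kumano-go} recalled in \S2 and is compatible with the $B^a$-topology used throughout the paper. Once the interchange is established, the equality of \eqref{3.13} with $(\Cdelta(T,0)f)(\xvec,a_{\Lambda'})$ is proved; the $B^a$-existence, the convergence as $|\Delta|\to 0$ to \eqref{3.14}, and the Schr\"odinger-type equation \eqref{3.9} then follow directly from Theorem~3.1.
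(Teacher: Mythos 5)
Your proposal is correct and follows essentially the same route as the paper: the completion of the square along the shifted paths (the paper's (9.12)--(9.13)) reduces $S$ to $S_c$ plus a piecewise-constant quadratic form in the $\xivec_k$, and the two-dimensional Fresnel integrals $\int_{-\infty}^{\infty}\mathrm{e}^{ia\theta^2}d\theta=\sqrt{i\pi/a}$ exactly cancel the prefactor $\prod_{k\in\Lambda'_1}|k|^2(\tau_l-\tau_{l-1})/(4i\pi^2\hbar|V|)$, after which Theorem 3.1 finishes the argument. The one point you defer --- justifying the $\xivec$-integrations inside the oscillatory integral and in the $B^a$-topology --- is where the paper does its real work: it introduces the regularized operators $G_{\epsilon}(t,s)$ of (9.14), proves $\lim_{\epsilon\rightarrow 0}G_{\epsilon}(t,s)f=\Cts f$ in $B^a$ via the $Q_{\epsilon}^{\dag}Q_{\epsilon}$ duality trick together with the stability machinery of \S 7, and then passes the limit through the $\nu$-fold composition using the telescoping identity (9.1) and the uniform bound $\sup_{0<\epsilon\leq 1}\Vert G_{\epsilon}(t,s)f\Vert_{B^a}\leq C_a\Vert f\Vert_{B^a}$.
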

\begin{rem}  As was noted in the introduction, the  constraints
(2.20) are not
needed in Theorem 3.2 above. The path $\phi_{k\Delta}(\theta)$ defined by 
(3.12) is
determined so that $\partial \tilde{{\cal
L}}(\overrightarrow{q}_{\Delta}(\theta),\dot{\overrightarrow{q}}_{\Delta}
(\theta),
a_{\Lambda\Delta}(\theta),\dot{a}_{\Lambda\Delta}(\theta),\phi_{\Lambda_1\Delta}(\theta))
/\partial
\phi_{k}^{(\mathrm{i})}\ (\mathrm{i} = 1,2)$ is piecewise constant.
\end{rem}
\begin{rem}
We take $f \in {\cal S}(R^{3n + 4N})$ and set $M_0 = [(3n+4N)/2] +1$, where 
$[\cdot]$ denotes
Gauss' symbol. Let $\zeta = (x,X)$, and $\alpha$ and $\beta$ multi-indices. 
Then, the Sobolev
inequality  shows
\begin{equation*}
  \sup_{\zeta \in R^{3n+4N}} 
\left|\zeta^{\alpha}\partial_{\zeta}^{\beta}f(\zeta) \right|
\leq \left\Vert \zeta^{\alpha}\partial_{\zeta}^{\beta}f\right\Vert +
\sum_{|\kappa| = M_0} \left\Vert
\partial_{\zeta}^{\kappa}\left( 
\zeta^{\alpha}\partial_{\zeta}^{\beta}f\right)\right\Vert.
\end{equation*}
\end{rem}
It follows from Lemma 2.4 with $a = b = 1$ in \cite{Ichinose 1995} or as in 
the proof of
\eqref{7.14} in the present paper that the rhs of the above is bounded by 
$C_{\alpha,
\beta}\Vert f
\Vert_{B^{|\alpha + \beta| + M_0}}$ with a constant $C_{\alpha,\beta}$. 
Hence, for $|\Delta|
\leq \rho^*$ there exist \eqref{3.8}, \eqref{3.13}, the limit of 
\eqref{3.8} as $|\Delta|
\rightarrow 0$ and the limit of \eqref{3.13} as $|\Delta| \rightarrow 0$ in 
the topology of
$\Sspace$, so pointwise.
\begin{rem}
Let $0 \leq t_0 \leq t \leq T$.  For $f \in B^a(R^{3n+4N})\ (a = 
0,1,2,\dotsc)$  we can define
$G_{\Delta} (t,t_0)f$ as in \eqref{3.13} in the same way that $\Cdelta 
(t,t_0)f$ is defined
in Remark 3.2. See also \eqref{9.20} in the present paper. As will be seen 
in the proof of
Theorem 3.2, under the  assumptions of Theorem 3.1 there exists
$G_{\Delta} (t,t_0)f$ in
$B^a$ and $G_{\Delta} (t,t_0)f$ is equal to $\Cdelta (t,t_0)f$.
\end{rem}
\par
We consider an external electromagnetic field $E_{\text{ex}}(t,x) =
(E_{\text{ex}1},
E_{\text{ex}2}, E_{\text{ex}3}) \in R^3$ and
$B_{\text{ex}}(t,x)  =
(B_{\text{ex}1},B_{\text{ex}2}, B_{\text{ex}3})\in R^3$ such that
$\partial_x^{\alpha}E_{\text{ex}j}(t,x), 
\partial_x^{\alpha}B_{\text{ex}j}(t,x)$ and
$\partial_tB_{\text{ex}j}(t,x)
\ (j = 1,2,3)$  are continuous in
$[0,T]\times R^n$ for all $\alpha$.
Let $\phi_{\text{ex}}(t,x)  \in R$ and
$A_{\text{ex}}(t,x)
\in R^3$ be the electromagnetic potentials to $E_{\text{ex}}$ and
$B_{\text{ex}}$.  Then we get Theorem 3.3 below.  Though Theorem 3.3 gives 
the generalization
of Theorems 3.1 and 3.2, the results are stated separately from Theorems 
3.1 and 3.2 to avoid
confusion.
\par
  We
replace
$\tilde{A}(x^{(j)},a_{\Lambda'_2})$ in \eqref{3.3}, \eqref{3.10} and 
\eqref{3.11} with
$\tilde{A}(x^{(j)},a_{\Lambda'_2}) + A_{\text{ex}}(t,x^{(j)})$. Moreover we 
add $- \sum_{j=1}^n
e_j\phi_{\text{ex}}(t,x^{(j)})$ to \eqref{3.3} and \eqref{3.11}, and 
$\sum_{j=1}^n
e_j\phi_{\text{ex}}(t,x^{(j)})$ to \eqref{3.10}, respectively.   Then we have
\par
\begin{thm}  Besides the assumptions of Theorem 3.1
   we suppose as in
Ichinose \cite{Ichinose 1999,Ichinose 2000,Ichinose 2003} that for any 
$\alpha \not= 0$ there
exist constants $C_{\alpha}$
and $\delta_{\alpha} > 0$ satisfying
\begin{equation} \label{3.15}
   |\partial_x^{\alpha}E_{\text{ex}\hspace{0.08cm} j}(t,x)| \leq C_{\alpha},\
|\partial_x^{\alpha}B_{\text{ex}\hspace{0.08cm} j}(t,x)| \leq 
C_{\alpha}<x>^{-(1 +
\delta_{\alpha})}
\end{equation}
  and
\begin{equation} \label{3.16}
|\partial_x^{\alpha}A_{\text{ex}\hspace{0.08cm} j}(t,x)| \leq C_{\alpha},\
|\partial_x^{\alpha}\phi_{\text{ex}}(t,x)| \leq C_{\alpha}<x>
\end{equation}
for j = 1, 2 and  3 in $[0,T]\times R^n$.    Then, the
same assertions as in
Theorems 3.1 and 3.2 hold.
\end{thm}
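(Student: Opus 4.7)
The plan is to reduce Theorem 3.3 to the arguments already used for Theorems 3.1 and 3.2, treating the external field as a lower-order perturbation that decouples from the photon integration.

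First, I would observe that replacing $\tilde{A}(x^{(j)},a_{\Lambda'_2})$ by $\tilde{A}(x^{(j)},a_{\Lambda'_2}) + A_{\text{ex}}(t,x^{(j)})$ and subtracting $\sum_{j=1}^n e_j\phi_{\text{ex}}(t,x^{(j)})$ from the Lagrangian modifies the classical action $S_c(T,0;\qvecdelta,a_{\Lambda\Delta})$ only by the additional term
\begin{equation*}
\Sigma_{\text{ex}}(T,0;\qvecdelta) := \int_0^T \sum_{j=1}^n \left\{\frac{e_j}{c}\dot{x}^{(j)}_\Delta(\theta)\cdot A_{\text{ex}}(\theta,x^{(j)}_\Delta(\theta)) - e_j\phi_{\text{ex}}(\theta,x^{(j)}_\Delta(\theta))\right\}d\theta,
\end{equation*}
which depends only on $\qvecdelta$, not on the photon variables $a_{\Lambda\Delta}$ or $\phi_{\Lambda_1\Delta}$. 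Hence the integrand in \eqref{3.8} (respectively \eqref{3.13}) merely acquires the multiplicative factor $\exp(i\hbar^{-1}\Sigma_{\text{ex}})$, which may be absorbed into the phase.

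Second, I would analyze $\Sigma_{\text{ex}}$ one subinterval at a time. On each $[\tau_{l-1},\tau_l]$ the velocity $\dot{x}^{(j)}_\Delta(\theta)=(\xvec^{(l)}-\xvec^{(l-1)})/(\tau_l-\tau_{l-1})$ is piecewise constant, so the contribution reduces to an integral of the smooth functions $A_{\text{ex}}$ and $\phi_{\text{ex}}$ along a straight segment. Under the assumptions \eqref{3.15} and \eqref{3.16}, this contribution fits precisely into the symbol class considered in \cite{Ichinose 1999,Ichinose 2000,Ichinose 2003}: derivatives of order $\alpha\neq 0$ in the particle coordinates are bounded, so the one-step phase differs from its $A_{\text{ex}}=\phi_{\text{ex}}=0$ counterpart by a term whose $x$-Hessian is $O(|\Delta|)$ small. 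This means the stationary phase analysis producing the constant $\rho^*$ in Proposition 7.2 can be reused, after possibly shrinking $\rho^*$ by a universal factor depending only on the constants $C_\alpha$ from \eqref{3.15}--\eqref{3.16}.

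Third, because the perturbation $\Sigma_{\text{ex}}$ is independent of $a_{\Lambda\Delta}$ and $\phi_{\Lambda_1\Delta}$, the photon integrations carried out in the proofs of Theorems 3.1 and 3.2 proceed verbatim; the particle integrations are handled exactly as in the external-field argument of \cite{Ichinose 1999,Ichinose 2000,Ichinose 2003}. Combining the two, one obtains existence of \eqref{3.8} and \eqref{3.13} in $B^a$, the equality of the two expressions, and the convergence $|\Delta|\to 0$ via the abstract Ascoli--Arzel\`a theorem on the weighted Sobolev spaces, precisely as before. Finally, differentiating $\Cdelta(t,t_0)f$ in $t$ at $t=t_0$ identifies the extra terms in the limiting Hamiltonian: $-e_j\phi_{\text{ex}}$ in the Lagrangian produces the multiplication operator $+\sum_j e_j\phi_{\text{ex}}(t,x^{(j)})$ in $H(t)$, while $A_{\text{ex}}$ is absorbed into the kinetic term $|\tfrac{\hbar}{i}\partial_{x^{(j)}}-\tfrac{e_j}{c}(\tilde{A}+A_{\text{ex}})|^2/(2m_j)$ as required.

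The main obstacle will be the uniform symbol estimates in the presence of both the quantized potential $\tilde{A}(x^{(j)},a_{\Lambda'_2})$ and the time-dependent classical potential $A_{\text{ex}}(t,x^{(j)})$: one must track joint derivatives in $\xvec$ and $a_{\Lambda'_2}$ and verify that the mixed contributions from $A_{\text{ex}}\cdot\tilde{A}$ do not spoil the bounds used in Theorem 3.1. This is controlled because \eqref{3.7} forces $\tilde{A}$ to be rapidly decreasing in $x$ while \eqref{3.16} bounds all $x$-derivatives of $A_{\text{ex}}$, so their product satisfies the same decay used in the unperturbed proof.
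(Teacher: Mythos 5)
Your overall plan --- treat the external field as an additive perturbation of the action that is independent of the photon variables, re-establish the one-step stability and consistency estimates, and then rerun the Ascoli--Arzel\`a and uniqueness arguments --- matches the paper's. The gap is in how you justify the stability step. You argue that the one-step phase perturbation has bounded particle-coordinate derivatives and an $O(|\Delta|)$ Hessian and therefore ``fits into the symbol class.'' But the stability proof (Theorem 7.3) does not run on the phase itself: it runs on the difference $S_c(t,s;\qvec^{t,s}_{z,y},\cdot)-S_c(t,s;\qvec^{t,s}_{z,x},\cdot)$, which must be put in the form \eqref{6.18} with maps $\Phi^{(j)}$ whose derivatives of order $\geq 1$ are bounded uniformly in $x,y,z$ (estimate \eqref{7.2}), so that Proposition 7.2 gives a global diffeomorphism with Jacobian $\geq 1/2$. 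If you expand the extra line integrals of $\frac1c A_{\text{ex}}\cdot dx^{(j)}-\phi_{\text{ex}}\,dt$ naively, the resulting contributions to $\Phi^{(j)}$ involve $\nabla\phi_{\text{ex}}$ and $\partial_t A_{\text{ex}}$ separately; by \eqref{3.16} these are only $O(<x>)$, which destroys \eqref{7.2} and hence the uniform choice of $\rho^*$.

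The missing ingredient is the Stokes-type identity \eqref{9.21}: the difference of the two line integrals of the $1$-form $\frac1c A_{\text{ex}}\cdot dx^{(j)}-\phi_{\text{ex}}\,dt$ equals a boundary term plus surface integrals over $\bDelta$ of the exterior derivative, i.e.\ of the gauge-invariant fields $E_{\text{ex}}$ and $B_{\text{ex}}$. The uncontrolled pieces $\nabla\phi_{\text{ex}}$ and $\partial_tA_{\text{ex}}$ cancel into $E_{\text{ex}}$, which has bounded derivatives by \eqref{3.15}, and the $B_{\text{ex}}$ term carries the factor $(z^{(j)}_l-x^{(j)}_l)$, so Lemma 7.1 (2) applies precisely because $\partial_x^{\alpha}B_{\text{ex}}$ decays like $<x>^{-(1+\delta_{\alpha})}$. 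This is why the hypotheses of Theorem 3.3 are imposed on the fields rather than on the potentials; your argument never actually invokes \eqref{3.15} at the stability step, which signals that the mechanism is absent. Two smaller corrections: Proposition 7.2 is a global inverse-function argument, not a stationary-phase one; and there are no $A_{\text{ex}}\cdot\tilde{A}$ cross terms in the action, since the Lagrangian is linear in the vector potential --- such products appear only when expanding the kinetic term of $H(t)$ in the consistency step, where they cause no difficulty.
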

\begin{rem}
It follows from Lemma 6.1 in \cite{Ichinose 1999} that under the 
assumptions \eqref{3.15}
there exist
$A_{\text{ex}}$ and $\phi_{\text{ex}}$ satisfying \eqref{3.16}.
\end{rem}
\section{The appearance of the Coulomb potentials}
   We will show rigorously that the Coulomb potentials appear as the limit 
of the second term
on the rhs of \eqref{3.3} and the limit of the second term
on the rhs of \eqref{3.10}.  This result is well known in physics (cf.
\cite{Fermi,Feynman-Hibbs}). We will give the rigorous proof.  Our proof is 
somewhat delicate.
\begin{thm}
   Let $L_j\ (j = 1,2,3)$ tend to $\infty$ under
the condition
\begin{equation} \label{4.1}
\lim_{L_1,L_2,L_3 \rightarrow \infty} \frac{L_i}{L_jL_k} = 0, \ (i,j,k) = 
(1,2,3),
(2,3,1), (3,1,2).
\end{equation}
Then we have
\begin{align} \label{4.2}
& \lim_{L_1,L_2,L_3 \rightarrow \infty} \lim_{M_1 \rightarrow \infty}
\frac{2\pi}{|V|}
\sum_{k \in \Lambda_1}\sum_{j,l =1,j \not= l}^n \frac{e_j e_l\cos k\cdot 
(x^{(j)} -
x^{(l)})}{|k|^2}
\notag
\\
& = \lim_{M_1 \rightarrow \infty}\lim_{L_1,L_2,L_3 \rightarrow \infty}
\frac{2\pi}{|V|}
\sum_{k \in \Lambda_1}\sum_{j,l =1,j \not= l}^n \frac{e_j e_l\cos k\cdot 
(x^{(j)} -
x^{(l)})}{|k|^2}
\notag
\\
& = \frac{1}{2} \sum_{j,l=1,j \not= l}^n \frac{e_j e_l}{\bigl| x^{(j)} -
x^{(l)}\bigr|} \quad \text{in}\ {\cal S'}(R^{3n}).
\end{align}
\end{thm}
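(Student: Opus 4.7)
The statement reduces, by linearity in $(j,l)$ and the tempered continuity of $k\mapsto\int\cos k\cdot(x^{(j)}-x^{(l)})\,\varphi(\xvec)\,d\xvec$ for $\varphi\in{\cal S}(R^{3n})$, to the scalar claim that
\begin{equation*}
F_{L,M_1}(y):=\frac{2\pi}{|V|}\sum_{k \in \Lambda_1}\frac{\cos k\cdot y}{|k|^2}\longrightarrow\frac{1}{2|y|}\quad\text{in }{\cal S}'(R^3)
\end{equation*}
under each iterated limit order. Summing the resulting one-variable statement with weights $e_je_l$ and substituting $y=x^{(j)}-x^{(l)}$ then yields the full assertion in ${\cal S}'(R^{3n})$.

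For the first ordering ($M_1\to\infty$ inside, $L_j\to\infty$ outside), I would fix $L$ and identify $\lim_{M_1\to\infty}F_{L,M_1}(y)=2\pi G_L(y)$ in ${\cal S}'$, where $G_L$ is the $L_1\times L_2\times L_3$-periodic Green's function of $-\Delta$ with vanishing mean, so that $-\Delta G_L=\delta-1/|V|$ on the torus $V$. An Ewald-type splitting, equivalently Poisson summation applied to the dual lattice, produces a decomposition
\begin{equation*}
2\pi G_L(y)=\frac{1}{2|y|}+R_L(y)
\end{equation*}
with a smooth remainder that one can estimate directly; condition~\eqref{4.1} is precisely what guarantees $R_L\to 0$ in ${\cal S}'(R^3)$ as the box grows simultaneously in all three directions. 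Summing over $j\not=l$ with the factors $e_je_l$ finishes this order.

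For the second ordering ($L_j\to\infty$ inside, $M_1\to\infty$ outside), the plan is to invoke Proposition 4.3 of the present paper, the Riemann-sum convergence theorem for unbounded integrands. Pairing against $\psi\in{\cal S}(R^3)$ gives
\begin{equation*}
\langle F_{L,M_1},\psi\rangle=\frac{2\pi}{|V|}\sum_{k \in \Lambda_1}\frac{\mathrm{Re}\,\hat\psi(k)}{|k|^2},
\end{equation*}
which has the shape of a Riemann sum with cell volume $(2\pi)^3/|V|$ for $(2\pi)^{-2}\int\hat\psi(k)|k|^{-2}dk$. Since $1/|k|^2$ is locally integrable in $R^3$ and $\hat\psi$ is Schwartz, Proposition~4.3 is designed exactly to deliver the limit
\begin{equation*}
\frac{1}{(2\pi)^2}\int_{R^3}\frac{\hat\psi(k)}{|k|^2}\,dk=\frac{1}{2}\int_{R^3}\frac{\psi(y)}{|y|}\,dy,
\end{equation*}
the last step being the classical Fourier identity $\widehat{1/|y|}(k)=4\pi/|k|^2$.

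\textbf{Main obstacle.} The delicate point is the singularity of $1/|k|^2$ at the origin: $k=0$ is omitted from the sum, but neighbouring lattice points lie ever closer to it as the spacing $2\pi/L_j$ shrinks, and their contributions must be controlled so as neither to blow up nor to be discarded. Making this precise, uniformly in a way compatible with both iterated orders of limits, is exactly what Proposition~4.3 provides. The geometric assumption~\eqref{4.1} plays the role of ensuring that no one direction is swallowed by the other two, so that the torus genuinely approximates $R^3$ and the free Green's function emerges in the limit.
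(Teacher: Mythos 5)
Your reduction to the one-variable statement and your treatment of the ordering ``$L_j\to\infty$ inside, $M_1\to\infty$ outside'' essentially coincide with the paper's argument: the paper pairs with a test function, integrates by parts with $\langle D_x\rangle^2$ so that the summand is dominated by a non-increasing radial function $\phi(|k|)=C|k|^{-2}\langle k\rangle^{-2}$ with $r^2\phi(r)\in L^1\cap L^\infty$, applies Proposition 4.3, and identifies the limiting integral through the classical identity $(2\pi)^{-2}\int|k|^{-2}\mathrm{e}^{ik\cdot x}dk=1/(2|x|)$ (Lemma 4.2). One point you gloss over: for a fixed truncation the inner limit $L_j\to\infty$ can only produce the integral over the truncated region, and the outer limit must then remove the truncation (the paper does this by dominated convergence, working with the cutoff $\chi_0(\epsilon k)$ supported in $|k|\leq 1/\epsilon$). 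Note that with the literal definition $|s_j|\leq M_1$ of $\Lambda_1$ the truncated sum has a fixed finite number of terms, each of which tends to $0$ under \eqref{4.1}, so the cutoff has to be read as a cutoff in $|k|$ for this inner limit to be nontrivial; you should say which cutoff you are using and then actually perform the outer limit.

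The genuine gap is in the other ordering. The paper handles both orderings by one mechanism: after the $\langle D_x\rangle^2$ trick the sum is absolutely convergent uniformly in the cutoff, so the two iterated limits in \eqref{4.15} and \eqref{4.16} land on literally the same expression and the interchange is automatic. You instead propose an Ewald splitting $2\pi G_L=\frac{1}{2|y|}+R_L$ of the periodic Green's function and assert that $R_L$ is ``a smooth remainder that one can estimate directly'' and that \eqref{4.1} forces $R_L\to 0$ in ${\cal S}'(R^3)$. Neither claim is justified, and together they hide the entire difficulty. $R_L$ is periodic, hence neither smooth on $R^3$ (it carries $1/|y-m|$-type singularities at every nonzero period-lattice point $m$) nor decaying, so pairing it with a Schwartz function that is not compactly supported requires controlling a conditionally convergent image sum together with the neutralizing background. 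Moreover the reciprocal-space half of any Ewald decomposition is itself of the form $|V|^{-1}\sum_{k\neq 0}|k|^{-2}\mathrm{e}^{-|k|^2/(4\alpha^2)}\mathrm{e}^{ik\cdot y}$, i.e.\ exactly the kind of Riemann sum with an excluded $|k|^{-2}$ singularity at the origin whose convergence is the content of Proposition 4.3 and the place where \eqref{4.1} must enter; the splitting does not bypass that lemma, it only relocates it. As written, the first iterated limit in \eqref{4.2} rests on an unproven assertion.
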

    Let $\chi_0(k)$ be the function in $R^3$ defined by
\begin{equation}  \label{4.3}
\chi_0(k) :=
        \begin{cases}
            \begin{split}
              & 1,
                      \end{split}
         & |k| \leq 1 ,
                  \\
\begin{split}
        & 0,
             \end{split}
            & |k| >1.
        \end{cases}
\end{equation}
We first prove

\begin{lem}
Let $\epsilon > 0$.  Then we have
\begin{align} \label{4.4}
& \lim_{\epsilon \rightarrow 0}\frac{1}{(2\pi)^2}\sum_{j,l =1,j \not= l}^n
e_j e_l \int \frac{\cos k\cdot (x^{(j)} - x^{(l)})}{|k|^2}\chi_0(\epsilon 
k)dk \notag \\
& = \frac{1}{2} \sum_{j,l=1,j \not= l}^n \frac{e_j e_l}{\bigl| x^{(j)} -
x^{(l)}\bigr|} \quad \text{in}\ {\cal S'}(R^{3n}).
\end{align}
\end{lem}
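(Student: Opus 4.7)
By linearity in the pair $(j,l)$, it suffices to prove that for each fixed $(j,l)$ with $j\ne l$,
\begin{equation*}
   F_\epsilon(y) := \frac{1}{(2\pi)^2}\int_{R^3} \frac{\cos(k\cdot y)\chi_0(\epsilon k)}{|k|^2}\,dk \longrightarrow \frac{1}{2|y|} \quad \text{in } {\cal S'}(R^3)
\end{equation*}
as $\epsilon \to 0$, and then to transport this convergence to ${\cal S'}(R^{3n})$ via the linear surjection $\xvec \mapsto x^{(j)}-x^{(l)}$, whose pullback is a continuous map on tempered distributions. Summing over the $n(n-1)$ ordered pairs $(j,l)$ then yields the lemma.

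The approach is Fourier-analytic. Since $\chi_0(\epsilon\,\cdot\,)/|k|^2 \in L^1(R^3)$ for each fixed $\epsilon > 0$, the integral defining $F_\epsilon$ is a genuine Lebesgue integral and, after removing the factor $1/(2\pi)^2$, equals the Fourier transform of the truncation $T_\epsilon := \chi_0(\epsilon k)/|k|^2$ in the $k$-variable (the sine contribution dropping out by the evenness of $|k|^{-2}$). The key step is to verify that
\begin{equation*}
   T_\epsilon \longrightarrow \frac{1}{|k|^2} \quad \text{in } {\cal S'}(R^3) \text{ as } \epsilon\to 0.
\end{equation*}
This reduces to showing $\int \chi_0(\epsilon k)\varphi(k)/|k|^2\,dk \to \int \varphi(k)/|k|^2\,dk$ for every $\varphi \in {\cal S}(R^3)$, which follows from dominated convergence once one notes that $\varphi(k)/|k|^2 \in L^1(R^3)$: near the origin $|k|^{-2}$ is locally integrable in dimension three (one has $\int_{|k|\le R}|k|^{-2}\,dk = 4\pi R$), and at infinity the Schwartz decay of $\varphi$ dominates the polynomial $|k|^{-2}$.

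Continuity of the Fourier transform on ${\cal S'}$ then propagates the convergence to the spatial side. Combined with the classical identity that the Fourier transform of $1/|x|$ in ${\cal S'}(R^3)$ equals $4\pi/|k|^2$ --- equivalently, applying the Fourier transform twice yields $\int \cos(k\cdot y)/|k|^2\,dk = 2\pi^2/|y|$ as a tempered distribution in $y$ --- this gives $F_\epsilon(y) \to (2\pi^2)/((2\pi)^2|y|) = 1/(2|y|)$ in ${\cal S'}(R^3)$, as required.

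The sole delicate point is the low regularity of $1/|k|^2$ and $1/|y|$ at the origin, which prevents a naive $L^1$ exchange of limit and integral. The plan circumvents this by performing both limiting operations on the frequency side, where the multiplier $\varphi(k)/|k|^2$ supplies an integrable dominator, rather than attempting pointwise asymptotics of $F_\epsilon(y)$ in a neighborhood of $y = 0$.
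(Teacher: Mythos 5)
Your proof is correct and follows essentially the same route as the paper's: both rest on the classical identity $(2\pi)^{-2}\int \mathrm{e}^{ik\cdot x}|k|^{-2}dk = (2|x|)^{-1}$ in ${\cal S}'(R^3)$, justify removing the cutoff by dominated convergence on the frequency side (where $|k|^{-2}$ is locally integrable in dimension three and the Schwartz test function supplies the dominator), and then lift the three-dimensional statement to $R^{3n}$. The only difference is cosmetic: where you invoke continuity of the pullback under the linear surjection $\xvec \mapsto x^{(j)}-x^{(l)}$, the paper establishes the same fact by hand through the orthogonal change of variables $x'=(x-y)/\sqrt{2}$, $y'=(x+y)/\sqrt{2}$ and the density of tensor products $\psi_1(x')\psi_2(y')$ in ${\cal S}(R^6)$.
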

\begin{proof}
Let $x$ and $k$ be in $R^3$.  Then, it is well known that
\begin{equation} \label{4.5}
\frac{1}{(2\pi)^2} \int \frac{\mathrm{e}^{ik\cdot x}}{|k|^2}dk = \frac{1}{2|x|}
\quad \text{in}\ {\cal S'}(R^{3})
\end{equation} 
(cf. \S 5.9 in \cite{Lieb-Loss}).
\par
For the sake of simplicity we consider the case of $n = 2$.  Let $x = 
x^{(1)}$ and $y =
x^{(2)}$.  We will prove
\begin{align} \label{4.6}
& \lim_{\epsilon \rightarrow 0}\frac{1}{(2\pi)^2} \int 
\frac{\mathrm{e}^{ik\cdot
(x - y)}}{|k|^2}\chi_0(\epsilon k)dk
   =   \frac{1}{2|x - y|} \quad \text{in}\ {\cal S'}(R^{6}).
\end{align}
We write the operation as $<T,f>$ for $T \in \Sspace'$ and $f \in \Sspace$.
Let $\varphi (x,y) \in \Sspace(R^6)$.  Then from \eqref{4.6} we have
\begin{align*}
& \lim_{\epsilon \rightarrow 0}\left<\frac{1}{(2\pi)^2} \int 
\frac{\mathrm{e}^{ik\cdot
(x - y)}}{|k|^2}\chi_0(\epsilon k)dk, \varphi (x,y)\right>  \notag \\
& = \lim_{\epsilon \rightarrow 0} \Biggl\{\left<\frac{1}{(2\pi)^2} \int 
\frac{\cos k\cdot (x -
y)}{|k|^2}\chi_0(\epsilon k) dk,  \varphi (x,y)\right>   \\
&\qquad  + i \left<\frac{1}{(2\pi)^2} \int
\frac{\sin k\cdot (x - y)}{|k|^2}\chi_0(\epsilon k) dk, \varphi 
(x,y)\right>\Biggr\}
\notag \\
& = \lim_{\epsilon \rightarrow 0} \left<\frac{1}{(2\pi)^2} \int \frac{\cos 
k\cdot (x -
y)}{|k|^2}\chi_0(\epsilon k) dk,  \varphi (x,y)\right>   \\
&  =   \left<\frac{1}{2|x - y|}, \varphi (x,y)\right>.
\end{align*}
Consequently we obtain \eqref{4.4}. \par
    The equation
\eqref{4.6} is equivalent to
\begin{align} \label{4.7}
& \lim_{\epsilon \rightarrow 0}\left<\frac{1}{2\pi^2} \int 
\frac{\mathrm{e}^{ik\cdot
(x - y)}}{|k|^2}\chi_0(\epsilon k)dk, \varphi (x,y)\right>  \notag \\
&  = \iint   \frac{1}{|x - y|} \varphi (x,y)dxdy
\end{align}
for all $\varphi (x,y) \in \Sspace(R^6)$. We set $x' = (x - y)/\sqrt{2}$ 
and $y' = (x +
y)/\sqrt{2}$.
    Let $\psi_1(x')$ and $\psi_2(y')$ be in $\Sspace(R^3)$.  We take 
$\varphi(x,y) =
\tilde{\varphi}(x',y') := \psi_1(x')\psi_2(y')$ in the lhs of \eqref{4.7}. 
Then the lhs of
\eqref{4.7} is equal to
\begin{align*}
& \lim_{\epsilon \rightarrow 0}\frac{1}{2\pi^2} \iiint 
\frac{\mathrm{e}^{ik\cdot
(x - y)}}{|k|^2}\chi_0(\epsilon k)\psi_1(x')\psi_2(y')dkdx'dy'  \notag \\
& = \lim_{\epsilon \rightarrow 0} \frac{1}{2\pi^2} \int \psi_1(x') dx'
\int\frac{\mathrm{e}^{ik\cdot
\sqrt{2}x'}}{|k|^2} \chi_0(\epsilon k)dk \int \psi_2(y')dy',
\end{align*}
which is also equal to
\begin{equation*}
\int \frac{1}{\sqrt{2}|x'|}\psi_1(x') dx'
  \int \psi_2(y')dy' = \iint \frac{\tilde{\varphi}(x',y')}{\sqrt{2}|x'|}dx'dy'
= \iint \frac{\varphi (x,y)}{|x - y|}dxdy
\end{equation*}
from \eqref{4.5}.  So, \eqref{4.7} holds for $\varphi (x,y) =
\psi_1(x')\psi_2(y')$.  Since the set of all linear combinations of 
$\psi_1(x')\psi_2(y')$ for
all $\psi_1$ and $\psi_2$ in $\Sspace(R^3)$ is dense in 
$\Sspace(R^6_{x',y'})$, so
\eqref{4.7} holds for all $\varphi(x,y) \in \Sspace(R^6)$.  Hence we get 
\eqref{4.6}.
\end{proof}
\begin{pro}
Let $c \geq 0$ be a constant.  Let $\Phi(k)$ be continuous in
$R^3\backslash(\{0\}\cup\{|k|=c\})$.
   We suppose $|\Phi(k)| \leq \phi(|k|)\ (k \in R^3)$, where
$\phi(r)$ is non-increasing in $(0,\infty)$, and $r^2\phi(r)$ is in 
$L^1([0,\infty))$ and
bounded in $(0,\infty)$.  Then, $\left((2\pi)^3/|V|\right)\sum_{k \not= 
0}\Phi(k)$ is
convergent absolutely, where the sum of $k$  is taken over $\left(2\pi 
s_1/L_1,2\pi
s_2/L_2, 2\pi s_3/L_3 \right)
   \ (s_1,s_2,  s_3  = 0, \pm 1, \pm 2, \dotsc)$. We also get
\begin{equation}  \label{4.8}
\lim_{L_1,L_2,L_3 \rightarrow \infty} \frac{(2\pi)^3}{|V|}\sum_{k \not= 
0}\Phi(k)
= \int \Phi(k)dk
\end{equation}
under the condition \eqref{4.1}.
\end{pro}
\begin{proof}
   We write $L =
(L_1,L_2,L_3).$  Let's define the step function $\Phi_L (k)$ by
\begin{align*}
& \Phi_L (k) =  \Phi \left(\frac{2\pi s_1}{L_1},\frac{2\pi 
s_2}{L_2},\frac{2\pi s_3}{L_3}
\right), \quad k \in \Bigl(\frac{2\pi (s_1 -1)}{L_1},\frac{2\pi 
s_1}{L_1}\Bigr] \\
&\qquad \times \Bigl(\frac{2\pi (s_2 -1)}{L_2},\frac{2\pi s_2}{L_2}\Bigr] 
\times
\Bigl(\frac{2\pi (s_3 -1)}{L_3},\frac{2\pi s_3}{L_3}\Bigr], \\
& \Phi_L (k) =  \Phi \left(\frac{2\pi s_1}{L_1},-\frac{2\pi 
s_2}{L_2},\frac{2\pi s_3}{L_3}
\right), \quad k \in \Bigl(\frac{2\pi (s_1 -1)}{L_1},\frac{2\pi 
s_1}{L_1}\Bigr] \\
&\qquad \times \Bigl[-\frac{2\pi s_2}{L_2},-\frac{2\pi (s_2 -1)}{L_2}\Bigr) 
\times
\Bigl(\frac{2\pi (s_3 -1)}{L_3},\frac{2\pi s_3}{L_3}\Bigr]
\end{align*}
for $s_1, s_2, s_3 = 1,2,\dotsc$.  Then, for $k \in \bigl(2\pi (s_1 -1)/L_1,
2\pi s_1/L_1\bigr] \times \bigl(2\pi (s_2 -1)/L_2,2\pi s_2/L_2\bigr]
\times
\bigl(2\pi (s_3 -1)/L_3,2\pi s_3/L_3\bigr]$ we have
\begin{align*}
|\Phi_L(k)| & = \left|\Phi \left(\frac{2\pi s_1}{L_1},\frac{2\pi 
s_2}{L_2},\frac{2\pi s_3}{L_3}
\right)\right| \\
& \leq \phi \left(\left| \left(\frac{2\pi s_1}{L_1},\frac{2\pi 
s_2}{L_2},\frac{2\pi s_3}{L_3}
\right)\right|\right) \leq \phi(|k|)
\end{align*}
since $\phi(r)$ is non-increasing.  In the same way, for $k \in \bigl(2\pi 
(s_1 -1)/L_1,
2\pi s_1/L_1\bigr] \times \bigl[-2\pi s_2/L_2, -2\pi (s_2 -1)/L_2\bigr)
\times
\bigl(2\pi (s_3 -1)/L_3,2\pi s_3/L_3\bigr]$ we get
\begin{equation} \label{4.9}
|\Phi_L(k)| \leq \phi(|k|).
\end{equation}
In the same way as in the above we can define the step function $\Phi_L(k)$ 
for all
$k \in R^3\backslash\{0\}$
such that \eqref{4.9} and \eqref{4.10} below hold. It holds that
\begin{equation}  \label{4.10}
  \frac{(2\pi)^3}{|V|}\sum_{k \not= 0}\Phi(k)
= \int_{R^3} \Phi_L(k)dk + \frac{(2\pi)^3}{|V|}\sum_{k \not= 0, s_1s_2s_3 = 
0}\Phi(k).
\end{equation} \par
   For a short while we suppose $L_1 \leq L_2 \leq L_3$.  Since $\phi(r)$ is 
non-increasing, it
holds that for $s_1 \geq 2$  we have
\begin{align*}
&\phi \left(\left| \left(\frac{2\pi s_1}{L_1},0,0
\right)\right|\right)
\leq \phi \left(\left| \left(\frac{2\pi (s_1 
-1)}{L_1},\frac{2\pi}{L_2},\frac{2\pi}{L_3}
\right)\right|\right) \leq \phi(|k|), \\
&\qquad  k \in \Bigl(\frac{2\pi (s_1 -2)}{L_1},\frac{2\pi (s_1-1)}{L_1}\Bigr]
  \times \Bigl(0,\frac{2\pi}{L_2}\Bigr] \times
\Bigl(0,\frac{2\pi}{L_3}\Bigr]
\end{align*}
and also for $s_1 \geq 2$ and $s_2 \geq 1$
\begin{align*}
&\phi \left(\left| \left(\frac{2\pi s_1}{L_1},\frac{2\pi s_2}{L_2},0
\right)\right|\right)
\leq \phi \left(\left| \left(\frac{2\pi (s_1 -1)}{L_1},\frac{2\pi 
s_2}{L_2},\frac{2\pi}{L_3}
\right)\right|\right) \leq \phi(|k|), \\
&\qquad  k \in \Bigl(\frac{2\pi (s_1 -2)}{L_1},\frac{2\pi (s_1-1)}{L_1}\Bigr]
  \times \Bigl(\frac{2\pi (s_2-1)}{L_2},\frac{2\pi s_2}{L_2}\Bigr] \times
\Bigl(0,\frac{2\pi}{L_3}\Bigr].
\end{align*}
For $s_2 \geq 2$ we also have
\begin{align*}
&\phi \left(\left| \left(\frac{2\pi}{L_1},\frac{2\pi s_2}{L_2},0
\right)\right|\right)
\leq \phi \left(\left| \left(\frac{2\pi}{L_1},\frac{2\pi (s_2 
-1)}{L_2},\frac{2\pi}{L_3}
\right)\right|\right) \leq \phi(|k|), \\
&\qquad  k \in \Bigl(0,\frac{2\pi}{L_1}\Bigr]
  \times \Bigl(\frac{2\pi (s_2-2)}{L_2},\frac{2\pi (s_2-1)}{L_2}\Bigr] \times
\Bigl(0,\frac{2\pi}{L_3}\Bigr].
\end{align*}
Consequently we get
\begin{align}  \label{4.11}
  &\frac{(2\pi)^3}{|V|}\sum_{k \not= 0, s_3 =0}|\Phi (k)|
\leq \frac{(2\pi)^3}{|V|}\sum_{k \not= 0, s_3 = 0}\phi(|k|) \leq
\frac{(2\pi)^3}{|V|}\sum_{k \not= 0, s_3 = 0,s_1,s_2 = 0,\pm 1}\phi(|k|) 
\notag \\
&\qquad  + \frac{(2\pi)^3}{|V|}\sum_{k \not= 0, s_3 = s_1 =0}\phi(|k|)
+ 3\int_{0\leq k_3 \leq (2\pi)/L_3}\phi(|k|)dk.
\end{align}
We take a constant $m \geq 1$ such that $L_3 \geq mL_1 \geq L_2$.  We add 
the refinement
$\bigl\{\bigl((2\pi)/(mL_1),(2\pi s_2)/L_2,(2\pi s_3)/L_3 \bigr); 
s_2,s_3 = 0,\pm 1, \pm
2,\dotsc \bigr\}$ to $\bigl\{\bigl((2\pi s_1)/L_1, \\
(2\pi s_2)/L_2,(2\pi s_3)/L_3 \bigr); s_1,s_2,s_3= 0,\pm 1, \pm 2,\dotsc 
\bigr\}$.  Then,
for $s_2 \geq 2$ noting
\[
\phi \left(\left| \left(0,\frac{2\pi s_2}{L_2},0\right)\right|\right)
\leq \phi \left(\left| \left(\frac{2\pi}{mL_1},\frac{2\pi(s_2 - 
1)}{L_2},\frac{2\pi}{L_3}
\right)\right|\right),
\]
as in the proof of \eqref{4.11} we have
\begin{align*}
\frac{(2\pi)^3}{m|V|}\sum_{k \not= 0, s_3= s_1 =0}\phi (|k|)
  &\leq
\int_{0\leq k_1 \leq (2\pi)/(mL_1), 0\leq k_3 \leq (2\pi)/L_3}\phi(|k|)dk  \\
& \leq \frac{1}{m}\int_{0\leq k_3 \leq (2\pi)/L_3}\phi(|k|)dk.
\end{align*}
Consequently from \eqref{4.11} we get
\begin{align}  \label{4.12}
  &\frac{(2\pi)^3}{|V|}\sum_{k \not= 0, s_3 =0}|\Phi (k)|
  \leq
\frac{(2\pi)^3}{|V|}\sum_{k \not= 0, s_3 = 0,s_1,s_2 = 0,\pm 1}\phi(|k|) 
\notag \\
&\qquad
+ 4\int_{0\leq k_3 \leq (2\pi)/L_3}\phi(|k|)dk.
\end{align}
\par
    Consider the case of $L_1 \leq L_2$ generally.  We add the refinement
$\bigl\{\bigl((2\pi s_1)/L_1,\\
(2\pi s_2)/L_2,(2\pi)/(mL_3) \bigr); s_1,s_2 = 0,\pm 1, \pm
2,\dotsc \bigr\}$ to $\bigl\{\bigl((2\pi s_1)/L_1,
(2\pi s_2)/L_2,\\
(2\pi s_3)/L_3 \bigr); s_1,s_2,s_3 = 0,\pm 1, \pm 2,\dotsc \bigr\}$,
where  $m \geq 1$ is a constant such that $L_2 \leq mL_3$.  Then, as in
the proof of \eqref{4.12} we can also prove \eqref{4.12} for this case. 
Hence, for general
$L_1,L_2$ and $L_3$ we obtain
\begin{align}  \label{4.13}
  &\frac{(2\pi)^3}{|V|}\sum_{k \not= 0, s_j =0}|\Phi (k)|
  \leq
\frac{(2\pi)^3}{|V|}\sum_{k \not= 0,s_1,s_2,s_3 = 0,\pm 1}\phi(|k|) \notag \\
&\qquad
+ 4\int_{0\leq k_j \leq (2\pi)/L_j}\phi(|k|)dk\quad (j = 1,2,3).
\end{align}
\par
   From \eqref{4.9}, \eqref{4.10} and \eqref{4.13} we can prove that
$\sum_{k \not= 0}|\Phi (k)|$ is convergent.  It follows from the 
assumptions that
\[
\phi(|k|) \leq \text{Const.} \frac{1}{|k|^2}, \quad k \not= 0.
\]
So we see that $\bigl((2\pi)^3/|V|\bigr)\sum_{k \not= 0,s_1,s_2,s_3 = 0,\pm 
1}\phi(|k|)$ tends
to zero as $L_1, L_2$ and $L_3$ tend to the infinity under the condition 
\eqref{4.1}. We note
that we first used the condition \eqref{4.1} here. Consequently, from 
\eqref{4.13} and the
assumptions we have
\[
\lim_{L_1,L_2,L_3 \rightarrow \infty}\frac{(2\pi)^3}{|V|}\sum_{k \not= 
0,s_j = 0}\Phi(k) = 0,
\quad j = 1,2,3
\]
under \eqref{4.1}.  Hence, noting \eqref{4.9}, from \eqref{4.10} we obtain 
\eqref{4.8} by
means of the Lebesgue dominated convergence theorem.
\end{proof}
\begin{rem}
The condition \eqref{4.1} is necessary for $\Phi(k)$ in Proposition 4.3 to 
satisfy
\eqref{4.8} in general.  In fact, for example, suppose 
$\overline{\lim}_{L_1,L_2,L_3
\rightarrow
\infty} L_1/(L_2L_3) > 0$.  Let
$0
\leq
\chi_1(k)
\in
\Cspace (R^3)$ with compact support such that
$\chi_1(k) = 1\ (|k| \leq 1)$.  We consider $\Phi(k) = |k|^{-2}\chi_1(k)\ 
( \geq 0)$.  Then,
from \eqref{4.9} and \eqref{4.10}  we have
\begin{align*}
\overline{\lim}_{L_1,L_2,L_3
\rightarrow
\infty} \frac{(2\pi)^3}{|V|}\sum_{k \not= 0}\Phi (k) & \geq \int \Phi(k)dk +
\overline{\lim}_{L_1,L_2,L_3
\rightarrow
\infty} \frac{(2\pi)^3}{|V|}\Phi\left(\frac{2\pi}{L_1},0,0\right) \\
& > \int \Phi(k)dk.
\end{align*}
So, \eqref{4.8} doesn't hold.
\end{rem}
    \par
Now we will prove Theorem 4.1. For the sake of simplicity let $n=2$. Let 
$\chi_0(k)$ be the
function defined by \eqref{4.3}.  We write $x = x^{(1)}$ and $y = x^{(2)}$. 
We take $\varphi
(x,y)
\in
\Sspace (R^6)$.  Then, we have
\begin{align} \label{4.14}
& \left<\frac{(2\pi)^3}{|V|} \sum_{k\not=0} \frac{\cos k\cdot (x - 
y)}{|k|^2}\chi_0(\epsilon
k),
\varphi (x,y)\right>
\notag \\ & = \frac{(2\pi)^3}{|V|} \sum_{k\not=0} \iint \frac{\cos k\cdot (x -
y)}{|k|^2}\chi_0(\epsilon k) \varphi (x,y)dxdy
\notag \\
&  = \frac{(2\pi)^3}{|V|} \sum_{k\not=0} \iint \frac{\cos k\cdot (x -
y)}{|k|^2<k>^2}\chi_0(\epsilon k) <D_x>^2\varphi (x,y)dxdy,
\end{align}
where $<D_x>^2 = \left(1 - \sum_{j=1}^n \partial_{x_j}^2\right)$.  Let 
$\Phi (k) =
|k|^{-2}<k>^{-2}\displaystyle{\iint} \cos k\cdot (x - y) \\ \times<D_x>^2 
\varphi(x,y)dxdy$
and
\begin{equation*}
   \phi(|k|) := \frac{1}{|k|^2<k>^2} \iint \left| <D_x>^2\varphi (x,y) 
\right|dxdy.
\end{equation*}
Then from \eqref{4.14} Proposition 4.3 shows
\begin{align} \label{4.15}
& \lim_{L_1,L_2,L_3 \rightarrow \infty} \lim_{\epsilon \rightarrow 
0}\left<\frac{(2\pi)^3}{|V|}
\sum_{k\not=0}
\frac{\cos k\cdot (x - y)}{|k|^2}\chi_0(\epsilon k), \varphi (x,y)\right> 
\notag \\
& = \lim_{L_1,L_2,L_3 \rightarrow \infty}\frac{(2\pi)^3}{|V|} 
\sum_{k\not=0} \iint
\frac{\cos k\cdot (x - y)}{|k|^2<k>^2}<D_x>^2\varphi (x,y)dxdy
\notag\\
&  = \int \frac{1}{|k|^2<k>^2} dk \iint \bigl(\cos k\cdot (x - y)\bigr) 
<D_x>^2\varphi
(x,y)dxdy.
\end{align}
In the same way from \eqref{4.14} we also have
\begin{align} \label{4.16}
& \lim_{\epsilon \rightarrow 0} \lim_{L_1,L_2,L_3 \rightarrow \infty}
\left <\frac{(2\pi)^3}{|V|}
\sum_{k\not=0}
\frac{\cos k\cdot (x - y)}{|k|^2}\chi_0(\epsilon k), \varphi (x,y)\right> 
\notag \\
&  = \int \frac{1}{|k|^2<k>^2} dk \iint \bigl(\cos k\cdot (x - y)\bigr) 
<D_x>^2\varphi
(x,y)dxdy.
\end{align}
  On the other hand, Lemma 4.2 and Proposition 4.3 indicate
\begin{align} \label{4.17}
& \lim_{\epsilon \rightarrow 0} \lim_{L_1,L_2,L_3 \rightarrow \infty}
\left<\frac{(2\pi)^3}{|V|}
\sum_{k\not=0}
\frac{\cos k\cdot (x - y)}{|k|^2}\chi_0(\epsilon k), \varphi (x,y)\right> 
\notag \\
& = \lim_{\epsilon \rightarrow 0} \iiint
\frac{\cos k\cdot (x - y)}{|k|^2}\chi_0(\epsilon k) \varphi (x,y) dxdydk 
\notag \\
& = 2\pi^2 \iint \frac{\varphi (x,y)}{|x - y|}dxdy.
\end{align}
Hence we obtain \eqref{4.2} together with \eqref{4.15} and \eqref{4.16}.
\begin{rem}
   Let $\chi (k)  \in \Sspace(R^3)$ such that $\chi(0) = 1$ and $\chi(-k) = 
\chi(k)$.
We take the limit of $L_j\  (j = 1,2,3)$ under the condition \eqref{4.1}. 
Then it holds that
\begin{align} \label{4.18}
& \lim_{\epsilon \rightarrow 0}\lim_{L_1,L_2,L_3 \rightarrow \infty}
\frac{2\pi}{|V|}
\sum_{k \not= 0}\sum_{j,l =1,j \not= l}^n \chi(\epsilon k)\frac{e_j e_l\cos 
k\cdot (x^{(j)} -
x^{(l)})}{|k|^2}
\notag
\\
& = \frac{1}{2} \sum_{j,l=1,j \not= l}^n \frac{e_j e_l}{\bigl| x^{(j)} -
x^{(l)}\bigr|}
\end{align}
pointwise for $\xvec \in R^{3n}$ such that $x^{(j)} - x^{(l)} \not= 0\ (j,l 
= 1,2,\dotsc,n, j
\not= l)$.  The proof is easy.  Consider the case of $n = 2$ and $e_1 = e_2 
= 1$.  Let's
write
$x = x^{(1)}$ and $y = x^{(2)}$.  We take $\chi_1(k) \in C^{\infty}(R^3)$ 
such that $\chi_1(k)
= 1\ (|k| \leq 1)$ and $\chi_1(k) = 0\ (|k| \geq 2)$.  Then, Proposition 
4.3 says for $x \not=
y$ that the lhs of
\eqref{4.18} is equal to
\begin{align} \label{4.19}
& \frac{1}{(2\pi)^2}\lim_{\epsilon \rightarrow 0}
\int \chi(\epsilon k)\frac{\cos k\cdot(x - y)}{|k|^2} dk
\notag
\\
& = \frac{1}{(2\pi)^2}\lim_{\epsilon \rightarrow 0} \Big\{
\int \chi_1( k)\chi(\epsilon k)\frac{\cos k\cdot(x - y)}{|k|^2} dk  \notag \\
& - \frac{1}{|x - y|^2} \int \big(\cos k\cdot(x - y)\big)\Delta_k\left\{ 
\big(1 - \chi_1(
k)\big) \chi(\epsilon k)|k|^{-2}\right\} dk \Big\} \notag \\
& = \frac{1}{(2\pi)^2} \Big\{
\int \chi_1( k)\frac{\cos k\cdot(x - y)}{|k|^2} dk  \notag \\
& - \frac{1}{|x - y|^2} \int \big(\cos k\cdot(x - y)\big)\Delta_k\left\{ 
\big(1 - \chi_1(
k)\big) |k|^{-2}\right\} dk \Big\}
\end{align}
pointwise, where $\Delta_k$ denotes the Laplacian operator with respect to 
$k \in R^3$ and we
used
$|\epsilon \chi'(\epsilon k)| = \epsilon^{1/3}|k|^{-2/3}(\epsilon 
|k|)^{2/3}|\chi'(\epsilon k)|
\leq
\text{Const.} \epsilon^{1/3}|k|^{-2/3}$.  Since we have $\big|\Delta_k \left\{
\big(1 - \chi_1(
k)\big)\chi(\epsilon k) |k|^{-2}\right\}\big| \leq C <k>^{-3-1/3}$ with a 
constant $C$
independent of
$\epsilon$, so we can prove  that the equation \eqref{4.19} is also
true in the distribution sense
$\Sspace'(R^6)$.  On the other hand, we see as in the proof of Lemma 4.2 
that the lhs of
\eqref{4.19} is equal to
$1/(2|x - y|)$ in $\Sspace'(R^6)$. Consequently we can prove that 
\eqref{4.19} is equal to
$1/(2|x - y|)$.  Hence \eqref{4.18} holds pointwise.
\end{rem}
\section{The expression of the vacuum and the states of photons}
In this section we express the vacuum and the states of photons of 
momentums and polarization
states by means of  concrete functions in terms of variables $a_{\Lambda'}$
consisting of the Fourier coefficients of the electromagnetic  potential. 
In Problem 9-8 of
\cite{Feynman-Hibbs} only the vacuum and the state of a photon of momentum 
$\hbar k$ and
polarization state
$\mathrm{l}$ are expressed concretely.  In this section we generalize this 
result  in
\cite{Feynman-Hibbs} for the general states of photons.  In physics the 
vacuum and the state of
photons are not considered concretely but considered abstractly  (cf. \cite{Sakurai,Swanson}). We also 
note that the state
of photons of given momentums and polarization states   can't be considered 
in the study for
QED models on  the Fock spaces (cf.
\cite{Gustafson-Sigal,Hiroshima,Spohn}).
\par
To write down the vacuum and the state of photons concretely, we will 
introduce the  creation
operators and the annihilation operators concretely.
Let's define
\begin{align} \label{5.1}
\hat{a}_{\mathrm{l}k}^{(\mathrm{i})} &:= i\sqrt{\frac{|V|}{2\hbar c|k|}}
\left(\frac{\hbar}{i}
\frac{\partial}{\partial
\alki} - i\frac{c|k|}{|V|}\alki\right) \notag \\
& = \sqrt{\frac{|V|}{2\hbar c|k|}}
\left(\hbar
\frac{\partial}{\partial
\alki} + \frac{c|k|}{|V|}\alki\right)
\end{align}
for $k \in \Lambda$ and $\mathrm{i,l} = 1,2$.  From \eqref{2.13} we have
\begin{equation*}
\hat{a}_{\mathrm{l}-k}^{(\mathrm{1})} = 
-\hat{a}_{\mathrm{l}k}^{(\mathrm{1})}, \
\hat{a}_{\mathrm{l}-k}^{(\mathrm{2})} = \hat{a}_{\mathrm{l}k}^{(\mathrm{2})}.
\end{equation*}
Let $\hat{a}_{\mathrm{l}k}^{(\mathrm{i})\dag}$ denote the adjoint operator of
$\hat{a}_{\mathrm{l}k}^{(\mathrm{i})}$. Then we know that the commutator 
relations
\begin{equation*}
[\hat{a}_{\mathrm{l}k}^{(\mathrm{i})}, 
\hat{a}_{\mathrm{l'}k'}^{(\mathrm{i'})\dag}] =
\delta_{\mathrm{i'}\mathrm{i}}\delta_{\mathrm{l'}\mathrm{l}}\delta_{kk'},
\quad  [\hat{a}_{\mathrm{l}k}^{(\mathrm{i})}, 
\hat{a}_{\mathrm{l'}k'}^{(\mathrm{i'})}] = 0
\end{equation*}
hold for $k$ and $k'$ in $\Lambda'$ (cf. \S 34 in \cite{Dirac}).
We define the operator $\hat{a}_{\mathrm{l}k}$ for $k \in \Lambda$ and 
$\mathrm{l} = 1,2$  by
\begin{equation} \label{5.2}
\hat{a}_{\mathrm{l}k} := \frac{\hat{a}_{\mathrm{l}k}^{(\mathrm{1})} -
i\hat{a}_{\mathrm{l}k}^{(\mathrm{2})}}{\sqrt{2}}
\end{equation}
(cf. \eqref{2.11}).
We call $\hat{a}_{\mathrm{l}k}$  the annihilation operator and
$\hat{a}_{\mathrm{l}k}^{\dag}$ the creation operator. We can easily see 
from the
commutator relations for $\hat{a}_{\mathrm{l}k}^{(\mathrm{i})}$ that the 
operators
$\hat{a}_{\mathrm{l}k}$ and
$\hat{a}_{\mathrm{l}k}^{\dag}$ also satisfy the commutator relations
\begin{equation} \label{5.3}
[\hat{a}_{\mathrm{l}k}, \hat{a}_{\mathrm{l'}k'}^{\dag}] =
\delta_{\mathrm{l'}\mathrm{l}}\delta_{kk'},
\quad  [\hat{a}_{\mathrm{l}k}, \hat{a}_{\mathrm{l'}k'}] = 0
\end{equation}
for $k$ and $k'$ in $\Lambda$ (cf. (2.26) in
\cite{Sakurai}).  It follows from the commutator relations \eqref{5.3} that 
we have
\begin{equation} \label{5.4}
\hat{a}_{\mathrm{l}k} (\hat{a}_{\mathrm{l}k}^{\dag})^{n'}
- (\hat{a}_{\mathrm{l}k}^{\dag})^{n'} \hat{a}_{\mathrm{l}k}
= n' (\hat{a}_{\mathrm{l}k}^{\dag})^{n'-1}
\end{equation}
  (cf. \S 34 in \cite{Dirac}). Then we get the following expression as in 
physics (cf. p.198 in
\cite{Gustafson-Sigal}, (2.60) and (2.64) in \cite{Sakurai}).
\begin{pro}
   We can write the last term of $H(t)$ defined by \eqref{3.10} as
\begin{align} \label{5.5}
H_{\text{rad}} &:=  \sum_{k \in \Lambda',\mathrm{l}}
\sum_{\mathrm{i}=1}^2\left\{
\frac{|V|}{2} \left(\frac{\hbar}{i} \frac{\partial}{\partial
\alki}\right)^2 +\frac{(c|k|)^2}{2|V|} \left(\alki\right)^2 -
\frac{\hbar c|k|}{2}
\right\} \notag \\
& = \sum_{k \in \Lambda,\mathrm{l}} \hbar c|k|\hat{a}_{\mathrm{l}k}^{\dag}
\hat{a}_{\mathrm{l}k}.
\end{align}
The vector potential $A(x,a_{\Lambda'_2})$ defined by \eqref{2.9} or \eqref{2.14}, where
the sum of $k$ is taken over $\Lambda_2$, is given by the expression
\begin{equation} \label{5.6}
A(x,a_{\Lambda'_2}) = \sqrt{\frac{4\pi\hbar}{|V|}}c \sum_{k \in
\Lambda_2}\sum_{\mathrm{l}=1}^2 \frac{1}{\sqrt{2c|k|}}
\bigl( \hat{a}_{\mathrm{l}k} \mathrm{e}^{ik\cdot x}  +
   \hat{a}_{\mathrm{l}k}^{\dag} \mathrm{e}^{-ik\cdot x} \bigr)
\overrightarrow{e_{\mathrm{l}}}(k).
\end{equation}
\end{pro}
\begin{proof}
Since from \eqref{5.1} and \eqref{5.2} we have
\begin{align*}
& \hbar c|k|(\hat{a}_{\mathrm{l}k}^{\dag} \hat{a}_{\mathrm{l}k} +
\hat{a}_{\mathrm{l}-k}^{\dag} \hat{a}_{\mathrm{l}-k}) \\
& = \frac{\hbar c|k|}{2}\left\{\left(\hat{a}_{\mathrm{l}k}^{(1)\dag} +
i\hat{a}_{\mathrm{l}k}^{(2)\dag}\right)
\left(\hat{a}_{\mathrm{l}k}^{(1)} - i\hat{a}_{\mathrm{l}k}^{(2)}\right)
+ \left(-\hat{a}_{\mathrm{l}k}^{(1)\dag} +
i\hat{a}_{\mathrm{l}k}^{(2)\dag}\right)
\left(-\hat{a}_{\mathrm{l}k}^{(1)} - 
i\hat{a}_{\mathrm{l}k}^{(2)}\right)\right\}\\
& = \hbar 
c|k|\left(\hat{a}_{\mathrm{l}k}^{(1)\dag}\hat{a}_{\mathrm{l}k}^{(1)} +
\hat{a}_{\mathrm{l}k}^{(2)\dag}\hat{a}_{\mathrm{l}k}^{(2)}\right) \\
& = \sum_{\mathrm{i}=1}^2\left\{
\frac{|V|}{2} \left(\frac{\hbar}{i} \frac{\partial}{\partial
\alki}\right)^2 +\frac{(c|k|)^2}{2|V|} (\alki)^2 -
\frac{\hbar c|k|}{2}
\right\}
\end{align*}
for $k \in \Lambda$, so we get \eqref{5.5}. \par
   From \eqref{5.1} and \eqref{5.2} we have
\begin{align*}
& \hat{a}_{\mathrm{l}k}\mathrm{e}^{ik\cdot x} +
\hat{a}_{\mathrm{l}k}^{\dag}\mathrm{e}^{-ik\cdot x} \\
& = \frac{1}{\sqrt{2}}\Bigl\{\left(\hat{a}_{\mathrm{l}k}^{(1)} +
\hat{a}_{\mathrm{l}k}^{(1)\dag}\right)\cos k\cdot x -
i \left(\hat{a}_{\mathrm{l}k}^{(2)} - 
\hat{a}_{\mathrm{l}k}^{(2)\dag}\right)\cos k\cdot x \\
& + i\left(\hat{a}_{\mathrm{l}k}^{(1)} -
\hat{a}_{\mathrm{l}k}^{(1)\dag}\right)\sin k\cdot x
+ \left(\hat{a}_{\mathrm{l}k}^{(2)} + 
\hat{a}_{\mathrm{l}k}^{(2)\dag}\right)\sin k\cdot
x \Bigr\} \\
& = \sqrt{\frac{|V|}{\hbar 
c|k|}}\Biggl(\frac{c|k|}{|V|}a_{\mathrm{l}k}^{(1)}\cos k\cdot x
+ \frac{c|k|}{|V|}a_{\mathrm{l}k}^{(2)}\sin k\cdot x \\
& -i\hbar (\cos k\cdot x) \frac{\partial}{\partial a_{\mathrm{l}k}^{(2)}}
+  i\hbar (\sin k\cdot x) \frac{\partial}{\partial 
a_{\mathrm{l}k}^{(1)}}\Biggr).
\end{align*}
So, it is shown from \eqref{2.8} and \eqref{2.13} that
\begin{align*}
& \sum_{k \in \Lambda_2}\frac{1}{\sqrt{2c|k|}} 
\left(\hat{a}_{\mathrm{l}k}\mathrm{e}^{ik\cdot
x} + \hat{a}_{\mathrm{l}k}^{\dag}\mathrm{e}^{-ik\cdot x}\right)
\overrightarrow{e_{\mathrm{l}}}(k) \\
& = \sum_{k \in \Lambda_2}\frac{1}{\sqrt{2\hbar|V|}} 
\left(a_{\mathrm{l}k}^{(1)}\cos k\cdot x +
a_{\mathrm{l}k}^{(2)}\sin k\cdot x\right)
\overrightarrow{e_{\mathrm{l}}}(k).
\end{align*}
Hence, we see that the rhs of \eqref{5.6} is equal to
\begin{equation*}
\frac{\sqrt{4\pi}}{|V|}c\sum_{k \in 
\Lambda_2}\sum_{\mathrm{l}=1}^2\frac{1}{\sqrt{2}}
\left(a_{\mathrm{l}k}^{(1)}\cos k\cdot x + a_{\mathrm{l}k}^{(2)}\sin k\cdot 
x\right)
\overrightarrow{e_{\mathrm{l}}}(k),
\end{equation*}
which is equal to the lhs of \eqref{5.6} from \eqref{2.14}.
\end{proof}
\par
We know
\[
\int_{-\infty}^{\infty} \mathrm{e}^{-a\theta^2}d\theta = \sqrt{\frac{\pi}{a}}
\]
for a constant $a > 0$.  So, we can easily see from \eqref{5.2} and 
\eqref{5.5} that
\begin{equation} \label{5.7}
\Psi_0 (a_{\Lambda'}):= \prod_{k \in 
\Lambda',\mathrm{l}}\sqrt{\frac{c|k|}{\pi\hbar |V|}}
\exp
\left\{ - \frac{c|k|}{2\hbar |V|}\left(a_{\mathrm{l}k}^{(\mathrm{1})^2} +
a_{\mathrm{l}k}^{(\mathrm{2})^2}\right) \right\}
\end{equation}
is the normal ground state of $H_{\text{rad}}$, called vacuum, whose energy is
$0$, i.e.
\begin{equation} \label{5.8}
H_{\text{rad}}\Psi_0 = 0
\end{equation}
and that we have
\begin{equation} \label{5.9}
\hat{a}_{\mathrm{l}k}^{\dag} \Psi_0 = \sqrt{\frac{2c|k|}{\hbar |V|}}
a_{\mathrm{l}k}^*
\Psi_0, \  \hat{a}_{\mathrm{l}k} \Psi_0 = 0\quad (k \in \Lambda)
\end{equation}
(cf. \S 8-1, (9-43)  and Problem 9-8 in \cite{Feynman-Hibbs}).  We know that
the eigenvalue $0$ of \eqref{5.8} is simple  (cf.
Theorem 3.4 in Chapter 3 of \cite{Berezin-Shubin}).
\par
The function
$\Psi_{n'\mathrm{l}k} (a_{\Lambda'}):= (\hat{a}_{\mathrm{l}k}^{\dag})^{n'}
\Psi_0(a_{\Lambda'})\
(k \in \Lambda, n' = 0,1,2,\dotsc)$, which can be written  concretely from
\eqref{5.1}, \eqref{5.2} and \eqref{5.7}, expresses the state of
$n'$ photons of  momentum $\hbar k$ and  polarization state
$\mathrm{l}$ (cf. \S 9-2 in \cite{Feynman-Hibbs} and \S 2-2 in 
\cite{Sakurai}) and
satisfies
\begin{equation*}
\left(\sum_{k \in \Lambda, \mathrm{l}} \hat{a}_{\mathrm{l}k}^{\dag}
\hat{a}_{\mathrm{l}k}\right)
\Psi_{n'\mathrm{l'}k'} = n' \Psi_{n'\mathrm{l'}k'},
\end{equation*}
\begin{equation*}
\left(\sum_{k \in \Lambda} \hbar
k\hat{a}_{\mathrm{l}k}^{\dag}\hat{a}_{\mathrm{l}k} \right)
\Psi_{n'\mathrm{l'}k'} = n'(\hbar k') \Psi_{n'\mathrm{l'}k'}
\end{equation*}
and
\begin{equation*}
H_{\text{rad}} \Psi_{n'\mathrm{l'}k'}  =
n' (\hbar c|k'|)\Psi_{n'\mathrm{l'}k'}
\end{equation*}
from \eqref{5.4}, \eqref{5.5} and \eqref{5.9}.  The operators $\sum_{k \in 
\Lambda, \mathrm{l}}
\hat{a}_{\mathrm{l}k}^{\dag}
\hat{a}_{\mathrm{l}k}$ and $\sum_{k \in \Lambda} \hbar
k\hat{a}_{\mathrm{l}k}^{\dag}\hat{a}_{\mathrm{l}k}$ are called the total
number operator and the momentum operator, respectively (cf. (2.68) and 
(2.80) in
\cite{Sakurai}).
In the same way, $\prod_{k \in \Lambda,\mathrm{l}}
(\hat{a}_{\mathrm{l}k}^{\dag})^{n'(\mathrm{l},k)}\Psi_0(a_{\Lambda'})\ 
(n'(\mathrm{l},k) =
0,1,\dotsc)$ denotes the state of
$n'(\mathrm{l},k)$ photons of momentum $\hbar k$ and polarization state 
$\mathrm{l}$.
Then, setting $\Psi(a_{\Lambda'}) = \prod_{k \in \Lambda,\mathrm{l}}
(\hat{a}_{\mathrm{l}k}^{\dag})^{n'(\mathrm{l},k)}\Psi_0(a_{\Lambda'})$, we get
\begin{equation} \label{5.10}
\left(\sum_{k \in \Lambda, \mathrm{l}} \hat{a}_{\mathrm{l}k}^{\dag}
\hat{a}_{\mathrm{l}k}\right)
\Psi = \left(\sum_{k \in \Lambda, \mathrm{l}} n'(\mathrm{l},k) \right)\Psi,
\end{equation}
\begin{equation} \label{5.11}
\left(\sum_{k \in \Lambda} \hbar
k\hat{a}_{\mathrm{l}k}^{\dag}\hat{a}_{\mathrm{l}k} \right)
\Psi = \left(\sum_{k \in \Lambda, \mathrm{l}} n'(\mathrm{l},k)\hbar k 
\right)\Psi
\end{equation}
and
\begin{equation} \label{5.12}
H_{\text{rad}} \Psi =
\left(\sum_{k \in \Lambda, \mathrm{l}} n'(\mathrm{l},k)\hbar c |k| \right)\Psi.
\end{equation}
\par
   The family
\[
\left\{ \prod_{k \in \Lambda', \mathrm{l},\mathrm{i}}
\left(\hat{a}_{\mathrm{l}k}^{\mathrm{(i)}\dag}\right)^{n'(\mathrm{l},k,
\mathrm{i})}
\Psi_0
\right\}_{n'(\mathrm{l},k,\mathrm{i}) = 0}^{\infty}
\]
makes a complete orthogonal system in $L^2(R^{4N})$ (cf. Theorem
3.1 in Chapter 3 of \cite{Berezin-Shubin} and \S 34 in \cite{Dirac}).  We have
\begin{equation*}
\hat{a}_{\mathrm{l}k}^{(1)} = \frac{\hat{a}_{\mathrm{l}k} -
\hat{a}_{\mathrm{l}-k}}{\sqrt{2}}, \quad
\hat{a}_{\mathrm{l}k}^{(2)} = \frac{i(\hat{a}_{\mathrm{l}k} +
\hat{a}_{\mathrm{l}-k})}{\sqrt{2}}
\end{equation*}
from \eqref{2.13} and \eqref{5.2}.    So we see together with \eqref{5.4} and
the second equation in \eqref{5.9} that
the family
\begin{equation} \label{5.13}
\left\{ \prod_{k \in \Lambda, \mathrm{l}} \frac{1}{\sqrt{n'(\mathrm{l},k)!}}
\left(\hat{a}_{\mathrm{l}k}^{\dag}\right)^{n'(\mathrm{l},k)} \Psi_0
\right\}_{n'(\mathrm{l},k) = 0}^{\infty}
\end{equation}
also makes a complete orthonormal system in $L^2(R^{4N})$ (cf. \S 34 in 
\cite{Dirac} and
(2.46) in \cite{Sakurai}). For example, we have
\begin{align*}
\left(\hat{a}_{\mathrm{l}k}^{\dag}\Psi_0,
\left(\hat{a}_{\mathrm{l}k}^{\dag}\right)^{2}\Psi_0\right) & =   \left(\Psi_0,
\hat{a}_{\mathrm{l}k}\left(\hat{a}_{\mathrm{l}k}^{\dag}\right)^{2}\Psi_0\right)    \\
& = \left(\Psi_0, \left(\hat{a}_{\mathrm{l}k}^{\dag}\right)^{2}
\hat{a}_{\mathrm{l}k}\Psi_0\right) + 2\left(\Psi_0, 
\hat{a}_{\mathrm{l}k}^{\dag}\Psi_0\right)
\\
& = 2\left(\hat{a}_{\mathrm{l}k}\Psi_0, \Psi_0\right) = 0.
\end{align*}
\begin{rem}
We considered the Lagrangian function  \eqref{3.3} and the Hamiltonian 
operator \eqref{3.10},
determining an indefinite constant in \eqref{2.3} by \eqref{2.18} or in
Remark 3.1.  On the other hand, in many literatures (cf.
\cite{Feynman-Hibbs},
\cite{Sakurai} and
\cite{Spohn})
   an indefinite constant is determined
to be 0.
Consequently, the  term $\infty = (1/2) \sum_{j =1}^n e_j^2/| x^{(j)} - 
x^{(j)}|$
appears in \eqref{4.2} from \eqref{2.21}
  and
the ground state energy of $H_{\text{rad}}$ is $\sum_{k \in \Lambda'} \hbar
c|k|/2$, which
tends to infinity as
$M_3$ tends to infinity.  Arguments are had about these infinities in \S 
9-3 and \S 9-5 of
\cite{Feynman-Hibbs}.  In the present paper we could see that the term
$(1/2)
\sum_{j =1}^n e_j^2/| x^{(j)} - x^{(j)}|$ disappears in \eqref{4.2} and 
that the ground
state energy  of
$H_{\text{rad}}$ is $0$.
\end{rem}
   \section{Preliminaries for the proofs of main results}
 From \S 6 to \S 9 we often write $\xvec$ and $\yvec$ in $R^{3n}$ as $x$ and 
$y$, respectively
for the sake of simplicity when no confusion arises. \par
   Let $0 \leq s \leq t \leq T$.  For $x$ and $y$ in $R^{3n}$ we define
\begin{equation} \label{6.1}
      \qts(\theta) := x - \frac{t - \theta}{t - s}(x - y), \   s\leq \theta 
\leq t.
\end{equation}
  For $X$ and $Y$ in $R^{4N}$ we also define
\begin{equation} \label{6.2}
      a_{\Lambda' X,Y}^{t,s}(\theta) := X - \frac{t - \theta}{t - s}(X - Y), 
\   s\leq \theta
\leq t.
\end{equation}
Then  $a_{\Lambda X,Y}^{t,s}(\theta) \in R^{8N}$ is defined by means of 
\eqref{2.13}.  We set
\begin{equation} \label{6.3}
      V_1(x)  := \frac{2\pi}{|V|}
\sum_{k \in \Lambda_1}\sum_{j,l =1,j \not= l}^n\frac{e_j e_l\cos k\cdot 
(x^{(j)} -
x^{(l)})}{|k|^2}
\end{equation}
and
\begin{equation} \label{6.4}
      V_2(a_{\Lambda'})  := \sum_{k \in \Lambda',\mathrm{i,l}} \left(
\frac{(c|k|)^2}{2|V|} \left(\alki \right)^2 -
\frac{\hbar c|k|}{2}
\right).
\end{equation} \par
For the sake of simplicity we suppose $\Lambda_2' = \Lambda_3'( = 
\Lambda')$ from \S 6 to \S
9.  We write $\bold x = (x, X) \in R^{3n + 4N}$ and
\begin{equation}  \label{6.5}
   {\bold q}^{t,s}_{{\bold x},{\bold y}}(\theta) = (\theta,
\qts(\theta), a_{\Lambda' X,Y}^{t,s}(\theta))
\in R^{1 + 3n + 4N}, \  s \leq \theta \leq t.
\end{equation}
Then from \eqref{3.3} and \eqref{3.5} we have
\begin{align}  \label{6.6}
  & S_c(t,s;\qts,a_{\Lambda X,Y}^{t,s})  = \frac{1}{2(t - s)}\sum_{j=1}^n 
m_j|x^{(j)} -
y^{(j)}|^2  \notag \\
& \quad +
\int_{{\bold q}^{t,s}_{{\bold x},{\bold y}}}\left( - V_1(x)dt + \frac{1}{c}
\sum_{j=1}^n e_j\tilde{A}(x^{(j)},a_{\Lambda'})\cdot dx^{(j)} - 
V_2(a_{\Lambda'})dt \right) +
\frac{|X- Y|^2}{2|V|(t - s)}\notag \\
& = \frac{1}{2(t - s)}\sum_{j=1}^n m_j|x^{(j)} - y^{(j)}|^2 -  \int^t_s V_1(x
- \frac{t - \theta}{t - s} (x - y))d\theta
   \notag \\
   & \quad + \frac{1}{c} \sum_{j=1}^n e_j(x^{(j)} - y^{(j)})\cdot \int^1_0 
\tilde{A}(x^{(j)}
- \theta (x^{(j)} - y^{(j)}),X- \theta (X - Y))d\theta   \notag
\\
   & \qquad + \frac{|X- Y|^2}{2|V|(t - s)} - \int^t_s V_2(X- \frac{t - 
\theta}{t - s} (X -
Y))d\theta
\notag \\
       & = \frac{1}{2(t - s)}\sum_{j=1}^n m_j|x^{(j)} - y^{(j)}|^2 - (t - s) 
\int^1_0  V_1(x
- \theta (x - y))d\theta
   \notag \\
   & \quad + \frac{1}{c} \sum_{j=1}^n e_j(x^{(j)} - y^{(j)})\cdot \int^1_0 
\tilde{A}(x^{(j)}
- \theta (x^{(j)} - y^{(j)}),X- \theta (X - Y))d\theta   \notag
\\
   & \qquad + \frac{|X- Y|^2}{2|V|(t - s)} - (t - s)\int^1_0 V_2(X- \theta 
(X - Y))d\theta.
\end{align} \par
   Let $M \geq 0$ and $p(x,w,X,W)$ a $\Cspace$ function in $R^{6n}\times
R^{8N}$ such that
\begin{equation}  \label{6.7}
   |\partial^{\alpha}_w\partial^{\beta}_x 
\partial^{\alpha'}_W\partial^{\beta'}_X p(x,w,X,W)|
\leq C_{\alpha,\beta,\alpha',\beta'}\left(<x;w> <X;W>\right)^M
\end{equation}
for all multi-indices $\alpha, \beta, \alpha'$ and $\beta'$ with constants
$C_{\alpha,\beta,\alpha',\beta'}$,  where $<x;w> := \sqrt{1 + |x|^2 + 
|w|^2}$.  For $f(x,X)
\in \Sspace (R^{3n+4N})$  we define the operator $P(t,s)$ by
\begin{equation}  \label{6.8}
        \begin{cases}
            \begin{split}
              & \left(\prod_{j=1}^n\sqrt{\frac{m_j}{2\pi i\hbar(t - s)}}^{\ 
3}\right)
               \sqrt{\frac{1}{2\pi i\hbar|V|(t - s)}}^{\ 4N}  \iint \left( \exp
i\hbar^{-1}S_c(t,s;\qts,a_{\Lambda X,Y}^{t,s}\right)
\\ & \qquad\  \times p\left(x,\frac{x-y}{\sqrt{t - s}},X,\frac{X-Y}{\sqrt{t -
s}}\right) f(y,Y)dydY,
                      \end{split}
         & s < t ,
                  \\
\begin{split}
        & \left(\prod_{j=1}^n \sqrt{\frac{m_j}{2\pi i\hbar}}^{\ 3}\right)
\sqrt{\frac{1}{2\pi i\hbar|V|}}^{\ 4N}
\text{Os}-\iint \Bigl\{\exp i\hbar^{-1}\Bigl(\sum_{j=1}^n \frac{m_j|w_j|^2}{2}
   \\
& \qquad \   +  \frac{|W|^2}{2|V|}\Bigr) \Bigr\}p(x,w,X,W)dwdW f(x,X),\\
             \end{split}
            & s = t.
        \end{cases}
\end{equation}
When $p(x,w,X,W) = 1$, $P(t,s)$ is called {\it the fundamental operator} 
and denoted by $\Cts$.
\begin{lem}
Let $M_1$ and $M_2$ be non-negative constants.  Suppose that $g(x) \ (x \in 
R^3)$ and $\psi
(\theta)\ (\theta \in R)$ in  \eqref{3.4} satisfy
\begin{equation*}
   |\partial^{\alpha}_x g(x)|
\leq C_{\alpha}<x>^{M_1},\ x \in  R^3
\end{equation*}
for all $\alpha$ and
\begin{equation*}
   |\frac{d^k}{d\theta^k} \psi (\theta)|
\leq C_{k}<\theta>^{M_2},\ \theta \in  R
\end{equation*}
for all $k = 0,1,\dotsc$.  Let $f \in \Sspace(R^{3n+4N})$.  Then,
$\partial^{\alpha}_x\partial^{\alpha'}_X(P(t,s)f)(x,X)$ are continuous in 
$0 \leq s
\leq t \leq T$ and $(x,X) \in R^{3n+4N}$ for all  $\alpha$ and $\alpha'$.
\end{lem}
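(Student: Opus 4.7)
The plan is to unify the two cases of \eqref{6.8} via the substitution $y = x - \sqrt{t-s}\,w$, $Y = X - \sqrt{t-s}\,W$ for $s < t$, under which the Jacobian $(t-s)^{(3n+4N)/2}$ cancels the prefactor. Using the explicit form \eqref{6.6} of $S_c$, the phase becomes
\begin{equation*}
\Phi_{t,s}(x,w,X,W) = \frac{1}{2}\sum_{j=1}^n m_j|w_j|^2 + \frac{|W|^2}{2|V|} + R_{t,s}(x,w,X,W),
\end{equation*}
where $R_{t,s}$ is $C^\infty$ and carries explicit factors of $\sqrt{t-s}$ (from the $\tilde A$ term) and $t-s$ (from $V_1$ and $V_2$), hence vanishes as $s \to t$. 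Thus for every $0 \leq s \leq t \leq T$ one has uniformly
\begin{equation*}
(P(t,s)f)(x,X) = C \iint e^{i\hbar^{-1}\Phi_{t,s}(x,w,X,W)} p(x,w,X,W)\, f\bigl(x-\sqrt{t-s}\,w,\, X-\sqrt{t-s}\,W\bigr)\, dw\, dW,
\end{equation*}
interpreted as an oscillatory integral; at $s = t$ the shifted Schwartz function collapses to $f(x,X)$ and the expression reduces to the second case of \eqref{6.8}.

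Next I would show this expression is a bona fide oscillatory integral in the sense of Kumano-go \cite{Kumano-go}, with $(x,X,t,s)$ as parameters. The leading quadratic phase is non-degenerate, so $|\partial_{(w,W)}\Phi_{t,s}|$ is comparable to $\langle(w,W)\rangle$ outside a compact set, uniformly in $(t,s)$ provided $R_{t,s}$ is small; more generally, using a cutoff one splits the integrand into a compactly supported part (trivially continuous) and a part where a first-order differential operator ${}^tL$ built from $\partial_{(w,W)}\Phi_{t,s}$ satisfies ${}^tL(e^{i\hbar^{-1}\Phi_{t,s}}) = e^{i\hbar^{-1}\Phi_{t,s}}$. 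Applying its formal adjoint to the amplitude $p \cdot f$ a sufficient number of times converts the oscillation into $\langle(w,W)\rangle^{-N}$ decay. Combined with the polynomial growth \eqref{6.7} on $p$, the polynomial bounds assumed on $g$ and $\psi$, and the Schwartz decay of $f$, this produces an absolutely convergent integral with a majorant independent of $(t,s)$ on $\{0 \leq s \leq t \leq T\}$.

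Differentiation $\partial_x^{\alpha}\partial_X^{\alpha'}$ can then be brought under the integral, producing only finitely many additional polynomial factors in $(x,X,w,W)$ and shifted derivatives of $f$, none of which disturb the absolute convergence. Joint continuity in $(t,s,x,X)$ then follows from the dominated convergence theorem, since $\Phi_{t,s}$, its derivatives, and the shifted function $f(x-\sqrt{t-s}\,w, X-\sqrt{t-s}\,W)$ and its derivatives are continuous in all variables with a uniform integrable majorant.

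The main obstacle is the uniformity of the integration-by-parts scheme as $s \to t$: one must verify that the coefficients of ${}^tL$ and all $(w,W)$-derivatives of $\Phi_{t,s}$ of any order are uniformly bounded in $(t,s) \in \{0 \leq s \leq t \leq T\}$ with only polynomial growth in $(x,w,X,W)$. This is a routine but careful check, since by construction $R_{t,s}$ and its derivatives carry only non-negative powers of $\sqrt{t-s}$, while the growth in $(x,w,X,W)$ is controlled by the hypotheses on $g$ and $\psi$ via the explicit formula \eqref{3.4} for $\tilde A$ and the formulas for $V_1$, $V_2$.
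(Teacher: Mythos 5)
Your overall strategy coincides with the paper's: the substitution $y=x-\sqrt{t-s}\,w$, $Y=X-\sqrt{t-s}\,W$ unifies the two cases of (6.8) into the single expression (6.9), and continuity is then reduced to producing a $(t,s)$-uniform integrable majorant by integration by parts, followed by dominated convergence. The gap is in the integration-by-parts operator you choose. You build ${}^tL$ from the gradient of the \emph{full} phase and assert that $|\partial_{(w,W)}\Phi_{t,s}|$ is comparable to $\langle (w,W)\rangle$ off a compact set, uniformly in $(t,s)$. Under the hypotheses of this lemma $g$ and $\psi$ (hence $\tilde A$) may grow like $\langle x\rangle^{M_1}\langle X\rangle^{M_2}$ with $M_1,M_2$ arbitrary, so $\partial_{w^{(j)}}R_{t,s}$ contains a term of size $\sqrt{t-s}\,\langle x;\sqrt{t-s}\,w\rangle^{M_1}\langle X;\sqrt{t-s}\,W\rangle^{M_2}$, which for fixed $t-s>0$ grows like $|w|^{M_1}$ and can overwhelm the linear gradient $m_jw^{(j)}$ coming from the Gaussian part; the full phase can therefore have critical points arbitrarily far out, and the coefficients $\partial\Phi_{t,s}/|\partial\Phi_{t,s}|^2$ of your ${}^tL$ are not controlled there. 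Your fallback (a cutoff into a compact piece plus the rest) does not cure this, since the failure is at infinity, not on a compact set.

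The paper instead takes $L^{(j)}$ and $L_1$ to reproduce only the Gaussian factor $\exp i\hbar^{-1}\bigl(\sum_j m_j|w^{(j)}|^2/2+|W|^2/(2|V|)\bigr)$, with explicit coefficients $O(\langle w^{(j)}\rangle^{-2})$ and $O(\langle w^{(j)}\rangle^{-2}w^{(j)}_k)$, and treats $e^{i\hbar^{-1}\psi(t,s;x,\sqrt{\rho}w,X,\sqrt{\rho}W)}\,p\,f(x-\sqrt{\rho}w,X-\sqrt{\rho}W)$ as the amplitude, $\rho=t-s$. The point you then need, and do not make explicit, is that every $(w,W)$-derivative of the remainder phase produces polynomial growth only in the \emph{scaled} variables $\sqrt{\rho}\,w$, $\sqrt{\rho}\,W$ (times nonnegative powers of $\sqrt{\rho}$), and these factors are absorbed, uniformly in $0\le\rho\le T$, by the rapid decay of $f(x-\sqrt{\rho}w,X-\sqrt{\rho}W)$ through $\langle x-\sqrt{\rho}w\rangle^{-N}\le\langle x\rangle^{N}\langle\sqrt{\rho}w\rangle^{-N}$. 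Each application of ${}^tL^{(j)}$ then yields a net gain of at least one power of $\langle w^{(j)}\rangle^{-1}$ at the cost of a fixed polynomial in $\langle x\rangle,\langle X\rangle$, which after finitely many applications gives the majorant $C\langle x;X\rangle^{l}\langle w\rangle^{-(3n+1)}\langle W\rangle^{-(4N+1)}$ and hence joint continuity by dominated convergence. With this replacement of the operator and the explicit use of the shifted Schwartz decay, your argument goes through and is the paper's proof.
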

\begin{proof}
Let $s < t$ and make the change of variables: $y \rightarrow w = (x - 
y)/\sqrt{t-s}$ and $Y
\rightarrow W = (X - Y)/\sqrt{t-s}$ in \eqref{6.8}.  Then from \eqref{6.6} 
we have
\begin{align}  \label{6.9}
       & P(t,s)f = \left(\prod_{j=1}^n \sqrt{\frac{m_j}{2\pi i\hbar}}^{\ 
3}\right)
\sqrt{\frac{1}{2\pi i\hbar|V|}}^{\ 4N}
     \text{Os}-\iint \left(\exp i\hbar^{-1}\phi(t,s;x,w,X,W) \right)\notag\\
& \times p(x,w,X,W)f(x - \sqrt{\rho}w,X - \sqrt{\rho}W)dwdW,\quad \rho = t - s,
\end{align}
where
\begin{align}  \label{6.10}
  &\phi(t,s;x,w,X,W)  := \sum_{j=1}^n\frac{m_j}{2}|w^{(j)}|^2 + 
\frac{1}{2|V|}|W|^2 +
\psi(t,s;x,\sqrt{\rho}w,X,\sqrt{\rho}W)
     \notag    \\
             & := \sum_{j=1}^n\frac{m_j}{2}|w^{(j)}|^2 + \frac{1}{2|V|}|W|^2 
- \rho\int^1_0
V_1( x-\theta\sqrt{\rho}w)d\theta + \frac{1}{c}
\sum_{j=1}^n e_j
\sqrt{\rho}w^{(j)} \notag \\
& \cdot
\int^1_0 \tilde{A}(x^{(j)} - \theta\sqrt{\rho}w^{(j)}, 
X-\theta\sqrt{\rho}W)d\theta -
\rho\int^1_0
V_2( X-\theta\sqrt{\rho}W)d\theta.
\end{align}
We note from \eqref{6.8} that \eqref{6.9} is also true for $t = s$. \par
Let $L^{(j)} := <w^{(j)}>^{-2}\left(1 - i\hbar m_j^{-1} \sum_{k=1}^3
w^{(j)}_k\partial/\partial w^{(j)}_k\right)\ (j = 1,2,\dotsc,n)$ and
${}^tL^{(j)}$ its transposed operator.  We also let $L_1 := <W>^{-2}\left(1 
- i\hbar
|V| \sum_{k=1}^{4N} W_k\partial_{W_k}\right)$.  Then, integrating by parts 
with respect to
$w$ and $W$ in \eqref{6.9} by means of $L^{(j)}$ and $L_1$, we see that the 
integrand is
bounded by
\begin{equation*}
\text{Const.} <x;X>^l <w>^{-(3n+1)}<W>^{-(4N+1)}
\end{equation*}
for some real constant $l$.  See the proof of Lemma 2.1 in \cite{Ichinose 
1999} for further
details.  Consequently, we see that $\left(P(t,s)f\right)(x,X)$ is 
continuous  in $0 \leq s
\leq t
\leq T$ and $(x,X) \in R^{3n+4N}$. In the same way we can complete the proof.
\end{proof} \par
    For $0 \leq \sigma_1, \sigma_2 \leq 1$  we set $\sigma := 
(\sigma_1,\sigma_2) $ and
\begin{align}   \label{6.11}
     & \tau(\sigma) :=  t - \sigma_1(t - s) \in R,        \notag \\
       &  \zeta^{(j)}(\sigma) := z^{(j)} + \sigma_1(x^{(j)} - z^{(j)})
                  + \sigma_1\sigma_2(y^{(j)} - x^{(j)}) \in R^3, \ j = 
1,2,\dotsc,n, \notag\\
         &  \zeta(\sigma) := z + \sigma_1(x - z)
                  + \sigma_1\sigma_2(y - x) \in R^{3n},  \notag\\
&  \tilde{\zeta}(\sigma) := Z + \sigma_1(X - Z)
                  + \sigma_1\sigma_2(Y - X) \in R^{4N}.
\end{align}
We also set
\begin{equation} \label{6.12}
B_{ml}(x^{(j)},a_{\Lambda'}) = \frac{\partial \tilde{A}_l}{\partial
x_m}(x^{(j)},a_{\Lambda'}) - \frac{\partial \tilde{A}_m}{\partial
x_l}(x^{(j)},a_{\Lambda'})
\end{equation}
for $l,m = 1,2,3$ and $j = 1,2,\dotsc,n$.  Then from \eqref{6.6} we have
\begin{lem}
We  can write
\begin{align}   \label{6.13}
     & S_c(t,s;\overrightarrow{q}^{t,s}_{z,y},a_{\Lambda Z,Y}^{t,s}) -
S_c(t,s;\overrightarrow{q}^{t,s}_{z,x},a_{\Lambda Z,X}^{t,s})   \notag \\
       &  = \frac{1}{t - s}\sum_{j=1}^n m_j(x^{(j)} - 
y^{(j)})\cdot\left(z^{(j)} -
\frac{x^{(j)} + y^{(j)}}{2}\right)
  \notag\\
& + (t - s) (x - y)\cdot \int_0^1\int_0^1 \sigma_1 \frac{\partial V_1}{\partial
x}\left(\zeta(\sigma)\right) d\sigma_1d\sigma_2 \notag \\
&  + \frac{1}{c} \sum_{j=1}^n
e_j(x^{(j)} - y^{(j)})\cdot \int^1_0 \tilde{A}(x^{(j)} - \theta (x^{(j)} - 
y^{(j)}),X- \theta
(X - Y))d\theta \notag \\
&  + \frac{1}{c} \sum_{j=1}^n \sum_{l,m=1}^3
e_j(x^{(j)}_m - y^{(j)}_m)(x^{(j)}_l - z^{(j)}_l) \int^1_0\int^1_0 \sigma_1 
B_{ml}
(\zeta^{(j)}(\sigma),\tilde{\zeta}(\sigma))d\sigma_1 d\sigma_2
\notag \\
& + \frac{1}{c} \sum_{j=1}^n
e_j(x^{(j)} - y^{(j)})\cdot \left\{ (Z - X)\cdot\int^1_0\int^1_0 \sigma_1
\frac{\partial\tilde{A}}{\partial
a_{\Lambda'}}(\zeta^{(j)}(\sigma),\tilde{\zeta}(\sigma))d\sigma_1 d\sigma_2 
\right\}
\notag \\
& + (X - Y)\cdot \frac{1}{c} \sum_{j=1}^n \sum_{m=1}^3
e_j(x^{(j)}_m - z^{(j)}_m)  \int^1_0\int^1_0 \sigma_1
\frac{\partial\tilde{A}_m}{\partial
a_{\Lambda'}}(\zeta^{(j)}(\sigma),\tilde{\zeta}(\sigma))d\sigma_1 d\sigma_2
\notag \\
&  + \frac{1}{(t - s)|V|}(X - Y)\cdot \left(Z -
\frac{X + Y}{2}\right)    \notag\\
& + (t-s)(X - Y)\cdot  \int^1_0\int^1_0 \sigma_1
\frac{\partial V_2}{\partial
a_{\Lambda'}}(\tilde{\zeta}(\sigma))d\sigma_1 d\sigma_2.
\end{align}
\end{lem}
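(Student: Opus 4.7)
The plan is to start from the explicit formula \eqref{6.6} for $S_c(t,s;\qboldts)$ and handle the five constituent pieces of the difference separately: (i) the particle kinetic term $\tfrac{1}{2(t-s)}\sum_j m_j|\cdot|^2$, (ii) the $V_1$ integral, (iii) the Wilson-line integral $\tfrac1c\sum_j e_j(\cdot)\cdot\int_0^1\tilde A d\theta$, (iv) the $X$-kinetic term $|X-Y|^2/(2|V|(t-s))$, and (v) the $V_2$ integral. Pieces (i) and (iv) are handled by the identity $|a|^2-|b|^2=(a-b)\cdot(a+b)$: with $a=z^{(j)}-y^{(j)}$ and $b=z^{(j)}-x^{(j)}$ one gets $(x^{(j)}-y^{(j)})\cdot(2z^{(j)}-x^{(j)}-y^{(j)})$, which after dividing by $2(t-s)$ is exactly the first term of \eqref{6.13}; an identical computation for $X,Y,Z$ gives the seventh term.

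For pieces (ii) and (v) I would apply the fundamental theorem of calculus twice. With $\theta=\sigma_1$, write
\begin{equation*}
V_1((1-\sigma_1)z+\sigma_1 y)-V_1((1-\sigma_1)z+\sigma_1 x)
=\int_0^1 \sigma_1(y-x)\cdot \frac{\partial V_1}{\partial x}(\zeta(\sigma))\,d\sigma_2,
\end{equation*}
and then integrate in $\sigma_1$ from $0$ to $1$; after multiplying by $-(t-s)$ this yields the second term of \eqref{6.13}. The identical manoeuvre on $V_2$ (in the $X$-variables) produces the final term.

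Piece (iii) is the main obstacle and requires the most care. Recognise
$(x^{(j)}-y^{(j)})\cdot\int_0^1 \tilde A(x^{(j)}-\theta(x^{(j)}-y^{(j)}),X-\theta(X-Y))\,d\theta$
as the negative of the line integral of the $1$-form $\omega_j=\tilde A_l(x^{(j)},X)\,dx^{(j)}_l$ along the straight segment from $(x^{(j)},X)$ to $(y^{(j)},Y)$ in $R^3\times R^{4N}$. For each $j$ consider the triangle with vertices $(z^{(j)},Z)$, $(x^{(j)},X)$, $(y^{(j)},Y)$ parametrized exactly by $\zeta^{(j)}(\sigma),\tilde\zeta(\sigma)$ of \eqref{6.11} on $(\sigma_1,\sigma_2)\in[0,1]^2$; at $\sigma_2=0$ the boundary is the $z\to x$ edge, at $\sigma_1=1$ the $x\to y$ edge, at $\sigma_2=1$ the $z\to y$ edge (traversed outward), while $\sigma_1=0$ degenerates to the single point $(z^{(j)},Z)$. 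Stokes' theorem (equivalently: one more application of the fundamental theorem of calculus in $\sigma_2$ after one in $\sigma_1$) gives
\begin{equation*}
F_j(z^{(j)},y^{(j)},Z,Y)-F_j(z^{(j)},x^{(j)},Z,X)-F_j(x^{(j)},y^{(j)},X,Y)=\iint_{[0,1]^2}\Phi^*d\omega_j,
\end{equation*}
where $F_j$ denotes the relevant Wilson-line term and $\Phi$ the triangle parametrization. Splitting $d\omega_j$ into its $dx_m^{(j)}\wedge dx_l^{(j)}$ part (whose antisymmetrization produces $B_{ml}$ by \eqref{6.12}) and its $dX_p\wedge dx_l^{(j)}$ part (producing $\partial\tilde A_l/\partial a_{\Lambda'}$), and computing the pullback Jacobian
\begin{equation*}
\frac{\partial\zeta_m^{(j)}}{\partial\sigma_1}\frac{\partial\zeta_l^{(j)}}{\partial\sigma_2}-\frac{\partial\zeta_l^{(j)}}{\partial\sigma_1}\frac{\partial\zeta_m^{(j)}}{\partial\sigma_2}=\sigma_1\bigl[(x_m^{(j)}-z_m^{(j)})(y_l^{(j)}-x_l^{(j)})-(x_l^{(j)}-z_l^{(j)})(y_m^{(j)}-x_m^{(j)})\bigr],
\end{equation*}
and the analogous mixed Jacobian, gives precisely the $B_{ml}$ term (line 4 of \eqref{6.13}, after using $B_{ml}=-B_{lm}$ to rewrite the antisymmetrized coefficient as $(x^{(j)}_m-y^{(j)}_m)(x^{(j)}_l-z^{(j)}_l)$) and the two $\partial\tilde A/\partial a_{\Lambda'}$ triangle terms (lines 5 and 6). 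Multiplying by $e_j/c$ and summing over $j$, and rearranging $F_j(x^{(j)},y^{(j)},X,Y)$ onto the right, yields the third line of \eqref{6.13} together with the surface-integral remainders, completing the proof. The delicate bookkeeping in this last step — matching the $(x_m-y_m)(x_l-z_l)$ pairing in line 4 with the antisymmetric Jacobian above, and verifying the mixed $(X-Y)$-Jacobian gives exactly the split appearing in lines 5 and 6 — is the only nontrivial obstacle.
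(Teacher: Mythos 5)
Your proposal is correct and follows essentially the same route as the paper: the kinetic terms in $x$ and $X$ are handled algebraically, the $V_1$ and $V_2$ differences by a double application of the fundamental theorem of calculus (which the paper phrases equivalently as a Stokes computation of $\iint_{\bDelta}\partial V_1/\partial x\,dt\wedge dx$ over the plane parametrized by $\tau(\sigma),\zeta(\sigma)$), and the vector-potential term by Stokes' theorem on the triangle with vertices $(z^{(j)},Z),(x^{(j)},X),(y^{(j)},Y)$ parametrized by $(\zeta^{(j)}(\sigma),\tilde\zeta(\sigma))$, with the $\sigma_1$-Jacobian and the antisymmetry of $B_{ml}$ producing exactly the terms of \eqref{6.13}. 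The Jacobian identity and sign bookkeeping you carry out match the paper's \eqref{6.14}--\eqref{6.15}.
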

\begin{proof}
   We use \eqref{6.6}. From \eqref{6.5} and \eqref{6.11} we see
\begin{align}  \label{6.14}
& \int_{{\pmb q}_{{\pmb z},{\pmb y}}^{t,s}} (-V_1(x))dt - \int_{{\pmb 
q}_{{\pmb z},
{\pmb x}}^{t,s}} (-V_1(x))dt = \sum_{j=1}^n \sum_{l=1}^3
\iint_{\bDelta}\partial V_1/\partial x^{(j)}_l dt \wedge dx^{(j)}_l
  \notag \\
& =  \sum_{j=1}^n \sum_{l=1}^3 \int_0^1\int_0^{1} \partial V_1 
(\zeta(\sigma))/\partial
x^{(j)}_l
\det\frac{\partial(\tau(\sigma),\zeta^{(j)}_l(\sigma))}{\partial(\sigma_1,
\sigma_2)}
d\sigma_1d\sigma_2
\notag\\
& =  \sum_{j=1}^n \sum_{l=1}^3 (t - s)\left(x^{(j)}_l - y^{(j)}_l\right) 
\int_0^1\int_0^{1}
\sigma_1 \partial V_1 (\zeta(\sigma))/\partial x^{(j)}_l
d\sigma_1d\sigma_2 \notag \\
       & =   (t - s)(x - y)\cdot  \int_0^1\int_0^{1}
            \sigma_1 \frac{\partial V_1}{\partial x} (\zeta(\sigma))
d\sigma_1d\sigma_2,
\end{align}
where $\bDelta = \bDelta (t,s,x,y,z)$ is the 2-dimensional plane with
oriented boundary
consisting of $(\theta,\qvec^{t,s}_{z,y}(\theta)), 
-(\theta,\qvec^{t,s}_{z,x}(\theta))$ and
$(\theta,\qvec^{s,s}_{y,x}(\theta))\ (s \leq \theta \leq t)$, and $\sigma$ 
in \eqref{6.11}
gives the positive orientation of $\Delta$.  So the second term on the rhs 
of \eqref{6.13}
appears.  In the same way the last term appears.  It is easy to show that 
the first  and the
7th terms appear.
\par
   As in the proof of \eqref{6.14} we have
\begin{align}  \label{6.15}
& \int_{{\pmb q}_{{\pmb z},{\pmb 
y}}^{t,s}}\tilde{A}(x^{(j)},a_{\Lambda'})\cdot dx^{(j)}
- \int_{{\pmb q}_{{\pmb z},{\pmb 
x}}^{t,s}}\tilde{A}(x^{(j)},a_{\Lambda'})\cdot dx^{(j)}
\notag\\
& = \int_{{\pmb q}_{{\pmb x},{\pmb 
y}}^{s,s}}\tilde{A}(x^{(j)},a_{\Lambda'})\cdot dx^{(j)}
+ \iint_{\bDelta}d\left(\tilde{A}(x^{(j)},a_{\Lambda'})\cdot dx^{(j)}\right)
\notag\\
& = (x^{(j)} - y^{(j)}) \cdot \int_0^1 \tilde{A}\left(x^{(j)} - 
\theta(x^{(j)} -
y^{(j)}),X - \theta(X- Y)\right) d\theta \notag \\
& + \sum_{1 \leq m < l \leq 3} \iint_{\bDelta}B_{ml} dx^{(j)}_m \wedge 
dx^{(j)}_l
- \sum_{k\in \Lambda',\mathrm{i},\mathrm{l}} \sum_{m=1}^3
\iint_{\bDelta}\left(\partial \tilde{A}_m/\partial \alki\right) dx^{(j)}_m 
\wedge d\alki
\notag \\
& = (x^{(j)} - y^{(j)}) \cdot \int_0^1 \tilde{A}\left(x^{(j)} - 
\theta(x^{(j)} -
y^{(j)}),X - \theta(X- Y)\right) d\theta
\notag\\
& + \sum_{1 \leq m < l \leq 3}\left\{(x^{(j)}_m - y^{(j)}_m)(x^{(j)}_l - 
z^{(j)}_l)
- (x^{(j)}_l - y^{(j)}_l)(x^{(j)}_m - z^{(j)}_m) \right\} \notag \\
& \quad \times
\int_0^1\int_0^1 
\sigma_1B_{ml}\left(\zeta^{(j)}(\sigma),\tilde{\zeta}(\sigma)\right)d\sigma_ 
1
d\sigma_2  \notag \\
& -  \sum_{m=1}^3
\left\{ (x^{(j)}_m - y^{(j)}_m)(X - Z) -  (X - Y)(x^{(j)}_m - z^{(j)}_m) 
\right\}
\notag\\
& \quad  \cdot \int_0^1\int_0^1 \sigma_1 
\frac{\partial\tilde{A}_m}{\partial a_{\Lambda'}}
\left(\zeta^{(j)}(\sigma),\tilde{\zeta}(\sigma)\right)d\sigma_1
d\sigma_2.
\end{align}
So we can complete the proof of \eqref{6.13} from \eqref{6.6}.
\end{proof}
    Let's define  $\Phi_m^{(j)}(t,s;x^{(j)},y^{(j)},z^{(j)},X,Y,Z) \in R\ (m 
= 1,2,3,j =
1,2,\dots,n)$ and $\Phi_1(t,s;x,y,z,X,Y,Z) \in R^{4N}$ by
\begin{align}   \label{6.16}
     & \Phi_m^{(j)} = \bigl(z^{(j)}_m - \frac{x^{(j)}_m + y^{(j)}_m}{2}\bigr)
\notag \\
& + \frac{e_j(t-s)}{m_jc} \sum_{l=1}^3
(x^{(j)}_l - z^{(j)}_l)\int^1_0\int^1_0 \sigma_1
B_{ml}(\zeta^{(j)}(\sigma),\tilde{\zeta}(\sigma))d\sigma_1 d\sigma_2
\notag \\
&  -
\frac{e_j(t-s)}{m_jc}(X-Z)\cdot \int^1_0\int^1_0 \sigma_1
\frac{\partial\tilde{A}_m}{\partial
a_{\Lambda'}}(\zeta^{(j)}(\sigma),\tilde{\zeta}(\sigma))d\sigma_1 d\sigma_2
\notag \\
&   +
\frac{e_j(t-s)}{m_jc} \int^1_0 \tilde{A}_m(x^{(j)} - \theta (x^{(j)} - 
y^{(j)}),X- \theta
(X - Y))d\theta
   \notag \\
& + \frac{(t-s)^2}{m_j}   \int^1_0\int^1_0 \sigma_1
\partial V_1 \left(\zeta(\sigma)\right)/\partial x^{(j)}_m
d\sigma_1 d\sigma_2
\end{align}
and
\begin{align}   \label{6.17}
     & \Phi_1 =  \bigl(Z - \frac{X + Y}{2}\bigr) + \frac{(t-s)|V|}{c} 
\sum_{j=1}^n \sum_{m=1}^3
e_j(x^{(j)}_m - z^{(j)}_m) \notag \\
&\qquad  \times
   \int^1_0\int^1_0 \sigma_1
\frac{\partial\tilde{A}_m}{\partial
a_{\Lambda'}}(\zeta^{(j)}(\sigma),\tilde{\zeta}(\sigma))d\sigma_1 d\sigma_2
\notag \\
& + (t-s)^2|V|  \int^1_0\int^1_0 \sigma_1
\frac{\partial V_2}{\partial
a_{\Lambda'}}(\tilde{\zeta}(\sigma))d\sigma_1 d\sigma_2,
\end{align}
respectively.  Let $\Phi^{(j)} := 
\left(\Phi^{(j)}_1,\Phi^{(j)}_2,\Phi^{(j)}_3\right) \in
R^3.$  Then it follows from \eqref{6.13}, \eqref{6.16} and \eqref{6.17} that
\begin{align}   \label{6.18}
     & S_c(t,s;\overrightarrow{q}^{t,s}_{z,y},a_{\Lambda Z,Y}^{t,s}) -
S_c(t,s;\overrightarrow{q}^{t,s}_{z,x},a_{\Lambda Z,X}^{t,s}) \notag \\
& =
\frac{1}{t-s }\sum_{j=1}^n m_j\left(x^{(j)} - y^{(j)}\right)\cdot
\Phi^{(j)}(t,s;x^{(j)},y^{(j)},z^{(j)},X,Y,Z) \notag \\
& + \frac{1}{(t-s)|V| }(X - Y)\cdot
\Phi_1(t,s;x,y,z,X,Y,Z).
\end{align}
\section{The stability of the fundamental operator}
\begin{lem}
Let $f \in C^{1}(R^d)$ and $|\partial_x^{\alpha}f|
           \leq C_{\alpha}<x>^{-(1 + \delta_{\alpha})}$ for all $|\alpha| = 
1$, where
$\delta_{\alpha} > 0$ are constants.
Then we have : (1) $f$ is a bounded function in $R^d$. (2)  We have
\begin{align*}
      & |x - z|\left|\partial_x^{\alpha}\partial_y^{\beta}\partial_z^{\gamma}
            \int_0^1\int_0^1 \sigma_1f(z + \sigma_1(x - z)
                  + \sigma_1\sigma_2(y - x))d\sigma_1d\sigma_2\right| \\
        &  \leq C_{\alpha,\beta,\gamma},\quad  |\alpha + \beta  + \gamma|= 1,
             \  x, y, z \in R^d.
\end{align*}
\end{lem}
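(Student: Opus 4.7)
For part (1), the plan is to apply the fundamental theorem of calculus along the radial segment from $0$ to $x$ to write $f(x)=f(0)+\int_{0}^{1}\nabla f(tx)\cdot x\,dt$. Combining $|\nabla f(tx)|\leq C<tx>^{-(1+\delta)}$ with the substitution $s=t|x|$ yields $|f(x)-f(0)|\leq C\int_{0}^{|x|}(1+s^{2})^{-(1+\delta)/2}\,ds\leq C\int_{-\infty}^{\infty}(1+s^{2})^{-(1+\delta)/2}\,ds<\infty$, using that $1+\delta>1$. Hence $f$ is bounded on $R^{d}$.

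For part (2), I would first absorb the explicit $\sigma_{1}$ by the change of variables $u=\sigma_{1}$, $v=\sigma_{1}\sigma_{2}$ (Jacobian $\sigma_{1}$), which rewrites the double integral as
\[
F(x,y,z)=\iint_{\{0\leq v\leq u\leq 1\}} f(\zeta(u,v))\,du\,dv,\qquad \zeta(u,v):=(1-u)z+(u-v)x+vy.
\]
Differentiating under the integral sign (justified by the bounds on $f$ and $\nabla f$) then produces $\partial_{x_{j}}F=\iint(u-v)\partial_{j}f(\zeta)\,du\,dv$, $\partial_{y_{j}}F=\iint v\,\partial_{j}f(\zeta)\,du\,dv$, and $\partial_{z_{j}}F=\iint(1-u)\partial_{j}f(\zeta)\,du\,dv$, each over the triangle $\{0\leq v\leq u\leq 1\}$.

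The core estimate is a uniform one-dimensional bound. For each fixed $v$, as $u$ ranges over $[v,1]$ the point $\zeta$ traces the segment $\zeta=\zeta_{0}+(u-v)(x-z)$ with $\zeta_{0}:=(1-v)z+vy$ and direction $\hat{d}:=(x-z)/|x-z|$. Substituting $r=(u-v)|x-z|$ reparametrizes it by arc length and gives
\[
\int_{v}^{1}|\partial_{j}f(\zeta)|\,du\ \leq\ \frac{C}{|x-z|}\int_{0}^{(1-v)|x-z|}<\zeta_{0}+r\hat{d}>^{-(1+\delta)}\,dr\ \leq\ \frac{C'}{|x-z|},
\]
where the last step uses the line-integral bound $\int_{-\infty}^{\infty}<\zeta_{0}+r\hat{d}>^{-(1+\delta)}\,dr\leq C_{\delta}$, obtained by writing $|\zeta_{0}+r\hat{d}|^{2}=(r+\zeta_{0}\cdot\hat{d})^{2}+d_{0}^{2}$ (where $d_{0}$ is the distance from $0$ to the line) and rescaling $\xi=(r+\zeta_{0}\cdot\hat{d})/\sqrt{1+d_{0}^{2}}$ to reduce it to $(1+d_{0}^{2})^{-\delta/2}\int(1+\xi^{2})^{-(1+\delta)/2}\,d\xi\leq C_{\delta}$. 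Using the trivial coefficient bounds $(u-v)\leq 1-v$, $(1-u)\leq 1-v$, and pulling $v$ or $1-v$ out of the outer $dv$-integral, one finds $|x-z||\partial F|\leq C\int_{0}^{1}(v\text{ or }1-v)\,dv=C/2$ in each of the three cases.

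The main obstacle is securing the line-integral bound uniformly in the arbitrary base point $\zeta_{0}$ and direction $\hat{d}$; this is precisely where the strict positivity of $\delta$ enters, since without it the rescaled integral $\int(1+\xi^{2})^{-(1+\delta)/2}\,d\xi$ would diverge. Because $|\alpha+\beta+\gamma|=1$, any mixed derivative reduces to one of the three cases computed above, completing the proof.
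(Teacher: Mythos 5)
Your proof is correct. The paper itself omits the argument (it defers to Lemma 3.5 of \cite{Ichinose 1997}), and what you give is precisely the standard mechanism behind that lemma: the substitution $(u,v)=(\sigma_1,\sigma_1\sigma_2)$ turning the weighted square integral into an unweighted integral over the simplex, differentiation under the integral sign, and the key uniform bound $\int_{-\infty}^{\infty}<\zeta_0+r\hat{d}>^{-(1+\delta)}dr\leq C_{\delta}$ along an arbitrary line, which after the arc-length substitution $r=(u-v)|x-z|$ produces exactly the factor $|x-z|^{-1}$ needed; part (1) by the fundamental theorem of calculus along the ray is likewise correct (taking $\delta=\min_{|\alpha|=1}\delta_{\alpha}>0$).
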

The proof  is easy. See the proof of Lemma 3.5 in \cite{Ichinose 1997} for 
the proof of Lemma
7.1.
\par
   We note \eqref{3.4} and \eqref{6.11}.  Then, it follows from Lemma 7.1 
that under the
assumptions of Theorem 3.1 we have
\begin{align} \label{7.1}
      & 
\Bigl|\partial_{x^{(j)}}^{\alpha}\partial_{y^{(j)}}^{\beta}\partial_{z^{(j)} 
}^{\gamma}
  \partial_X^{\alpha' }\partial_Y^{\beta'}\partial_Z^{\gamma'} 
(Z-X)\cdot
\int^1_0\int^1_0
\sigma_1
\frac{\partial\tilde{A}_m}{\partial
a_{\Lambda'}}(\zeta^{(j)}(\sigma),\tilde{\zeta}(\sigma))d\sigma_1 d\sigma_2 
\Bigr|\notag\\
        &  \leq C_{\alpha,\beta,\gamma,\alpha',\beta',\gamma'},\quad 
|\alpha + \beta  +
\gamma + \alpha'+ \beta'+ \gamma'|
\geq 0
\end{align}
for $x^{(j)},y^{(j)},z^{(j)} \in R^{3}$ and $X, Y, Z \in R^{4N}$.  In the 
same way we have
the same estimates as the above for $(x^{(j)}_l - 
z^{(j)}_l)\displaystyle{\int^1_0\int^1_0}
\sigma_1 B_{ml}(\zeta^{(j)}(\sigma),\tilde{\zeta}(\sigma))d\sigma_1 
d\sigma_2$ and
$(x^{(j)}_m - z^{(j)}_m)
\displaystyle{\int^1_0\int^1_0} \sigma_1 
\frac{\partial\tilde{A}_m}{\partial a_{\Lambda'}}
\left(\zeta^{(j)}(\sigma),\tilde{\zeta}(\sigma)\right)d\sigma_1
d\sigma_2.$  To obtain these estimates we assumed \eqref{3.6} and \eqref{3.7}.
Consequently, letting $\Theta$ be a component of $\Phi^{(j)}$ and $\Phi_1$, 
and $|\alpha +
\beta  +
\gamma + \alpha'+ \beta'+ \gamma'|
\geq 1$,
then from
\eqref{6.16} and \eqref{6.17} we obtain
\begin{align} \label{7.2}
      & \left|\partial_x^{\alpha}\partial_y^{\beta}\partial_z^{\gamma}
  \partial_X^{\alpha' }\partial_Y^{\beta'}\partial_Z^{\gamma'}
\Theta\right| \leq C_{\alpha,\beta,\gamma,\alpha',\beta',\gamma'}
\end{align}
together with \eqref{6.3} and \eqref{6.4} for $0 \leq s \leq t \leq T, x, 
y, z \in R^{3n}$ and
$X, Y, Z
\in R^{4N}$.
\begin{pro}
Under the assumptions of Theorem 3.1 we have : (1)  There exists a constant 
$\rho^* > 0$
such that the mapping : $R^{3n+4N} \ni (z,Z) \rightarrow (\xi,\Xi) = 
(\Phi,\Phi_1) :=
\left(\Phi^{(1)},\Phi^{(2)},\dotsc,\Phi^{(n)},\Phi_1\right) \in R^{3n+4N}$
is homeomorphic and
$\det \partial(\xi,\Xi)/\partial (z,Z) \geq 1/2$ for each fixed
$0 \leq t - s \leq \rho^*, x,y,X$ and $Y$.  We write its inverse mapping as
$R^{3n+4N} \ni (\xi,\Xi) \rightarrow (z,Z) =
\bigl(z(t,s;x,\xi,y,X,\Xi,Y), \\
Z(t,s;x,\xi,y,X,\Xi,Y)\bigl) \in R^{3n+4N}.$
(2) Let $\eta(t,s;x,\xi,y,X,\Xi,Y)$ be a component of $z$ and $Z$.  Then, 
letting $|\alpha +
\beta  + \gamma+ \alpha'+\beta'+\gamma'| \geq 1$, we have
\begin{align}   \label{7.3}
           & \left|\partial_{\xi}^{\alpha}\partial_x^{\beta}\partial_y^{\gamma}
\partial_{\Xi}^{\alpha'}\partial_X^{\beta'}\partial_Y^{\gamma'}
            \eta(t,s;x,\xi,y,X,\Xi,Y)\right|  \leq
C_{\alpha,\beta,\gamma,\alpha',\beta',\gamma'}
\end{align}
for $0 \leq t - s \leq
\rho^*, \ x, \xi, y \in R^{3n}$ and $X, \Xi, Y \in R^{4N}$.
\end{pro}
\begin{proof}
(1) From \eqref{6.16} and  \eqref{6.17}  we write
\begin{equation}   \label{7.4}
           \partial(\Phi,\Phi_1)/ \partial (z,Z) = I + (t - 
s)d(t,s;x,y,z,X,Y,Z),
\end{equation}
where $I$ is the identity matrix of degree
$3n+4N$. We can see as in the proof of \eqref{7.2} that each component of 
$d$ satisfies
\eqref{7.2} for all
$\alpha,\beta,\gamma,\alpha',\beta'$ and $\gamma'$.  Hence, applying 
Theorem 1.22 in
\cite{Schwartz} to the mapping : $(z,Z) \rightarrow (\Phi,\Phi_1)$, we can 
prove (1). \par
    (2)  We see
\begin{equation*}
           (\xi,\Xi) = 
\left(\Phi(t,s;x,y,z,X,Y,Z),\Phi_1(t,s;x,y,z,X,Y,Z)\right)
\end{equation*}
with $z = z(t,s;x,\xi,y,X,\Xi,Y)$ and $Z = Z(t,s;x,\xi,y,X,\Xi,Y)$.  So, 
\eqref{7.3} follows
from \eqref{7.2} and $\det \partial (\xi,\Xi)/\partial (z,Z) \geq 1/2$.
\end{proof}
\begin{rem}
Let's consider the general case of $\Lambda_2 \subseteq \Lambda_3$. Then 
from \eqref{3.4} and
\eqref{6.12} we take $\tilde{A}(x,a_{\Lambda'_2})$ and 
$B_{ml}(x,a_{\Lambda'_2})$ in
\eqref{6.16} and \eqref{6.17}. Let $\Lambda'_1$ and $\Lambda'_2$ be fixed. 
When $\Lambda'_3 =
\Lambda'_2$, we could determine
$\rho^* > 0$ from \eqref{7.4} such that we get
$\det \partial (\Phi,\Phi_1)/\partial (z,Z) \geq
1/2$ for $0 \leq t - s \leq \rho^*, x,y,z \in R^{3n}$ and $X,Y,Z \in 
R^{4N_3}$.  Let
$\Lambda'_3
\supseteq \Lambda'_2$.  Then, the direct calculations show
\[
\det \partial (\Phi,\Phi_1)/\partial (z,Z) \geq
1/2
\]
for $0 \leq t - s \leq \rho^*, x,y,z \in R^{3n}$  and $X,Y,Z
\in R^{4N_3}$ from \eqref{6.16} and \eqref{6.17} since
  $|V| \partial^2
V_2(a_{\Lambda'})/\partial (\alki)^2 = (c|k|)^2$ are positive. 
Consequently, we can see that
when
$\Lambda'_1$ and $\Lambda'_2$ are  fixed, the constant $\rho^* >0$ is taken 
independently of
$\Lambda'_3$.  See the Pauli-Fierz hamiltonian in \cite{Pauli-Fierz} for 
the condition
$\Lambda'_3
\supseteq \Lambda'_2$.
\end{rem}
\begin{thm}
Let $\rho^* > 0$ be the constant determined in Proposition 7.2. Then under 
the assumptions
of Theorem 3.1 we can find  constants
$K_a \geq 0\ (a = 0,1,2,\dotsc)$ such that
\begin{align} \label{7.5}
\left\Vert \Cts f
\right\Vert_{B^a}
    \leq \mathrm{e}^{K_a(t - s)}\Vert f \Vert_{B^a}, \  0 \leq t - s \leq \rho^*
\end{align}
for all $f(x,a_{\Lambda'}) \in B^a(R^{3n+4N})$.
\end{thm}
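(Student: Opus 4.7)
The plan is to first establish the $a=0$ (i.e.\ $L^2$) estimate by analyzing $\Cts^{*}\Cts$ through the change of variables supplied by Proposition 7.2, and then bootstrap to general $a$ by commuting the weights $x^{\alpha}$ and $(\hbar\partial_x)^{\alpha}$ past $\Cts$.

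\textbf{Step 1 ($a=0$).} The integral kernel of $\Cts^{*}\Cts$ has the form
\begin{equation*}
K(y,Y;y',Y') = |\text{prefactor}|^{2}\iint \exp\!\bigl(i\hbar^{-1}\Delta S\bigr)\,dx\,dX,
\end{equation*}
with $\Delta S = S_c(t,s;\overrightarrow{q}^{t,s}_{x,y},a_{\Lambda X,Y}^{t,s}) - S_c(t,s;\overrightarrow{q}^{t,s}_{x,y'},a_{\Lambda X,Y'}^{t,s})$. By Lemma 6.2, after the obvious relabelling (the integration variable $x$ plays the role of $z$, and the two outer arguments $y,y'$ play the roles of $x,y$),
\begin{equation*}
\Delta S = \tfrac{1}{t-s}\sum_{j}m_j(y'^{(j)}-y^{(j)})\cdot\Phi^{(j)} + \tfrac{1}{(t-s)|V|}(Y'-Y)\cdot\Phi_1,
\end{equation*}
with $\Phi^{(j)},\Phi_1$ the functions from \eqref{6.16}--\eqref{6.17} evaluated at the appropriate arguments. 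Proposition 7.2 then guarantees that, for $0\leq t-s\leq\rho^{*}$, the map $(x,X)\mapsto(\xi,\Xi):=(\Phi,\Phi_1)$ is a global diffeomorphism with Jacobian $\geq 1/2$. Changing variables converts the phase into one linear in $(\xi,\Xi)$ and turns $K$ (up to the prefactor) into a Fourier-type kernel in the dual pair $(y-y',Y-Y')$ against a smooth amplitude $J$ given by the reciprocal Jacobian. By estimate \eqref{7.3}, $J$ together with all of its $(y,y',Y,Y')$-derivatives is bounded uniformly in $0\leq t-s\leq\rho^{*}$. An $L^{2}$-boundedness theorem for oscillatory integral operators of this form, as used in \cite{Ichinose 1997,Ichinose 1999}, then yields $\|\Cts^{*}\Cts - I\|_{L^{2}\to L^{2}} \leq K_0(t-s)$, whence $\|\Cts\|_{L^{2}\to L^{2}}^{2}\leq 1+K_0(t-s)\leq \mathrm{e}^{2K_0(t-s)}$.

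\textbf{Step 2 ($a\geq 1$).} I would proceed by induction on $a$. The key is to verify commutator identities of the shape
\begin{equation*}
x_k\Cts f = \Cts(y_k f) + (t-s)\,T_k(t,s)f, \qquad \hbar\partial_{x_k}\Cts f = \Cts(\hbar\partial_{y_k}f) + (t-s)\,T'_k(t,s)f,
\end{equation*}
together with analogous formulas for the $a_{\Lambda'}$-weights, where $T_k,T'_k$ are $L^{2}$-bounded uniformly in $0\leq t-s\leq\rho^{*}$. The first follows by writing $x_k - y_k = \sqrt{\rho}\,w_k$ in the representation \eqref{6.9}, using $m_j w_k = (\hbar/i)\partial_{w_k}\phi - \partial_{w_k}\psi$ from the phase \eqref{6.10}, and integrating by parts in $w$; the remainder carrying $\partial_{w_k}\psi$ picks up a factor $\sqrt{\rho}$ and is controlled by \eqref{3.6}--\eqref{3.7} through Lemma 7.1. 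The second is obtained analogously by transferring $\partial_{x_k}$ to $\partial_{y_k}$ and absorbing the bounded phase derivatives into the amplitude. Iterating up to $|\alpha|\leq a$ and applying the Step 1 bound to each commutator residue gives $\|\Cts f\|_{B^a}\leq (1+K_a(t-s))\|f\|_{B^a}$, which implies \eqref{7.5}.

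The hard part is Step 1: verifying that $J$ together with all of its mixed $(y,y',Y,Y')$-derivatives satisfies the symbol-type bounds required by the $L^{2}$-boundedness theorem, uniformly in $t-s\in[0,\rho^{*}]$. This is exactly what the chain of estimates \eqref{7.1}--\eqref{7.3}, derived from the hypotheses \eqref{3.6}--\eqref{3.7}, was arranged to supply; once Step 1 is in place, Step 2 is essentially bookkeeping through the commutator identities and the $L^{2}$ bound already in hand.
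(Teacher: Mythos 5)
Your proposal follows essentially the same route as the paper: the $a=0$ case via $\Cts^{*}\Cts$, the phase difference from Lemma 6.2, the change of variables of Proposition 7.2 writing the Jacobian as $1+(t-s)h$, and the Calder\'on--Vaillancourt theorem; then the higher-order case via commutator identities with $(t-s)$-small residues, exactly as in \eqref{7.11}. The only imprecision is calling the residues $T_k,T'_k$ ``$L^2$-bounded'': because $\partial_W\psi$ grows linearly in $(X,W)$ (the $V_2$ term is quadratic) the residual amplitudes satisfy \eqref{6.7} with $M=|\kappa|-|\gamma|>0$, so they are bounded only from $B^{M}$ into $L^2$ (cf.\ \eqref{7.10} and \eqref{7.14}), which still yields the stated $B^a$ estimate.
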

\begin{proof}
The definition \eqref{6.8} says
\begin{align} \label{7.6}
C(s,s) = \text{Identity}.
\end{align}
So \eqref{7.5} holds for $t = s$. \par
   Let $0 < t - s \leq \rho^*$.  We take $\chi \in \Cspace(R^{3n+4N})$ with 
compact support
such that $\chi(0) = 1$.  Let $\epsilon > 0$ and $f \in 
\Sspace(R^{3n+4N})$.  Then from
\eqref{6.8} and
\eqref{6.18} we can write
\begin{align} \label{7.7}
& \Cts^* \chi(\epsilon\cdot)^2 \Cts f =
\left\{\prod_{j=1}^n\left(\frac{m_j}{2\pi \hbar(t - s)}\right)^{3}\right\}
               \left(\frac{1}{2\pi \hbar|V|(t - s)}\right)^{4N} \iint 
f(y,Y)dydY \notag\\
& \times \iint \chi(\epsilon z,\epsilon Z)^2  \exp \left\{
i\hbar^{-1}S_c(t,s;\overrightarrow{q}^{t,s}_{z,y},a_{\Lambda Z,Y}^{t,s}) -
i\hbar^{-1}S_c(t,s;\overrightarrow{q}^{t,s}_{z,x},a_{\Lambda 
Z,X}^{t,s})\right\} dzdZ \notag\\
& = \left\{\prod_{j=1}^n\left(\frac{m_j}{2\pi \hbar(t - s)}\right)^{3}\right\}
               \left(\frac{1}{2\pi \hbar|V|(t - s)}\right)^{4N} \iint 
f(y,Y)dydY \iint
\chi(\epsilon z,\epsilon Z)^2\notag\\
& \times  \exp\left(i\sum_{j=1}^n (x^{(j)} - y^{(j)})\cdot 
\frac{m_j\Phi^{(j)}}{\hbar (t - s)}
  + i(X - Y) \cdot \frac{\Phi_1}{\hbar |V| (t - s)}\right)dzdZ.
\end{align}
We can make the change of variables : $(z,Z) \rightarrow (\xi,\Xi) = 
(\Phi,\Phi_1)$ in
\eqref{7.7} from Proposition 7.2.  Then
\begin{align*}
& \Cts^* \chi(\epsilon\cdot)^2 \Cts f =
\left\{\prod_{j=1}^n\left(\frac{m_j}{2\pi \hbar(t - s)}\right)^{3}\right\}
               \left(\frac{1}{2\pi \hbar|V|(t - s)}\right)^{4N} \notag\\
& \times  \iint f(y,Y)dydY \iint
\chi(\epsilon z,\epsilon Z)^2 \Bigg\{\exp\Bigg(i\sum_{j=1}^n (x^{(j)} - 
y^{(j)})\cdot
\frac{m_j\xi^{(j)}}{\hbar (t - s)} \notag\\
&
  + i(X - Y) \cdot \frac{\Xi}{\hbar |V|(t - s)}\Bigg)\Bigg\}\det \frac{\partial
(z,Z)}{\partial (\xi,\Xi)}d\xi d\Xi.
\end{align*}
The equation \eqref{7.4} and (2) of Proposition 7.2 show
\begin{equation} \label{7.8}
\det\frac{\partial (z,Z)}{\partial (\xi,\Xi)} = 1 + (t - 
s)h(t,s;x,\xi,y,X,\Xi,Y),
\end{equation}
where $h(t,s;x,\xi,y,X,\Xi,Y)$ satisfies \eqref{7.3} for all
$\alpha,\beta,\gamma,\alpha',\beta'$ and $\gamma'$.  Consequently from (2) 
of Proposition 7.2
we have
\begin{align} \label{7.9}
& \lim_{\epsilon \rightarrow 0}\Cts^* \chi(\epsilon\cdot)^2 \Cts f =
\left(\frac{1}{2\pi}\right)^{3n+4N}\lim_{\epsilon \rightarrow 0}\iint 
f(y,Y)dydY \iint
\chi(\epsilon z,\epsilon Z)^2 \notag\\
& \times   \bigl\{\exp\bigl(i (x- y)\cdot \gamma + i(X - Y) \cdot \Gamma 
\bigr)\bigr\}
\det \frac{\partial
(z,Z)}{\partial (\xi,\Xi)}d\gamma d\Gamma \notag \\
& = f(x,X) + (t - s) \left(\frac{1}{2\pi}\right)^{3n+4N} \text{Os} -\iiiint 
\left\{\exp\bigl(i
(x- y)\cdot
\gamma + i(X - Y) \cdot \Gamma \bigr) \right\}
  \notag\\
& \times
  h(t,s;x,\xi,y,X,\Xi,Y)f(y,Y)dydYd\gamma d\Gamma,
\end{align}
where $\xi^{(j)} = \hbar (t - s)\gamma^{(j)}/m_j\ (j= 1,2,\dotsc,n)$ and 
$\Xi = \hbar|V|(t -
s)\Gamma$. We note that the second term on the rhs of \eqref{7.9} is a 
pseudo-differential
operator.  So, applying the Calder\'on-Vaillancourt theorem 
(\cite{Calderon-Vaillancourt}),
we obtain
\begin{align*}
& \lim_{\epsilon \rightarrow 0} \|\chi(\epsilon\cdot)\Cts f \|^2
=\lim_{\epsilon \rightarrow 0} \bigl(\Cts^* \chi(\epsilon\cdot)^2 \Cts f, f 
\bigr)  \\
& = \bigl(\lim_{\epsilon \rightarrow 0}\Cts^* \chi(\epsilon\cdot)^2 \Cts f, 
f \bigr)  \leq (1 +
2K_0(t-s))\|f\|^2 \\ & \leq \mathrm{e}^{2K_0(t - s)} \|f\|^2
\end{align*}
with a constant $K_0 \geq 0$.  Hence we get \eqref{7.5}
with $a=0$ by Fatou's lemma.  \par
Let $p(x,w,X,W)$ be a $\Cspace$ function satisfying
\eqref{6.7} with an integer
$M \geq 0$.  Then we obtain
\begin{equation} \label{7.10}
\|P(t,s)f\| \leq \text{Const.} \|f\|_{B^{M}}
\end{equation}
as in the proof of \eqref{7.5} with $a=0$.  See the proof of Proposition 
4.3  in \cite{Ichinose
1999} for  further details. \par
   Let's remember the expression \eqref{6.9} of $\Cts f$.
Set $\zeta := (x,X)$ and let $\kappa = 
(\kappa_1,\kappa_2,\dotsc,\kappa_{3n+4N})$ be an
arbitrary multi-index.  Then we can see that
$\partial_{\zeta}^{\kappa}(\Cts f) -
\Cts(\partial_{\zeta}^{\kappa}f)$ and
$\zeta^{\kappa}(\Cts f) - \Cts(\zeta^{\kappa}f)$ are written in the form
\begin{align}  \label{7.11}
   & (t - s)\sum_{|\gamma| \leq 
|\kappa|}\tilde{P}_{\gamma}(t,s)(\partial_{\zeta}^{\gamma}f)
     := (t - s)\sum_{|\gamma| \leq |\kappa|} \left( \prod_{j=1}^n 
\sqrt{\frac{m_j}{2\pi
i\hbar}}^{\ 3} \right)\sqrt{\frac{1}{2\pi i\hbar|V|}}^{\ 4N}
  \notag \\
       & \quad \times \text{Os}-\iint
\bigl(\exp 
i\hbar^{-1}\phi(t,s;x,w,X,W)\bigr)p_{\gamma}(t,s;x,\sqrt{\rho}w,X,\sqrt{\rho 
}W )
\notag
\\ & \qquad \times (\partial_{\zeta}^{\gamma}f)(x -\sqrt{\rho}w, X 
-\sqrt{\rho}W)dw dW
\end{align}
respectively, where $p_{\gamma}(t,s;x,w,X,W)$ satisfies \eqref{6.7} with $M 
= |\kappa| -
|\gamma|$ for all $\alpha, \beta, \alpha'$ and $\beta'$.  We can prove 
these results
  by induction with respect to
$|\kappa|$, using
$\hbar \partial_{w^{(j)}} \mathrm{e}^{im_j\hbar^{-1}|w^{(j)}|^2/2} = im_j 
w^{(j)}
\mathrm{e}^{im_j\hbar^{-1}|w^{(j)}|^2/2}$, $\hbar\partial_{W}
\mathrm{e}^{i\hbar^{-1}|W|^2/(2|V|)} =(i W/|V|)
  \mathrm{e}^{i\hbar^{-1}|W|^2/(2|V|)}$ and the integration by parts in 
\eqref{6.9}.  See the
proof of Lemma 3.2 in \cite{Ichinose 2003} for further details.  \par
   Let $|\kappa| = a\ (a = 0,1,2,\dotsc)$.  Then we have
\begin{equation*}
\Vert\partial_{\zeta}^{\kappa}(\Cts f)\Vert \leq \Vert\Cts 
(\partial_{\zeta}^{\kappa}f)\Vert
+
   (t - s)\sum_{|\gamma| \leq a}\Vert 
\tilde{P}_{\gamma}(t,s)(\partial_{\zeta}^{\gamma}f)\Vert.
\end{equation*}
Applying \eqref{7.5} with $a = 0$ and \eqref{7.10} to the rhs above, we get
\begin{align*}
\Vert\partial_{\zeta}^{\kappa}(\Cts f)\Vert
& \leq e^{K_0(t -s)}\Vert\partial_{\zeta}^{\kappa}f\Vert +
   \text{Const.}(t - s)\sum_{|\gamma| \leq a}\Vert
\partial_{\zeta}^{\gamma}f\Vert_{B^{a-|\gamma|}}.
\end{align*}
We know from Lemma 2.3 with $s = 1$ and $a = b$ in \cite{Ichinose 1995} 
that there exist a
constant $\mu_a \geq 0$ and $\lambda_a(\zeta,\eta)$ satisfying
\begin{equation} \label{7.12}
   |\partial^{\alpha}_{\eta}\partial^{\beta}_{\zeta} \lambda_a(\zeta,\eta)|
      \leq C_{\alpha,\beta}<\zeta;\eta>^{-a}
\end{equation}
for all $\alpha$ and $\beta$, and
\begin{equation} \label{7.13}
    \Lambda_a(\zeta,D_{\zeta}) =
   \left(\mu_a +   <\zeta>^{a} + <D_{\zeta}>^{a}\right)^{-1}
\end{equation}
on $\Sspace$, where $\Lambda_a(\zeta,D_{\zeta})$ is the pseudo-differential 
operator with
symbol $\lambda_a(\zeta,\eta)$.  So, using Lemma 2.4 in \cite{Ichinose 
1995} and the
Calder\'on-Vaillancourt theorem,  we have
\begin{align} \label{7.14}
& \Vert \partial_{\zeta}^{\gamma}f\Vert_{B^{a-|\gamma|}}
  \leq
\text{Const.}  \Vert
  \left(\mu_{a-|\gamma|} +   <\zeta>^{a-|\gamma|} +
<D_{\zeta}>^{a-|\gamma|}\right)\partial_{\zeta}^{\gamma}f\Vert \notag \\
& =
\text{Const.}  \Vert
  \left\{\left(\mu_{a-|\gamma|} +   <\zeta>^{a-|\gamma|} +
<D_{\zeta}>^{a-|\gamma|}\right)\partial_{\zeta}^{\gamma}\Lambda_a
\right\}\bigl(\mu_{a} +   <\zeta>^{a} \notag \\
&\qquad  + <D_{\zeta}>^{a}\bigr)f\Vert \leq
\text{Const.}
\Vert f\Vert_{B^{a}}.
\end{align}
Hence we get
\begin{align} \label{7.15}
\Vert\partial_{\zeta}^{\kappa}(\Cts f)\Vert
& \leq e^{K_0(t -s)}\Vert\partial_{\zeta}^{\kappa}f\Vert +
   \text{Const.}(t - s)\Vert
f\Vert_{B^{a}}.
\end{align}
In the same way we get
\begin{align} \label{7.16}
\Vert\zeta^{\kappa}(\Cts f)\Vert
& \leq e^{K_0(t -s)}\Vert\zeta^{\kappa}f\Vert +
   \text{Const.}(t - s)\Vert
f\Vert_{B^{a}}.
\end{align}
Thus we obtain
\begin{align*}
\Vert \Cts f\Vert_{B^{a}}
& \leq e^{K_0(t -s)}\Vert f\Vert_{B^{a}} +
   \text{Const.}(t - s)\Vert
f\Vert_{B^{a}} \\
& \leq e^{K_a(t -s)}\Vert f\Vert_{B^{a}}.
\end{align*}
This completes the proof of Theorem 7.3.
\end{proof}
\begin{pro}
Let $0 \leq t - s \leq \rho^*$ and $p(x,w,X,W)$ satisfy \eqref{6.7} with an 
integer $M
\geq 0$.  Then $P(t,s)$ is a continuous operator from $B^a\ (a = 
0,1,2,\dotsc)$ into $B^{a+M}$.
\end{pro}
\begin{proof}
  Let $\zeta = (x,X)$ and $f \in \Sspace(R^{3n+4N})$. We also use 
\eqref{6.9} as in the proof of
Theorem 7.3.  Then we have
\[
\partial_{\zeta}^{\kappa}P(t,s)f = \sum_{\gamma \leq \kappa}P_{\gamma}(t,s)
\partial_{\zeta}^{\gamma}f,
\]
where $\gamma \leq \kappa$ denotes $\gamma_j \leq \kappa_j$ for all $j$ and
$p_{\gamma}(t,s;x,w,X,W)$ satisfy \eqref{6.7} with $M + |\kappa|-|\gamma|$ 
as $M$. Using
$\zeta = (x,X) = (x - \sqrt{\rho}w,X - \sqrt{\rho}W) + \sqrt{\rho}(w,W)$, 
we also have
\[
\zeta^{\kappa}P(t,s)f = \sum_{\gamma \leq \kappa}Q_{\gamma}(t,s)
\zeta^{\gamma}f,
\]
where $q_{\gamma}(t,s;x,w,X,W)$ satisfy \eqref{6.7} with $M + 
|\kappa|-|\gamma|$ as $M$.
Hence   from \eqref{7.10} and \eqref{7.14} we see
\begin{align} \label{7.17}
\Vert P(t,s) f\Vert_{B^{a}}
& = \Vert P(t,s) f\Vert + \sum_{|\kappa| = a}\left(\Vert 
\zeta^{\kappa}P(t,s) f\Vert +
\Vert \partial_{\zeta}^{\kappa}P(t,s) f\Vert \right) \notag
    \\
& \leq \text{Const.}\Vert f\Vert_{B^{a+M}}.
\end{align}
So we could complete the proof.
\end{proof}
\section{The consistency of the fundamental operator}
    Let $\Cts$ and $H(t)$ be the fundamental operator defined in \S 6 and 
the operator defined
by \eqref{3.10} with $a_{\Lambda'} = a_{\Lambda'_2} = X$, respectively.
\begin{thm}
Under the assumptions of Theorem 3.1 there exist integers $M \geq 0, M' 
\geq 0, \Cspace$
functions $r(t,s;x,w,X,W)$ and $r'(t,s;x,w,X,W)$
in  $0 \leq s \leq t \leq T$, $(x,w) \in R^{6n}$ and $(X,W) \in R^{8N}$ 
satisfying
\eqref{6.7} for all $\alpha, \beta, \alpha'$ and $\beta'$, respectively 
such that
\begin{equation}  \label{8.1}
       \left(i\hbar\frac{\partial}{\partial t} - H(t)\right)\Cts f
        = \sqrt{t - s}R(t,s)f
\end{equation}
and
\begin{equation}  \label{8.2}
       i\hbar\frac{\partial}{\partial s}\Cts f + \Cts H(s) f
        = \sqrt{t - s}R'(t,s)f,
\end{equation}
where $R(t,s)$ and $R'(t,s)$ are the operators defined by \eqref{6.8}.
\end{thm}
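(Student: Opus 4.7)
The plan is to differentiate $\Cts f$ under the integral sign in the form \eqref{6.9}, track cancellations using an approximate Hamilton--Jacobi identity, and verify that the remainder is of the desired form. Writing $\phi = \phi_0(w,W) + \psi$ from \eqref{6.10}, with free Gaussian $\phi_0 = \sum_j \tfrac{m_j}{2}|w^{(j)}|^2 + \tfrac{|W|^2}{2|V|}$ and perturbation $\psi$ coming from $V_1, V_2, \tilde A$, the central analytic fact is that the straight-line action \eqref{6.6} satisfies
\[
\partial_t S_c + \sum_j \tfrac{1}{2m_j}\bigl|\partial_{x^{(j)}}S_c - \tfrac{e_j}{c}\tilde{A}(x^{(j)},X)\bigr|^2 + V_1(x) + \sum_{k,\mathrm{i},\mathrm{l}} \tfrac{|V|}{2}(\partial_{\alki} S_c)^2 + V_2(X) = O(\sqrt{\rho}),
\]
with $\rho = t-s$ and the $O(\sqrt{\rho})$ bound following from \eqref{3.6}, \eqref{3.7}, \eqref{6.3}, \eqref{6.4} and Lemma 7.1 once one substitutes $x-y = \sqrt{\rho}w$, $X-Y = \sqrt{\rho}W$.

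Split $H(t) = H_0 + V_1(x) + V_2(X) + H_A$ into the free kinetic part, the scalar potentials, and the $\tilde A$-dependent pieces of $|(\hbar/i)\partial_{x^{(j)}} - (e_j/c)\tilde A|^2/(2m_j)$; the Coulomb-gauge identity $\nabla_x\cdot\tilde A = 0$, which follows from $k\cdot\overrightarrow{e_{\mathrm{l}}}(k)=0$ in \eqref{3.4}, eliminates the operator-ordering ambiguity and keeps $H_A$ of first order in $(\hbar/i)\partial_{x^{(j)}}$ with uniformly bounded coefficient. Applying $i\hbar\partial_t - H(t)$ under the integral and using $(\hbar/i)\partial_{x^{(j)}} e^{i\hbar^{-1}\phi} = (\partial_{x^{(j)}}\phi)\,e^{i\hbar^{-1}\phi}$ together with the analogous identity for $\partial_{\alki}$, the Hamilton--Jacobi relation above produces exact cancellation of all $O(\rho^{-1})$ and $O(\rho^{-1/2})$ contributions coming from the Gaussian, from the normalization $(t-s)^{-(3n+4N)/2}$, and from the action of $H$ on the integrand. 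Contributions in which $(\hbar/i)\partial$ lands on $f$ rather than on the exponential are converted to polynomial factors in $(w,W)$ via
\[
(\hbar/i)\partial_{w^{(j)}} e^{im_j|w^{(j)}|^2/(2\hbar)} = m_j w^{(j)} e^{im_j|w^{(j)}|^2/(2\hbar)}, \quad (\hbar/i)\partial_{W} e^{i|W|^2/(2\hbar|V|)} = \tfrac{W}{|V|} e^{i|W|^2/(2\hbar|V|)},
\]
followed by a single integration by parts. What remains is a single oscillatory integral of the form \eqref{6.8} whose amplitude $r(t,s;x,w,X,W)$ is polynomial in $(w,W)$ and bounded in $(x,X)$ uniformly for $0\leq s\leq t\leq T$, so that \eqref{6.7} is satisfied for some integer $M\geq 0$; factoring out the common $\sqrt{\rho}$ yields \eqref{8.1}.

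Identity \eqref{8.2} is obtained by the symmetric computation with $t$ and $s$ exchanged: $i\hbar\partial_s\Cts f$ contributes the same structure with opposite sign on the normalization, while $\Cts H(s)f$ is brought under the integral by letting $H(s)$ act on $f$ at the shifted argument $(x-\sqrt{\rho}w,X-\sqrt{\rho}W)$, after which the parallel Hamilton--Jacobi identity with $\partial_s$ in place of $\partial_t$ supplies the cancellation. The main obstacle is the systematic bookkeeping of the $\tilde A$-cross terms, since $\tilde A(x^{(j)},X)$ couples particle and field variables nontrivially and every ordering of $(\hbar/i)\partial$ and $\tilde A$ must be verified to produce a residual amplitude that still satisfies \eqref{6.7}; this check reduces to repeated application of Lemma 7.1 and the uniform bounds \eqref{7.2} on the $\Phi$-quantities, and although conceptually routine is by far the most calculation-heavy step.
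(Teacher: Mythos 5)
Your overall strategy is the same as the paper's: differentiate under the integral sign, use the approximate Hamilton--Jacobi identity for the straight-line action $S_c$ to cancel the $O(\rho^{-1})$ terms (this is exactly the paper's $r_1 = \sqrt{\rho}\,q_5$, obtained from the expansions \eqref{8.6}--\eqref{8.10}), handle the $O(\hbar)$ transport terms separately, and convert polynomial growth in $(w,W)$ into integrations by parts so the residual amplitude satisfies \eqref{6.7}. The treatment of \eqref{8.2} by the symmetric computation in $(s,y,Y)$ also matches the paper.

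There is, however, one concrete false step: you claim that $\nabla_x\cdot\tilde{A}=0$ follows from $k\cdot\overrightarrow{e_{\mathrm{l}}}(k)=0$, and you use this to argue that the $\tilde{A}$-part of $H(t)$ is purely first order in $(\hbar/i)\partial_{x^{(j)}}$. But $\tilde{A}$ in \eqref{3.4} carries the spatial cutoff $g(x)$, so $\nabla_x\cdot\tilde{A} = (\nabla g)\cdot(\cdots) \neq 0$ in general; the Coulomb gauge is destroyed by the cutoff. Expanding the square in \eqref{3.10} therefore produces a genuine zeroth-order term proportional to $\hbar\,(\nabla_x\cdot\tilde{A})(x^{(j)},X)$, which is $O(1)$ in $\rho=t-s$ and would, as your argument stands, leave an uncancelled contribution incompatible with the factor $\sqrt{t-s}$ in \eqref{8.1}. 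The repair is what the paper does: this term is kept explicitly in $r_2$ (see \eqref{8.5}) and cancels against the matching term that arises from $\sum_j m_j^{-1}\Delta_{x^{(j)}}S_c$, since $\Delta_{x^{(j)}}S_c$ contains $\frac{2e_j}{c}\int_0^1(1-\theta)(\nabla_x\cdot\tilde{A})(x^{(j)}-\theta(x^{(j)}-y^{(j)}), X-\theta(X-Y))\,d\theta = \frac{e_j}{c}(\nabla_x\cdot\tilde{A})(x^{(j)},X)+O(\sqrt{\rho})$ (see \eqref{8.12}). Once you replace the false gauge identity with this explicit cancellation, your argument goes through and coincides with the paper's proof.
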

\begin{proof}
In this proof we write $x$ and $y$ as $\xvec$ and $\yvec$, respectively. 
Let $x$ denote
variables in $R^3$. It follows from \eqref{3.10}, \eqref{6.6} and 
\eqref{6.8} that the direct
calculations show
\begin{align}  \label{8.3}
       & \left(i\hbar\frac{\partial}{\partial t} - H(t)\right)\Cts f
        = - \left(\prod_{j=1}^n\sqrt{\frac{m_j}{2\pi i\hbar(t - s)}}^{\ 
3}\right)
               \sqrt{\frac{1}{2\pi i\hbar|V|(t - s)}}^{\ 4N} \notag \\
& \times \iint \left( \exp 
i\hbar^{-1}S_c(t,s;\overrightarrow{q}^{t,s}_{\xvec,\yvec},a_{\Lambda
X,Y}^{t,s}\right)
\Bigl\{r_1(t,s;\xvec,\yvec,X,Y)  \notag\\
&\qquad + \frac{i\hbar}{2}r_2(t,s;\xvec,\yvec,X,Y)\Bigr\}f(\yvec,Y)d\yvec dY
\end{align}
by means of \eqref{6.3} and \eqref{6.4},
where
\begin{align}  \label{8.4}
        r_1(t,s;\xvec,\yvec,X,Y) & =\partial_t\,
S_c(t,s;\overrightarrow{q}^{t,s}_{\xvec,\yvec},a_{\Lambda X,Y}^{t,s}) +
\sum_{j=1}^n \frac{1}{2m_j}
            \bigl|\partial_{x^{(j)}}\, S_c - 
\frac{e_j}{c}\tilde{A}(x^{(j)},X)\bigr|^2
\notag
\\  & + V_1(\xvec) + \frac{|V|}{2}\left| \partial_X S_c\right|^2 + V_2(X)
\end{align}
and
\begin{align}  \label{8.5}
         r_2 & = \frac{3n + 4N}{t-s} - \sum_{j=1}^n 
\frac{1}{m_j}\Delta_{x^{(j)}}\,S_c
\notag\\
& + \frac{1}{c}\sum_{j=1}^n \frac{e_j}{m_j}(\nabla_{x} \cdot 
\tilde{A})(x^{(j)},X)
- |V|\Delta_X S_c, \quad x \in R^3
\end{align}
  (cf. the proof of Proposition 2.3 in \cite{Ichinose 1997}). \par
   Set $\rho = t - s$.  From \eqref{6.6} we can write
\begin{align}  \label{8.6}
       & \partial_{x^{(j)}} S_c - \frac{e_j}{c}\tilde{A}(x^{(j)},X)
= \frac{m_j(x^{(j)}- y^{(j)})}{\rho} \notag \\
        &   + \frac{e_j}{c} \int^1_0 \left\{ \tilde{A}\left(x^{(j)} - 
\theta(x^{(j)} -
y^{(j)}), X - \theta (X - Y)\right)
               - \tilde{A}(x^{(j)},X)\right\}d\theta  \notag \\
             & +  \frac{e_j}{c}\sum_{l=1}^3 (x^{(j)}_l - y^{(j)}_l) \int^1_0
(1 - \theta )
                 \frac{\partial \tilde{A}_l}{\partial x}\left(x^{(j)} -
\theta(x^{(j)} - y^{(j)}), X - \theta(X - Y)\right)d\theta \notag\\
&
          - \rho \int^1_0 (1 - \theta )\frac{\partial V_1}{\partial x^{(j)}}
                       (\xvec - \theta(\xvec - \yvec))d\theta \notag\\
       & = \frac{m_j(x^{(j)}- y^{(j)})}{\rho} - \frac{e_j}{2c}\sum_{m=1}^3 
(x^{(j)}_m -
y^{(j)}_m) \frac{\partial \tilde{A}}{\partial x_m}(x^{(j)},X) \notag \\
& - \frac{e_j}{2c}\sum_{m=1}^{4N}  (X_m - Y_m) \frac{\partial \tilde{A}}
{\partial X_m}(x^{(j)},X) + \frac{e_j}{2c}\sum_{l=1}^3  (x^{(j)}_l -
y^{(j)}_l) \frac{\partial \tilde{A}_l}{\partial x}(x^{(j)},X) \notag \\
& + \rho q_1(t,s;\xvec,\frac{\xvec - \yvec}{\sqrt{\rho}},X, \frac{X - 
Y}{\sqrt{\rho}})
\end{align}
and
\begin{align}  \label{8.7}
       & \partial_{X} S_c
= \frac{X - Y}{|V|\rho} - \rho \int^1_0 (1 - \theta )\frac{\partial 
V_2}{\partial X}
                       (X - \theta(X - Y))d\theta \notag \\
             & +  \frac{1}{c}\sum_{j=1}^n\sum_{l=1}^3 e_j(x^{(j)}_l - 
y^{(j)}_l) \int^1_0
(1 - \theta )
                 \frac{\partial \tilde{A}_l}{\partial X}\left(x^{(j)} -
\theta(x^{(j)} - y^{(j)}), X - \theta(X - Y)\right)d\theta \notag\\
& = \frac{X - Y}{|V|\rho} + \frac{1}{2c}\sum_{j=1}^n\sum_{l=1}^3 
e_j(x^{(j)}_l - y^{(j)}_l)
                 \frac{\partial \tilde{A}_l}{\partial X}\left(x^{(j)}, X \right)
+ \rho q_2(t,s;\xvec,\frac{\xvec - \yvec}{\sqrt{\rho}},X, \frac{X - 
Y}{\sqrt{\rho}}).
\end{align}
It holds that
\begin{align}  \label{8.8}
       & - \sum_{k,m=1}^3\left(x^{(j)}_k- y^{(j)}_k\right)\left(x^{(j)}_m -
y^{(j)}_m\right) \frac{\partial \tilde{A}_k}{\partial x_m}(x^{(j)},X) \notag \\
& + \sum_{k,l=1}^3\left(x^{(j)}_k- y^{(j)}_k\right)\left(x^{(j)}_l -
y^{(j)}_l\right) \frac{\partial \tilde{A}_l}{\partial x_k}(x^{(j)},X) = 0.
\end{align}
The equations \eqref{8.6} - \eqref{8.8} show
\begin{align}  \label{8.9}
         & \sum_{j=1}^n \frac{1}{2m_j} \bigl|\partial_{x^{(j)}}\, S_c -
\frac{e_j}{c}\tilde{A}(x^{(j)},X)\bigl|^2 + \frac{|V|}{2}\left| \partial_X 
S_c\right|^2
\notag
\\  & = \frac{1}{2\rho^2} \sum_{j=1}^n m_j\left| x^{(j)} - y^{(j)} \right|^2
+ \frac{|X - Y|^2}{2|V|\rho^2} + \sqrt{\rho} q_3(t,s;\xvec,\frac{\xvec - 
\yvec}{\sqrt{\rho}},X,
\frac{X - Y}{\sqrt{\rho}}).
\end{align}
 From \eqref{6.6} we also have
\begin{align}  \label{8.10}
          \partial_t\, S_c(t,s;& \qvec_{\xvec,\yvec}^{t,s},a_{\Lambda 
X,Y}^{t,s})  =
- \frac{1}{2\rho^2} \sum_{j=1}^n m_j\left| x^{(j)} - y^{(j)} \right|^2 - 
V_1(\xvec)
- \frac{|X - Y|^2}{2|V|\rho^2} \notag \\
& -V_2(X) + \sqrt{\rho} q_4(t,s;\xvec,\frac{\xvec - \yvec}{\sqrt{\rho}},X, 
\frac{X -
Y}{\sqrt{\rho}}).
\end{align}
Hence together with \eqref{8.4} we obtain
\begin{equation}  \label{8.11}
r_1(t,s;\xvec,\yvec,X,Y)  = \sqrt{\rho} q_5(t,s;\xvec,\frac{\xvec - 
\yvec}{\sqrt{\rho}},X,
\frac{X - Y}{\sqrt{\rho}}).
\end{equation} \par
   From \eqref{6.6} or \eqref{8.6} - \eqref{8.7} the same arguments as for 
$r_1$ show
\begin{align}  \label{8.12}
          & \sum_{j=1}^n \frac{1}{m_j}\Delta_{x^{(j)}}\,S_c   + |V|\Delta_X 
S_c= \frac{3n +
4N}{\rho} +  \frac{2}{c}\sum_{j=1}^n \frac{e_j}{m_j}\int^1_0 (1 - \theta)
\notag\\
& \times \left(\nabla_{x} \cdot
A\right) \left(x^{(j)} -
\theta(x^{(j)} - y^{(j)}), X - \theta (X - Y)\right) d\theta \notag \\
& + \sqrt{\rho} q_6(t,s;\xvec,\frac{\xvec - \yvec}{\sqrt{\rho}},X, \frac{X -
Y}{\sqrt{\rho}})  = \frac{3n + 4N}{\rho} \notag \\
& + \frac{1}{c} \sum_{j=1}^n\frac{e_j}{m_j}\left(\nabla_{x} \cdot
\tilde{A}\right)(x^{(j)},X) + \sqrt{\rho} q_7(t,s;\xvec,\frac{\xvec - 
\yvec}{\sqrt{\rho}},X,
\frac{X - Y}{\sqrt{\rho}}).
\end{align}
Hence together with \eqref{8.5} we get
\begin{equation}  \label{8.13}
r_2(t,s;\xvec,\yvec,X,Y)  = -\sqrt{\rho} q_7(t,s;\xvec,\frac{\xvec - 
\yvec}{\sqrt{\rho}},X,
\frac{X - Y}{\sqrt{\rho}}).
\end{equation}
Thus we could complete the proof of \eqref{8.1} from \eqref{8.3}, 
\eqref{8.11} and
\eqref{8.13}.
\par
    We consider \eqref{8.2}.  By direct calculations we  see that the lhs of 
\eqref{8.2}
is equal to
\begin{align}  \label{8.14}
       &  - \left(\prod_{j=1}^n\sqrt{\frac{m_j}{2\pi i\hbar(t - s)}}^{\ 
3}\right)
               \sqrt{\frac{1}{2\pi i\hbar|V|(t - s)}}^{\ 4N} \notag \\
& \times \iint \left( \exp 
i\hbar^{-1}S_c(t,s;\overrightarrow{q}^{t,s}_{\xvec,\yvec},a_{\Lambda
X,Y}^{t,s}\right)
\Bigl\{r'_1(t,s;\xvec,\yvec,X,Y)  \notag\\
&\qquad + \frac{i\hbar}{2}r'_2(t,s;\xvec,\yvec,X,Y)\Bigr\}f(\yvec,Y)d\yvec dY,
\end{align}
where
\begin{align}  \label{8.15}
        r'_1(t,s;\xvec,\yvec,X,Y) & =\partial_s\,
S_c(t,s;\overrightarrow{q}^{t,s}_{\xvec,\yvec},a_{\Lambda X,Y}^{t,s}) -
\sum_{j=1}^n \frac{1}{2m_j}
            \bigl|\partial_{y^{(j)}}\, S_c + 
\frac{e_j}{c}\tilde{A}(y^{(j)},Y)\bigr|^2
\notag
\\  & - V_1(\yvec) - \frac{|V|}{2}\left| \partial_Y S_c\right|^2 - V_2(Y)
\end{align}
and
\begin{align}  \label{8.16}
         r'_2 & = - \frac{3n + 4N}{t-s} + \sum_{j=1}^n 
\frac{1}{m_j}\Delta_{y^{(j)}}\,S_c
\notag\\
& + \frac{1}{c}\sum_{j=1}^n \frac{e_j}{m_j}(\nabla_{x} \cdot 
\tilde{A})(y^{(j)},Y)
+ |V|\Delta_Y S_c.
\end{align}
Consequently we can prove \eqref{8.2} as in the proof of \eqref{8.1}.
\end{proof}
\section{The proofs of the main results}
We first prove Theorem 3.1. Let $\rho^* > 0$ be the constant determined in 
Proposition 7.2
and $\chi
\in
\Cspace(R^{3n+4N})$  with compact support such that $\chi (0) = 1$.    We 
consider bounded
operators $K_j$ and
$K'_j\ (j = 1,2,\dotsc,\nu)$ on
$B^a(R^{3n+4N})$.  Then, it holds for $f \in B^a(R^{3n+4N})$ that
\begin{align} \label{9.1}
&
K_{\nu} \chi(\epsilon\cdot) K_{\nu - 1}  \chi(\epsilon\cdot)
\cdots \chi(\epsilon\cdot) K_1\chi(\epsilon\cdot) f
- K'_{\nu} K'_{\nu - 1}  \cdots  K'_1f
     \notag \\
  & = \sum_{j=1}^{\nu} K_{\nu} \chi(\epsilon\cdot) \cdots 
\chi(\epsilon\cdot)K_{j+1}
\chi(\epsilon\cdot)\left(K_{j} - K'_{j}\right)K'_{j-1} \cdots  K'_1f \notag\\
& \quad + \sum_{j=0}^{\nu-1} K_{\nu} \chi(\epsilon\cdot) \cdots 
\chi(\epsilon\cdot)K_{j+1}
\left(\chi(\epsilon\cdot) - 1\right)K'_{j} \cdots  K'_1f.
\end{align}
\par
   Noting \eqref{6.1} and \eqref{6.2}, from \eqref{3.5} we have
\begin{equation*}
S_c(T,0;\qvec_{\Delta},a_{\Lambda\Delta}) = \sum_{l=1}^{\nu}
S_c 
\left(\tau_l,\tau_{l-1};\qvec_{x^{(l)},x^{(l-1)}}^{\tau_l,\tau_{l-1}},
a_{\Lambda
X^{(l)},X^{(l-1)}}^{\tau_l,\tau_{l-1}}\right),
\end{equation*}
where  $X^{(l)} = a_{\Lambda'}^{(l)}\ (l = 1,2,\dotsc,\nu-1)$ and $X^{(\nu)}  =
a_{\Lambda'}$. So, \eqref{3.8} is written as
\begin{equation*}
     \lim_{\epsilon\rightarrow 0}\  {\cal C}(T,\tau_{\nu-1})\chi(\epsilon\cdot)
     {\cal C}(\tau_{\nu-1},\tau_{\nu-2})\chi(\epsilon\cdot)\cdots{\cal
C}(\tau_2,\tau_1)\chi(\epsilon\cdot)  {\cal C}(\tau_1,0)\chi(\epsilon\cdot)f
\end{equation*}
for $f \in B^a(R^{3n+4N})$. Let $f \in B^a(R^{3n+4N})$ and $|\Delta|\leq 
\rho^*.$
We can easily see
\begin{equation*}
\sup_{0 < \epsilon \leq 1} \Vert \chi(\epsilon\cdot)f\Vert_{B^a}
\leq \text{Const.}\Vert f\Vert_{B^a}
\end{equation*}
and
\begin{equation*}
   \lim_{\epsilon\rightarrow 0}\Vert (\chi(\epsilon\cdot) - 1)f\Vert_{B^a}
= 0.
\end{equation*}
Consequently, using Theorem 7.3 and \eqref{9.1}, we can see that there exists
\eqref{3.8} in
$B^a$, which is written as
\begin{equation} \label{9.2}
{\cal C}(T,\tau_{\nu-1})
     {\cal C}(\tau_{\nu-1},\tau_{\nu-2})\cdots{\cal
C}(\tau_2,\tau_1)  {\cal C}(\tau_1,0)f
\left( = \Cdelta(T,0)f \right).
\end{equation}
  We also see from Remark 3.4 that there exists \eqref{3.8}
in $ \Sspace$. \par
   Let $0 \leq s \leq t \leq T$.  For a subdivision $\Delta$ of $[0,T]$ we 
can find $j$ and $l$
such that
$j \leq l, \tau_{j-1}  < s \leq \tau_j$ and $\tau_{l-1}  < t \leq \tau_l$, 
where we take
$j=1$ for $s = 0$.
Then we define
\begin{align} \label{9.3}
\Cdelta(t,s)f
   = & \lim_{\epsilon \rightarrow 0}{\cal C}(t,\tau_{l-1})\chi(\epsilon
\cdot){\cal C}(\tau_{l-1},\tau_{l-2})\chi(\epsilon \cdot) \notag \\
& \cdots  \chi(\epsilon
\cdot){\cal C}(\tau_{j+1},\tau_j)
\chi(\epsilon \cdot) {\cal C}(\tau_j,s)\chi(\epsilon \cdot)f
\end{align}
for $f \in B^a$ as was stated in Remark 3.2. Then we have
\begin{equation*}
\Cdelta(t,s)f
   =  {\cal C}(t,\tau_{l-1}){\cal C}(\tau_{l-1},\tau_{l-2})
  \cdots  {\cal C}(\tau_{j+1},\tau_j)
  {\cal C}(\tau_j,s)f
\end{equation*}
as in the proof of \eqref{9.2}.
Consequently, from
\eqref{7.5} we have
\begin{equation} \label{9.4}
\Vert \Cdelta(t,s)f\Vert_{B^a} \leq \mathrm{e}^{K_a(t - s)}\Vert f 
\Vert_{B^a}\quad (a =
0,1,2,\dotsc)
\end{equation}
for $|\Delta| \leq \rho^*$ under the assumptions of Theorem 3.1.
\begin{pro}
Let $|\Delta| \leq \rho^*$.  Then,
under the assumptions of Theorem 3.1 we can find an integer $M \geq 2$ such 
that
\begin{equation}  \label{9.5}
   \Vert\Cdelta(t,s)f - \Cdelta(t',s')f \Vert_{B^a}
\leq C_a(|t - t'| + |s - s'|)\Vert f\Vert_{B^{a+M}}
\end{equation}
for  $\, 0 \leq s \leq t \leq T, 0 \leq s' \leq t' \leq T$ and $a =
0,1,2,\dotsc$.
\end{pro}
\begin{proof}
Let $R(t,s)$ and $R'(t,s)$ be the operators defined by \eqref{8.1} and 
\eqref{8.2},
respectively.  We determine $M$ in Proposition 9.1 by $\max\ (M, M',2)$ for 
$M$ and $M'$ in
Theorem 8.1.  We can easily see
\begin{equation}  \label{9.6}
       i\hbar\bigl(\Cts f - {\cal C}(t',s)f\bigr) = \int_{t'}^t
\bigl(H(\theta){\cal C}(\theta,s)f + \sqrt{\theta - s}R(\theta,s)f\bigr)d\theta
\end{equation}
from \eqref{8.1} for $s \leq t' \leq t \leq T$.   Let $\tau_j < t
\leq \tau_{j+1}$ and $\tau_k < t' \leq \tau_{k+1}$.  So  $j \geq k$ holds. 
Using the
equation just after
\eqref{9.3} and \eqref{9.6}, we get
\begin{align}  \label{9.7}
      & \quad i\hbar\bigl(\Cdelta (t,s)f - \Cdelta (t',s)f\bigr)
        \notag\\
       & =\int_{t'}^{t}H(\theta)\Cdelta (\theta,s)f d\theta +
\int_{\tau_j}^{t}\sqrt{\theta - \tau_j}R(\theta,\tau_j)d\theta\,\Cdelta 
(\tau_j,s)f
\notag\\
& \quad + \sum_{l=1}^{j-k-1}\int_{\tau_{j-l}}^{\tau_{j-l+1}}\sqrt{\theta - 
\tau_{j-l}}
R(\theta,\tau_{j-l})d\theta\,\Cdelta (\tau_{j-l},s) f
\notag\\
& \quad + \int_{t'}^{\tau_{k+1}}\sqrt{\theta - 
\tau_k}R(\theta,\tau_k)d\theta\,\Cdelta
(\tau_k,s)f.
\end{align}
See the proof of Theorem 4.2 in \cite{Ichinose 2003} for further details. \par
    As in the proof of \eqref{7.14} we see
\begin{equation} \label{9.8}
   \Vert H(t)f\Vert_{B^a}
\leq \text{Const.}\Vert f\Vert_{B^{a+M}}
\end{equation}
from \eqref{3.10} because of $M \geq 2$.  We also see
\begin{equation} \label{9.9}
   \Vert R(t,s)f\Vert_{B^a}
\leq \text{Const.}\Vert f\Vert_{B^{a+M}}
\end{equation}
from Proposition 7.4 for $0 \leq t - s \leq \rho^*$.  Consequently,
\eqref{9.4} and
\eqref{9.7} show
\begin{align*}
      & \quad \hbar \Vert\Cdelta (t,s)f - \Cdelta (t',s)f\Vert_{B^a}
        \notag\\
& \leq \text{Const.}
e^{K_{a+M}T}(1 + \sqrt{\rho^*})|t - t'|\Vert f\Vert_{B^{a+M}}
\end{align*}
for $0 \leq s \leq t' \leq t \leq T$.  The inequality above holds
for $0 \leq s \leq t', t \leq T$.  In the same way we get
\begin{align*}
      & \quad \hbar \Vert\Cdelta (t,s)f - \Cdelta (t,s')f\Vert_{B^a}
        \notag\\
& \leq \text{Const.}
e^{K_{a+M}T}(1 + \sqrt{\rho^*})|s - s'|\Vert f\Vert_{B^{a+M}}
\end{align*}
for $0 \leq s, s'  \leq t \leq T$. Hence, we can complete the
proof of Proposition 9.1.
\end{proof} \par
    Let $M \geq 2$ be the integer determined in Proposition 9.1.  Let
$\{\Delta_j\}_{j=1}^{\infty}$ be a family of subdivisions of $[0,T]$ such that
$|\Delta_j| \leq \rho^*$ and $\lim_{j\rightarrow \infty}|\Delta_j| = 0$. 
Take an arbitrary
$f \in B^{a + 2M}\ (a = 0,1,2,\dotsc)$.  Then we see from \eqref{9.4} and 
\eqref{9.5} that
$\{{\cal C}_{\Delta_{j}}(t,s)f\}_{j=1}^{\infty}$
  is uniformly bounded as a family of
$B^{a+2M}$-valued continuous functions and equicontinuous as a family of
$B^{a+M}$-valued functions in $0 \leq s \leq t \leq T$, respectively.  It 
follows from the
Rellich criterion (cf. Theorem XIII. 65 in \cite{Reed-Simon}) that the 
embedding map from
$B^M$ into
$L^2$ is compact.   So is the embedding map from
$B^{a + 2M}$ into $B^{a + M}$ from \eqref{7.12}, \eqref{7.13} and Lemma 2.5 in
\cite{Ichinose 1995} with $a = b = 1$. Consequently, from Ascoli-Arzel\`{a} 
theorem we can
find a subsequence $\{\Delta_{j_k}\}_{k=1}^{\infty}$, which may depend on 
$f$, such that
${\cal C}_{\Delta_{j_k}}(t,s)f$    converges in $B^{a+M}$ uniformly
in $0 \leq s \leq t \leq T$  as $k \rightarrow \infty$.  Since ${\cal 
C}_{\Delta_{j}}(s,s)f
= f$ follows from Lemma 6.1, so \eqref{9.7} - \eqref{9.9} show that 
$\lim_{k \rightarrow
\infty}{\cal C}_{\Delta_{j_k}}(t,s)f = U(t,s)f$, where $U(t,s)f$ is 
$B^{a+M}$-valued
continuous and $B^{a}$-valued continuously differentiable function in $0 
\leq s \leq t \leq
T$ satisfying
\eqref{3.9} with $u(s) = f$.  Noting $M \geq 2$, we can easily see from the 
energy inequality
that the solutions to
\eqref{3.9} are unique in the class of $B^{a+M}$-valued continuous and 
$B^{a}$-valued
continuously differentiable functions.  Hence, we can see that
$\Cdelta (t,s)f$ converges to $U(t,s)f$ in $B^{a+M}$ uniformly in
$0
\leq s
\leq t
\leq T$ as $|\Delta| \rightarrow 0$. \par
   Take an arbitrary $f \in B^a$.  Let $\Delta$ and $\Delta'$ be
  subdivisions such that $|\Delta|\leq \rho^*$ and $|\Delta'| \leq \rho^*$. For
any $\epsilon > 0$ we can take a $g \in B^{a + 2M}$ such that $\Vert g - 
f\Vert_{B^a}
< \epsilon$. Then from \eqref{9.4} we have
\begin{align*}
   & \Vert {\cal C}_{\Delta}(t,s)f - {\cal C}_{\Delta'}(t,s)f\Vert_{B^a}
  \leq \Vert {\cal C}_{\Delta}(t,s)g - {\cal C}_{\Delta'}(t,s)g\Vert_{B^a} \\
&\quad  +
\Vert {\cal C}_{\Delta}(t,s)(f - g) \Vert_{B^a} + \Vert {\cal 
C}_{\Delta'}(t,s)(f - g)
\Vert_{B^a}
      \\
& \leq \Vert {\cal C}_{\Delta}(t,s)g - {\cal 
C}_{\Delta'}(t,s)g\Vert_{B^{a+M}} +
2\mathrm{e}^{K_aT}\epsilon.
\end{align*}
So,
\begin{equation} \label{9.10}
     \overline{\lim}_{|\Delta|,|\Delta'|\rightarrow 0}\, \max_{0\leq s \leq 
t \leq T}
\Vert {\cal C}_{\Delta}(t,s)f - {\cal C}_{\Delta'}(t,s)f\Vert_{B^a} \leq
2\mathrm{e}^{K_aT}\epsilon.
\end{equation}
Hence, we can see that  ${\cal C}_{\Delta}(t,s)f$ converges in $B^a$
uniformly in $0\leq s \leq t \leq T$ as $|\Delta|\rightarrow 0$.  We write 
this limit as
$W(t,s)f$.  \par
     Let $f \in B^a$.
  Take
$f_j \in B^{a+M}$ such that $\lim_{j\rightarrow \infty}f_j = f$ in $B^a$. 
 From \eqref{9.7} we
have
\[
i\hbar \left(W(t,s)f_j - f_j\right) = \int_s^t H(\theta)W(\theta,s)f_jd\theta.
\]
The inequality $\left\Vert W(t,s) f \right\Vert_{B^a}
    \leq \mathrm{e}^{K_a(t - s)}\Vert f \Vert_{B^a}$ holds from \eqref{9.4}. 
So, from Lemma
2.5 in \cite{Ichinose 1995} with $a = b= 1$
  we can see
\[
i\hbar \left(W(t,s)f - f\right) = \int_s^t H(\theta)W(\theta,s)fd\theta
\]
in $B^{a-2}$ and that $W(t,s)f$ is $B^a$-valued continuous and 
$B^{a-2}$-valued continuously
differentiable in $0 \leq s \leq t \leq T.$
  Hence
$\lim_{|\Delta| \rightarrow 0}{\cal C}_{\Delta}(t,s)f \bigl( = W(t,s)f 
\bigr)$ satisfies
\eqref{3.9} with
$u(s) = f$. Thus, we could complete the proof of Theorem 3.1. \par
   We shall consider the proof of Theorem 3.2.  Let $\qts (\theta)$ and
$\overrightarrow{a}^{t,s}_{\Lambda' X,Y}(\theta)$ be the paths defined by 
\eqref{6.1} and
\eqref{6.2}, respectively. For $\xivec_k \in R^2 \ (k \in \Lambda'_1)$ we 
define the path by
\begin{equation} \label{9.11}
\phi_{\xivec_k}^{t,s}(\theta) := \xivec_k + \frac{4\pi 
\rho_k(\qts(\theta))}{|k|^2} \in R^2, \
s
\leq \theta \leq t
     \end{equation}
as in \eqref{3.12}.  The path $\phi_{\xivec_k}^{t,s}(\theta) \in R^2\ (k 
\in \Lambda_1)$ is
defined by \eqref{2.13}.  So from \eqref{2.16}  and (2.17) we have
\[
\xi_{-k}^{(1)} = \xi_{k}^{(1)},\quad  \xi_{-k}^{(2)} = - \xi_{k}^{(2)}.
\]
 For $k \in \Lambda_1$ we can easily
see
\begin{align} \label{9.12}
& |k|^2 \left| \phi_{\xivec_k}^{t,s}(\theta)\right|^2 - 8\pi \rho_k 
(\qts(\theta))\cdot
  \phi_{\xivec_k}^{t,s}(\theta) \notag \\
& = |k|^2 \left|  \phi_{\xivec_k}^{t,s} - \frac{4\pi \rho_k}{|k|^2} \right|^2 -
\frac{16\pi^2 }{|k|^2}|\rho_k|^2 \notag \\
& = |k|^2 \left|\xivec_k\right|^2 - \frac{16\pi^2 }{|k|^2}\left|\rho_k 
(\qts(\theta))\right|^2.
\end{align}
So, the classical action for $\tilde{L}$ defined by \eqref{3.11} is written as
\begin{align} \label{9.13}
& S\left(t,s;\qts,a_{\Lambda 
X,Y}^{t,s},\left\{\phi_{\xivec_k}^{t,s}\right\}_{k \in
\Lambda_1}\right)
\notag \\
& = S_c(t,s;\qts,a_{\Lambda X,Y}^{t,s}) + \frac{(t - s)}{4\pi|V|}
\sum_{k \in \Lambda'_1} |k|^2 |\xivec_k|^2
\end{align}
from \eqref{2.21} and \eqref{3.3}. \par
    Let $\chi_1 \in \Cspace (R^{2N_1})$ with compact support such that 
$\chi_1(0) = 1$.  Let
$\epsilon > 0$ and $ \xi := \left\{\xivec_k\right\}_{k \in
\Lambda'_1} \in R^{2N_1}$.  For $f \in \Sspace(R^{3n+4N})$ we define 
$G_{\epsilon}(t,s)f\ (0
\leq s
\leq t
\leq T)$ by
\begin{equation}  \label{9.14}
        \begin{cases}
            \begin{split}
              & \left(\prod_{j=1}^n\sqrt{\frac{m_j}{2\pi i\hbar(t - s)}}^{\ 
3}\right)
               \sqrt{\frac{1}{2\pi i\hbar|V|(t - s)}}^{\ 4N}
\left(\prod_{k \in \Lambda'_1}  \frac{|k|^2(t -s)}{4i\pi^2\hbar|V|}\right)\\
& \quad\  \times \int \cdots
\int
\mathrm{e}^{i\hbar^{-1}S} \chi_1(\epsilon \xi) f(y,Y)dydY\prod_{k \in 
\Lambda'_1} d\xivec_k ,
                      \end{split}
         & s < t ,
                  \\
\begin{split}
        & f ,\\
             \end{split}
            & s = t,
        \end{cases}
\end{equation}
where $S = S\left(t,s;\qts,a_{\Lambda 
X,Y}^{t,s},\left\{\phi_{\xivec_k}^{t,s}\right\}_{k \in
\Lambda_1}\right)$.
\begin{pro}
Let $f \in B^a(R^{3n+4N}) \ (a = 0,1,2,\dotsc)$.  Then, under the assumptions 
of Theorem 3.1
we have
\begin{equation} \label{9.15}
\lim_{\epsilon \rightarrow 0}G_{\epsilon}(t,s)f = \Cts f
\end{equation}
in $B^a$ for $0 \leq t - s \leq \rho^*$.
\end{pro}
\begin{proof}
In the case of $t = s$ \eqref{9.15} is clear from \eqref{7.6}.  Let $0 < t 
- s \leq
\rho^*$ and $\Sspace(R^{3n+4N})$.  From
\eqref{9.13} we have
\begin{align*}
& G_{\epsilon}(t,s)f = \left(\prod_{j=1}^n\sqrt{\frac{m_j}{2\pi i\hbar(t - 
s)}}^{\ 3}\right)
               \sqrt{\frac{1}{2\pi i\hbar|V|(t - s)}}^{\ 4N} \\
& \times \iint \left(\exp i\hbar^{-1}S_c(t,s;\qts,a_{\Lambda X,Y}^{t,s}) 
\right)
f(y,Y)dydY
\\
& \times \left(\prod_{k \in \Lambda'_1}  \frac{|k|^2(t 
-s)}{4i\pi^2\hbar|V|}\right)
\int \cdots \int \left(\exp \frac{i(t - s)}{4\pi\hbar |V| }
\sum_{k \in \Lambda'_1}|k|^2 |\xivec_k|^2 \right) \chi_1(\epsilon\xi)
\prod_{k \in \Lambda'_1} d\xivec_k.
\end{align*} \par
   Let $\etavec_k := (\eta_k^{(1)},\eta_k^{(2)}) \in R^2$ and $\eta :=
\left\{\etavec_k\right\}_{k
\in
\Lambda'_1}$. We know
\begin{equation} \label{9.16}
  \int_{-\infty}^{\infty} \mathrm{e}^{ia\theta^2}d\theta = \sqrt{ 
\frac{i\pi}{a}}
\end{equation}
for a constant $a > 0$.  So we write
\begin{equation} \label{9.17}
G_{\epsilon}(t,s)f = P_{\epsilon}(t,s)f,
\end{equation}
where
\begin{align} \label{9.18}
p_{\epsilon}(t,s) & = \left(\prod_{k \in \Lambda'_1}  \frac{|k|^2}{i\pi}\right)
\int \cdots \int \left(\exp
i\sum_{k \in \Lambda'_1}|k|^2 |\etavec_k|^2 \right) \notag \\
& \times \chi (\epsilon\sqrt{4\pi\hbar|V|/(t-s)}\eta)
\prod_{k \in \Lambda'_1} d\etavec_k.
\end{align} \par
   We see that
\begin{equation} \label{9.19}
\lim_{\epsilon \rightarrow 0}\, p_{\epsilon}(t,s) = 1
\end{equation}
pointwise. Letting $q_{\epsilon}(t,s) = p_{\epsilon}(t,s) - 1$, we have
\begin{equation*}
  P_{\epsilon}(t,s)f - \Cts f  = Q_{\epsilon}(t,s)f.
\end{equation*}
We consider
\begin{align*}
& \Vert G_{\epsilon}(t,s)f - \Cts f\Vert^2  = \Vert P_{\epsilon}(t,s)f - \Cts
f\Vert^2\\
&  = \Bigl( \bigl(P_{\epsilon}(t,s) - \Cts 
\bigr)^{\dag}\bigl(P_{\epsilon}(t,s) - \Cts\bigr)f
  , f\bigr) \\
& = \bigl(Q_{\epsilon}(t,s)^{\dag}Q_{\epsilon}(t,s)f,f\bigr).
\end{align*}
Hence, we obtain
\eqref{9.15} as in the proof of Theorem 7.3 in the present paper together 
with Lemma 2.2 in
\cite{Ichinose 1995}. See the proof of Lemma 4.1 in \cite{Ichinose 2000} 
for further details.
\end{proof} \par
    We can write \eqref{3.13} as
\begin{equation}  \label{9.20}
     \lim_{\epsilon\rightarrow 0}\, 
G_{\epsilon}(T,\tau_{\nu-1})\chi(\epsilon\cdot)
   G_{\epsilon}(\tau_{\nu-1},\tau_{\nu-2})\chi(\epsilon\cdot)\cdots
G_{\epsilon}(\tau_2,\tau_1)\chi(\epsilon\cdot)
G_{\epsilon}(\tau_1,0)\chi(\epsilon\cdot)f
\end{equation}
in the same way that \eqref{3.8} is written in the above of \eqref{9.2}. 
Integrating by parts
in
\eqref{9.18}, we see that $\sup_{0 < \epsilon \leq 1}|p_{\epsilon}(t,s)|$ 
is finite.  So the
same proof as for \eqref{7.5} shows
\[
\sup_{0 < \epsilon \leq 1}\Vert G_{\epsilon}(t,s)f\Vert_{B^a} \leq C_a 
\Vert f \Vert_{B^a},\ a
= 0,1,2,\dotsc
\]
with constants $C_a$ from \eqref{9.17}.
  Hence, using
\eqref{9.1}, we can prove Theorem 3.2 as in the proof of the convergence of 
\eqref{3.8} to
\eqref{9.2} together with
\eqref{9.15}.
\par
    Finally, we will prove Theorem 3.3.  As in the proof of \eqref{6.15} we get
\begin{align}  \label{9.21}
& \left(\int_{{\pmb q}_{{\pmb z},{\pmb y}}^{t,s}} - \int_{{\pmb q}_{{\pmb 
z},{\pmb
x}}^{t,s}}\right) \left\{\frac{1}{c}A_{\text{ex}}(t,x^{(j)})\cdot dx^{(j)}
  - \phi_{\text{ex}}(t,x^{(j)})dt\right\}
\notag\\
& = \frac{1}{c}(x^{(j)} - y^{(j)}) \cdot \int_0^1 
A_{\text{ex}}\left(s,x^{(j)} - \theta(x^{(j)}
- y^{(j)}) \right) d\theta \notag \\
& - (t - s)(x^{(j)} - y^{(j)}) \cdot \int_0^1 \int_0^1 
\sigma_1E_{\text{ex}}\left(\tau(\sigma),
\zeta^{(j)}(\sigma)\right) d\sigma_1 d\sigma_2
\notag \\
& - \frac{1}{c}\sum_{m=1}^3 (x^{(j)}_m - y^{(j)}_m)
\sum_{l=1}^3 (z^{(j)}_l - x^{(j)}_l)\int_0^1 \int_0^1 \sigma_1 B'_{ml}
\left(\tau(\sigma),
\zeta^{(j)}(\sigma)\right) d\sigma_1 d\sigma_2,
\end{align}
where $\left(B'_{23}(t,x),B'_{31}(t,x),B'_{12}(t,x) \right) =
B_{\text{ex}}(t,x), B'_{lm} = - B'_{ml}$, and $\tau(\sigma)$ and 
$\zeta^{(j)}(\sigma)$
were defined by \eqref{6.11}.  See the proof of Proposition 3.3 in 
\cite{Ichinose 1997} for
further details.  So, we get the equation \eqref{6.18} where the sum over 
$j = 1,2,\dotsc,n$
of \eqref{9.21} multiplied by $m_je_j/(t - s)$ is added to.  Hence, under 
the assumptions of
Theorem 3.3 we obtain the same assertion as in Theorem 3.1 in the same way 
that Theorem 3.1 is
proved.  As in the same way of the proof of Theorem 3.2 we also get the 
same assertion as in
Theorem 3.2 under the assumptions of Theorem 3.3.  Thus, we could complete 
the proof of the
main results.
\end{document}